\definecolor{darkgreen}{rgb}{0,0.5,0}
\crefname{theorem}{Theorem}{Theorems}
\Crefname{lemma}{Lemma}{Lemmas}
\Crefname{invariant}{Invariant}{Invariants}
\Crefname{claim}{Claim}{Claims}
\Crefname{observation}{Observation}{Observations}
\Crefname{algorithm}{Algorithm}{Algorithms}
\Crefname{figure}{Figure}{Figures}
\DeclareMathOperator*{\poly}{poly}
\newtheorem{theorem}{Theorem}
\newtheorem{lemma}[theorem]{Lemma}
\newtheorem{corollary}[theorem]{Corollary}
\newtheorem{definition}[theorem]{Definition}
\newtheorem{example}[theorem]{Example}
\newtheorem{observation}[theorem]{Observation}
\newtheorem{claim}[theorem]{Claim}
\newtheorem{remark}{Remark}
\newtheorem*{remark*}{Remark}
\newcommand{\LOCAL}{\ensuremath{\mathsf{LOCAL}}\xspace}
\newcommand{\CONGEST}{\ensuremath{\mathsf{CONGEST}}\xspace}
\newcommand{\E}{\mathbb{E}}
\newcommand{\eps}{\varepsilon}
\newcommand{\lovasz}{Lov\'{a}sz\xspace}
\newcommand{\Bad}{\textsf{Bad}}
\newcommand{\pres}{\mathsf{pre}}
\newcommand{\post}{\mathsf{post}}
\newcommand{\Lmark}{\mathsf{L}}
\newcommand{\Rmark}{\mathsf{R}}
\newcommand{\Bmark}{\mathsf{B}}
\newcommand{\VL}{V^\Lmark}
\newcommand{\VR}{V^\Rmark}
\newcommand{\DeltaL}{\Delta_\Lmark}
\newcommand{\DeltaR}{\Delta_\Rmark}
\newcounter{constant}
\newcommand{\cte}{c_{\arabic{constant}}}
\newcommand{\inccte}{\stepcounter{constant}\cte}
\newcommand{\Var}{\textsf{Var}}
\newcommand{\Event}{\ensuremath{\mathcal{X}}}
\newcommand{\event}{X}
\newcommand{\cL}{{\cal L}}
\newcommand{\cC}{{\cal C}}
\newcommand{\cB}{{\cal B}}
\newcommand{\frozen}{\ensuremath{\mathtt{frozen}}\xspace}
\DeclareMathOperator*{\Exp}{\mathbb{E}}
\DeclarePairedDelimiter{\card}{\lvert}{\rvert}
\DeclarePairedDelimiter{\parens}{\lparen}{\rparen}
\DeclarePairedDelimiter{\ceil}{\lceil}{\rceil}
\newcommand{\myemail}[1]{\,$\cdot$\, {\small #1}}
\newcommand{\myaff}[1]{\,$\cdot$\, {\small #1}\par\medskip}
\newenvironment{myabstract}
{\list{}{\listparindent 1.5em%
		\itemindent    \listparindent
		\leftmargin    1cm
		\rightmargin   1cm
		\parsep        0pt}%
	\item\relax}
{\endlist}
\newenvironment{mycover}
{\list{}{\listparindent 0pt
		\itemindent    \listparindent
		\leftmargin    1cm
		\rightmargin   1cm
		\parsep        0pt}%
	\raggedright
	\item\relax}
{\endlist}
\begin{document}
\begin{mycover}
	{\huge\bfseries\boldmath Fast Distributed Vertex Splitting \newline with Applications \par}
	\bigskip
	\bigskip
	\bigskip
	
	\textbf{Magn\'us M. Halld\'orsson}
	\myemail{mmh@ru.is}
	\myaff{Reykjavik University, Iceland}
	
	\textbf{Yannic Maus}
	\myemail{yannic.maus@ist.tugraz.at}
	\myaff{TU Graz, Austria}
	
	\textbf{Alexandre Nolin}
	\myemail{alexandren@ru.is}
	\myaff{Reykjavik University, Iceland}
\end{mycover}

\begin{myabstract}
	\noindent\textbf{Abstract.}
		We present $\poly\log\log n$-round randomized distributed algorithms to compute vertex splittings, a partition of the vertices of a graph into $k$ parts such that a node of degree $d(u)$ has $\approx d(u)/k$ neighbors in each part. Our techniques can be seen as the first progress towards general $\poly\log\log n$-round algorithms for the \lovasz Local Lemma.  
	
	As the main application of our result, we obtain a randomized $\poly\log\log n$-round \CONGEST algorithm for $(1+\eps)\Delta$-edge coloring $n$-node graphs of sufficiently large constant maximum degree $\Delta$, for any $\eps>0$. 
	Further, our results improve the computation of defective colorings and certain tight list coloring problems. 
	All the results improve the state-of-the-art round complexity exponentially, even in the \LOCAL model. 
\end{myabstract}

\thispagestyle{empty}
\setcounter{page}{0}
\newpage

\thispagestyle{empty}
\tableofcontents
\clearpage
\section{Introduction}
Consider the following fundamental load-balancing problem: Partition the vertices of an $n$-node degree-$\Delta$ graph into two parts so that each node has at most $(1+\eps)\Delta/2$ neighbors in each part, where $\eps > 0$ is an arbitrary given constant.
When $\Delta$ is large enough (say, superlogarithmic), such a \emph{2-splitting} is trivially achieved w.h.p.\ without communication. Can it be solved fast distributively for arbitrary $\Delta$?

The 2-splitting problem can be formulated as an instance of the \emph{\lovasz local lemma}~(LLL). Consider some \emph{``bad''} events over a probability space. The celebrated \lovasz local lemma
states that if the events satisfy certain limited dependencies, then there is a positive probability that none of them happens \cite{LLL73}. In the $2$-splitting problem, the probability space is spanned by each node picking a part uniformly at random and there is a bad event for each node that occurs when too many of its neighbors are in any one of the parts. 
In the constructive version of the LLL, the objective is to also compute an assignment avoiding all bad events, and using known distributed LLL algorithms, it can be solved in $O(\log n)$ distributed rounds \cite{MoserTardos10,CPS14}. 
For small $\Delta$, there is a faster $O(\Delta^2 + \poly \log\log n)$-round algorithm \cite{FG17}, 
but it does not improve the case of arbitrary $\Delta$. 
This leaves a major open problem: Can we close the gap between the $O(\log n)$ upper bound and $\Omega(\log_{\Delta}\log n)$ lower bound \cite{BGKMU19}?
Clarifying this for 2-splitting would be the first step towards resolving the complexity of general distributed LLLs.

Our central technical result is to answer this question affirmatively, even in the bandwidth-constrained \CONGEST model, by giving a $\poly\log\log n$-round algorithms for 2-splitting and various other vertex- and edge splitting problems.
Note that these are also exponential improvements for the \LOCAL model.
Such splitting problems are pervasive in distributed graph algorithmics \cite{SLOCAL17,GS17,F17,FGK17,GHKMSU17,GHK17,HarrisEdge19,BGKMU19}. They can be viewed as questions of rounding and discrepancy, and they are frequently the major building block in solving various classic problems when using a divide-and-conquer approach.


We illustrate the reach of the techniques by giving much faster algorithms for two classic coloring problems.

\begin{restatable}[Edge coloring]{theorem}{thmEdgeColoring}
	\label{thm:edgeColoring}
	For any constant $\eps>0$, 
	there is a $\poly\log\log n$-round randomized algorithm to compute a $(1+\eps)\Delta$-edge coloring on any graph with maximum degree $\Delta\geq \Delta_0$ where $\Delta_0$ is a sufficiently large constant.
\end{restatable}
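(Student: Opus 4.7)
The plan is to reduce $(1+\eps)\Delta$-edge coloring to iterated edge splitting. Starting from $G$ with maximum degree $\Delta$, I partition the edges of $G$ into $k$ subgraphs $G_1,\dots,G_k$, each of maximum degree at most $(1+\eps_1)\Delta/k$, recursively edge-color each $G_i$ with a disjoint palette, and take the union. If $G_i$ is colored with $(1+\eps_2)(1+\eps_1)\Delta/k$ colors, the overall color budget is $(1+\eps_1)(1+\eps_2)\Delta$; across $L$ recursive levels with per-level slack $\eps'_i$, the total color count is $\prod_{i=1}^L(1+\eps'_i)\cdot\Delta$, so setting $\eps'_i=\Theta(\eps/L)$ yields a $(1+\eps)\Delta$-coloring.

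The edge-splitting primitive should follow from the paper's splitting machinery, either as a direct variant of the stated vertex-splitting theorem or by applying vertex splitting to the line graph of $G$ (whose maximum degree is at most $2\Delta-2$ and which can be simulated in \CONGEST, since each line graph vertex corresponds to an edge whose endpoints already communicate). To keep the recursion shallow, I choose $k=\Theta(\sqrt{\Delta})$ at each level so that the current maximum degree shrinks from $\Delta$ to $O(\sqrt{\Delta})$, reaching a sufficiently large constant $\Delta_0$ after $L=O(\log\log\Delta)=O(\log\log n)$ levels. Each level then runs in $\poly\log\log n$ rounds by the main splitting theorem, so the full procedure takes $\poly\log\log n\cdot O(\log\log n)=\poly\log\log n$ rounds. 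At the base case, I would invoke an existing distributed edge-coloring algorithm on each of the constant-degree leaf graphs, for instance the $O(\Delta^2+\poly\log\log n)$-round LLL-based routine of \cite{FG17} cited in the introduction.

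The main obstacle is verifying that the splitting primitive can be instantiated as an edge splitting with vertex-degree balance in \CONGEST at each recursion level, with parameters that may vary across levels (the current degree and the slack $\eps'_i=\Theta(\eps/\log\log n)$) without inflating the round complexity beyond $\poly\log\log n$. A related subtlety is that the concentration guarantees must survive $L$ multiplicative compoundings: the blowup $\prod(1+\eps'_i)$ must remain below $1+\eps$, and, crucially, each recursive call must work with the actual maximum degree of its input subgraph rather than the original $\Delta$, so splitting and base case must adapt to shrinking degree. Finally, one needs $\Delta_0$ large enough that the base-case algorithm on each leaf uses at most $(1+\eps'_L)\Delta_0$ colors, which is precisely the reason for the constant lower bound on $\Delta$ in the theorem statement.
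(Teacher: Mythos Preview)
Your overall strategy---reduce the maximum degree via edge splitting and then invoke a known edge-coloring algorithm on the low-degree pieces with disjoint palettes---is exactly what the paper does. The paper also realizes edge splitting through the bipartite version of the vertex-splitting theorem (with $\VR=E(G)$ and $\VL=V(G)$), which is essentially your line-graph idea. Where you diverge is in the recursion structure: the paper uses only \emph{two} levels rather than $O(\log\log\Delta)$. First, when $\Delta>\poly\log n$, a single zero-round random edge split into $k=\Theta(\eps^2\Delta/\log n)$ parts brings the degree to $\Delta'=\poly\log n$ (Chernoff gives this directly; no LLL needed). Second, one application of \Cref{thm:bipartiteSplitting} with $k'=\Theta(\eps^4\Delta'/\poly\log\log n)$ brings each part to $\Delta''=\poly\log\log n$. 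The base case is then run at degree $\poly\log\log n$, not at a constant: the $O(d^2+\poly\log\log n)$ algorithm of \cite{FG17,CHLPU19} already runs in $\poly\log\log n$ rounds once $d=\poly\Delta''=\poly\log\log n$.

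Your deeper recursion introduces two genuine difficulties the paper sidesteps. First, the main splitting theorem (\Cref{thm:vertexSplitting}/\Cref{thm:bipartiteSplitting}) carries the hypothesis $\Delta\le\poly\log n$, so it cannot be invoked at the top levels when $\Delta$ is large; you must handle that regime separately (the zero-round step), which you do not mention. Second, with $L=\Theta(\log\log n)$ levels and per-level slack $\eps'=\Theta(\eps/\log\log n)$, the hypothesis $k\le c\,(\eps')^4\Delta/\ln\Delta$ forces $\sqrt{\Delta}\le c\,\eps^4\Delta/\bigl((\log\log n)^4\ln\Delta\bigr)$, so the recursion stalls once the current degree drops to roughly $\poly\log\log n$---you cannot reach a constant $\Delta_0$ this way. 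Both issues disappear in the paper's two-step scheme, where each step uses a constant slack $\eps/6$ and there is no compounding over $\log\log n$ factors.
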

Notice that as a function of $n$ alone, previous methods use at least $\Omega(\log n)$ time, even in the \LOCAL model. 
The problem has a $\Omega(\log_{\Delta}\log n)$ lower bound \cite{CHLPU19}. Previously, 
$\poly\log\log n$-round algorithms were only known for $2\Delta-1$-edge coloring \cite{EPS15,representativesets}, while algorithms using any smaller number of colors were $O(\log_\Delta n + \poly \log \log n)$-round \cite{DGP98,PanconesiS97,CPS14,EPS15,CHLPU19}, even in the \LOCAL model. 
Tackling this problem in \CONGEST is non-trivial as it depends on LLL, which only has efficient known \CONGEST solutions for the constant-degree case \cite{MU21}.

\smallskip

In the second application, the \emph{$(L,T)$-list coloring} problem, each node of a graph is given a list of at least $L$ colors such that any color in its list appears in at most $T$ neighbors' lists. We ask for a valid node coloring where each node receives a color from its list, with the ratio $L/T$ as small as possible. Observe that the degree of a node can be much larger than its list of colors, and thus greedy approaches are insufficient, even centrally.

\begin{restatable}[List coloring]{theorem}{thmListColoring}
	\label{thm:ListColoring}
	There is a $\poly \log\log n$-round randomized \LOCAL  algorithm for the list coloring problem, for any $T$ and $L$ with $L \ge (1+\delta)T$, for any $\delta > 0$ and any $\Delta \ge \Delta_0$, for some absolute constant $\Delta_0$.
\end{restatable}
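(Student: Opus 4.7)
The plan is to solve $(L,T)$-list coloring by recursively reducing to smaller subinstances via coordinated palette- and vertex-splitting, invoking the paper's main vertex-splitting result as the core primitive, and terminating in a base case that is handled by an LLL-shattering argument followed by deterministic network decomposition on small components.

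At the heart of each recursive step, I would partition the current color palette $C$ into two balanced halves $C_1, C_2$ and simultaneously partition the vertex set into $V_1, V_2$, with the intent that vertices in $V_i$ will only use colors from $C_i$. The palette split should be chosen so that each list $L(v)$ is divided into roughly equal pieces of size $\approx L/2$; a random color partition achieves this in expectation, and the paper's splitting primitive, applied to the bipartite list-incidence structure, can make it concentrated. The vertex split should be chosen so that for every color $c \in C_i$, the number of vertices in $V_i$ whose list contains $c$ is at most $(1+o(1))T/2$; this is a hypergraph-splitting task whose hyperedges are the sets $N(c) = \{v : c \in L(v)\}$ of size $\leq T$, which I would reduce to the graph-splitting primitive of the paper on an appropriate auxiliary graph. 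The result is two independent subinstances with parameters $(L', T') \approx (L/2, T/2)$ whose ratio still satisfies $L'/T' \geq 1 + \delta - o(1)$.

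I would then recurse. Using $k$-way splits with suitably chosen fanout (or iterated halvings with a careful $\delta$-budget), one can drive $L$ down to $O(\poly\log n)$ within $O(\log\log n)$ levels, each costing $\poly\log\log n$ rounds by the main splitting theorem, for a total of $\poly\log\log n$ rounds. At that scale, one round of uniform palette sampling combined with LLL-based analysis colors most vertices properly and shatters the uncolored subgraph into components of size $\poly\log n$ with high probability, since $L \geq (1+\delta)T$ provides a constant slack. On each such component, deterministic list-coloring via network decomposition runs in time polylogarithmic in the component size, i.e., $\poly\log\log n$ in \LOCAL.

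The main obstacle will be executing each splitting step so that the per-vertex list balance and the per-color appearance balance hold simultaneously, while the slack $\delta$ shrinks only negligibly across the $O(\log\log n)$ recursion levels; this requires deploying the vertex-splitting machinery with sharpened balance parameters and, most delicately, reducing the per-color hypergraph splitting to the graph splitting primitive provided by the paper, likely via an auxiliary list-incidence graph whose degree structure encodes both the list-size and the appearance constraints.
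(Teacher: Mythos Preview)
Your reduction phase---shrinking $L$ and $T$ together via the paper's bipartite splitting primitive while preserving the ratio $L/T\ge 1+\delta-o(1)$---is essentially what the paper does (its Lemma on list-color sparsification), though the paper only sparsifies the palette and does not also split the vertex set; your extra vertex split is not wrong, just unnecessary, since after pruning lists one may drop edges between vertices with disjoint lists and the effective degree still falls.

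The genuine gap is your base case. With ratio only $L/T\ge 1+\delta$, a single round of uniform sampling does \emph{not} shatter: a vertex that picks a random color $c$ from its list has up to $T$ neighbors holding $c$, each choosing $c$ with probability $\ge 1/L$, so it remains uncolored with probability up to $T/L\approx 1/(1+\delta)$, a constant rather than $\Delta^{-\Omega(1)}$. Hence the shattering lemma does not apply. Nor can you invoke a fast distributed LLL directly: in the natural LLL formulation one has $p\le 1/L^2$ and dependency degree $d=\Theta(LT)$, so $pd=\Theta(T/L)$ is only bounded below $1$ by a constant factor---far from the polynomial criterion $p\,d^{c}\le 1$ that the $\poly(\Delta)+\poly\log\log n$ LLL of Fischer--Ghaffari requires. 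Stopping at $L=\poly\log n$ is also too early: the base-case LLLs have dependency degree $\poly(LT)$, so any $\poly(\Delta)$-type algorithm would already cost $\poly\log n$ rounds.

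What the paper does instead is: (i) push the sparsification one level further so that $L,T\le\poly\log\log n$ (hence $\Delta\le LT\le\poly\log\log n$), and then (ii) \emph{amplify} the ratio $L/T$ from $1+\delta$ to an arbitrarily large constant $C$ using the Reed--Sudakov nibble. Each nibble step is itself an LLL with error probability $g^{-\ln g}$ and dependency degree $g^{8}$ (where $g=\Theta(T)$), i.e., a strongly polynomial criterion, so it can be solved in $\poly(\Delta)+\poly\log\log n=\poly\log\log n$ rounds; $O(\delta^{-1}\log g)$ such steps suffice to reach ratio $C$ while lists shrink only by a constant factor. Once $L/T\ge C$, the known $\poly(\Delta,\log\log n)$ algorithm finishes. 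The amplification step is the missing idea in your plan; without it, no currently known technique closes the base case at ratio $1+\delta$.
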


Previous algorithms either used $O(\log n)$-rounds \cite{CPS14} or required $L/T\geq C_0$ for a (large) constant $C_0$ \cite{FG17}. See \Cref{sec:listColoring} for more related work on list coloring.

\subsection{Contributions on Splitting problems}

The main ingredient for both of the above applications is our efficient method to split graphs into small degree subgraphs.
A  \emph{$k$-vertex splitting problem} with \emph{discrepancy} $z$ is a partition of the vertex set into $k$ parts $V_1,\ldots,V_k$ such that, for each $i\in[k]$, each node $v\in V$ has $d(v)/k\pm z$ neighbors in $V_i$.  
Intuitively, splitting a graph into $k$ parts with a discrepancy of $\eps \Delta/k$ is useful to solve various problems that are easier on low-degree graphs.
These problems must be resilient to imperfect splits, which is ensured in coloring problems by having a surplus of colors.

A Chernoff bound argument shows that such splittings are quite easy for high degrees ($\Delta \gg k \log n$).
We obtain the following theorem:

\begin{restatable}{theorem}{thmVertexSplitting} 
	\label{thm:vertexSplitting}
	There exists a universal constant $\inccte>0$ s.t.:
	For any $\eps>0$, maximum degree $\Delta\leq \poly\log n$, and 
	$k \leq \cte \cdot (\eps^{4} \Delta/\ln \Delta)$,
	there is a distributed randomized \LOCAL algorithm to compute a $k$-vertex splitting with discrepancy $\eps\Delta/k$ in  $O(1/\eps)+\poly \log\log n$ rounds.
\end{restatable}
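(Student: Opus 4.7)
The plan is to cast $k$-splitting as a Lov\'asz Local Lemma instance with polynomial slack and then solve that instance with a distributed LLL algorithm tailored to the regime $\Delta\le\poly\log n$. The LLL setup is the natural one: each vertex $v$ independently samples a uniform part $\pi(v)\in[k]$, and for each pair $(v,i)$ with $i\in[k]$ we let $B_{v,i}$ be the event that the number of neighbors of $v$ placed in part $i$ deviates from $d(v)/k$ by more than $\eps\Delta/k$. A standard Chernoff bound gives $p:=\Pr[B_{v,i}]\le 2\exp(-\Omega(\eps^2\Delta/k))$, and substituting the hypothesis $k\le \cte\,\eps^4\Delta/\ln\Delta$ yields $p\le \Delta^{-C/\eps^2}$ for a constant $C$ that can be made as large as desired by shrinking $\cte$. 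Each $B_{v,i}$ depends only on the random choices of neighbors of $v$, so in the LLL dependency graph it is adjacent to $B_{u,j}$ only for $u$ at distance at most $2$, giving degree $D\le k\Delta^2\le \poly(\Delta)$. Hence the polynomial LLL criterion $p\cdot D^c\le 1$ is comfortably satisfied for arbitrarily large constant~$c$.

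To solve this LLL in $\poly\log\log n$ rounds, I would rely on the shattering paradigm. After a single random trial $\pi$, the set $U$ of ``unhappy'' vertices -- those for which some $B_{v,i}$ is violated -- contains each vertex with probability only $kp\ll 1/\poly(\Delta)$, and membership in $U$ is determined by a constant-radius neighborhood. With $\Delta\le\poly\log n$ and polynomial slack, the standard shattering analysis yields that, with high probability, every connected component of $U$ in the $O(1)$-hop dependency graph has size $\poly\log n$. The residual LLL can then be solved deterministically on each such component via a Rozho\v{n}--Ghaffari network decomposition, whose complexity on an instance of size $s$ is $\poly\log s=\poly\log\log n$. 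The additive $O(1/\eps)$ term accounts for a constant-radius aggregation phase that supplies each vertex with its exact degree and the threshold $\eps\Delta/k$ used to define the bad events, together with the natural $1/\eps$-dependence that arises in setting up Chernoff concentration at precision~$\eps$.

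The main obstacle is avoiding any $\poly(\Delta)$ overhead in the LLL solve. A direct invocation of the Fischer--Ghaffari LLL algorithm would give $O(\Delta^2)+\poly\log\log n$, which is $\poly\log n$ in the present regime and thus too slow. Overcoming this requires explicitly exploiting the polynomial slack of the LLL instance to ensure that the shattered components have size $\poly\log n$ with a \emph{controlled} exponent, so that the post-shattering network-decomposition step on them runs in $\poly\log\log n$ rounds rather than $\poly\log n$. A further subtlety is that the bad-event probabilities depend on the vertex degree $d(v)$; the uniform Chernoff bound above absorbs this, but one must verify that the polynomial slack persists across the full range of degrees and that both the upper and lower deviation events can be controlled within the same framework.
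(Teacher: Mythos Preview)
Your proposal has a genuine gap at the step ``solve the residual LLL on each small component.'' After the single random trial, you mark the vertices whose bad events fired and propose to re-sample on the shattered components. But you never specify \emph{which variables} are re-sampled and \emph{what the residual events are}, and the natural choice fails. A vertex $v$'s event depends on its neighbors' choices, so to repair an unhappy $v$ you must retract and re-sample $N(v)$. Now consider a previously happy vertex $w$ that shares many neighbors with $v$: it may happen that all of $w$'s part-$1$ neighbors lie in $N(v)$ and are retracted, while $w$'s part-$2$ neighbors remain fixed. The residual event for $w$ then asks that the re-sampled variables land almost entirely in part~$1$, which has probability essentially zero; the residual instance is not an LLL with polynomial (or any) slack, and no post-shattering algorithm can save it. This is precisely the obstruction the paper spells out in its ``Challenges'' discussion, and it is not a runtime issue but a \emph{feasibility} issue.

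The paper's fix is the missing idea in your outline: before sampling, compute a $q$-divide with $q=\Theta(1/\eps)$ buckets so that every vertex has at most $O(\eps\Delta)$ neighbors per bucket, then sample bucket by bucket. Whenever a vertex detects imbalance while processing a bucket, it retracts only the neighbors in \emph{that} bucket and freezes its remaining unassigned neighbors (out to distance~$3$). The freezing guarantees each vertex suffers retraction in at most one bucket, so the total ``lost'' neighbors number at most $O(\eps\Delta)$, which is absorbed into the discrepancy budget; the residual problem on the frozen vertices is then a genuine LLL with polynomial criterion. This sequential bucket processing is also where the $O(1/\eps)$ term in the round complexity actually comes from, not from any aggregation or Chernoff-precision argument.
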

The $\poly\log\log n$ term in the runtime of \Cref{thm:vertexSplitting} stems from solving LLL instances of size $N=\poly\log n$ deterministically. Any improvement on such algorithms immediately carries over to our result. However, there is a lower bound of $\Omega(\log_\Delta\log n)$ rounds for randomized and $\Omega(\log_{\Delta}n)$ rounds for deterministic algorithms for the respective splitting problems (and hence also for the LLL problem) \cite{BGKMU19}. These lower bounds even hold for a weak variant of the vertex splitting problem, in which each node only needs to have one neighbor of each color class.

\subparagraph{Variants.} 
Our main applications require subtle variations of the splitting problem. To this end, we solve a more general problem, where we separate the two functions of each node: as a variable (which part is it assigned to) and as an event (whether its neighborhood is evenly split).
In the \emph{bipartite $k$-vertex splitting problem with discrepancy $z$}, we have a set $\VL$ of nodes for the events
and a set $\VR$ of nodes for the variables, with an edge between every dependent variable-edge pair. We wish to partition $\VR$ into $k$ parts such that each event vertex $u\in \VL$ has $d(u)/k \pm z$ neighbors in each part.

\begin{restatable}{theorem}{thmbipartiteVertexSplitting}
	\label{thm:bipartiteSplitting}
	There exists a universal constant $\inccte>0$ s.t.:
	For any $\eps>0$, maximum degree $\Delta\leq \poly\log n$ and 
	$k\leq \cte \cdot (\eps^{4}\DeltaL/\ln \Delta)$,
	there is a distributed randomized \LOCAL algorithm to compute a bipartite $k$-vertex splitting problem with discrepancy  $\eps\DeltaL/k$ in  $O(1/\eps)+ \poly\log \log n$ rounds.
\end{restatable}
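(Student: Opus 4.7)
The plan is to follow the same template as in the proof of \Cref{thm:vertexSplitting}: cast the problem as an instance of the Lov\'asz Local Lemma, solve it by independent sampling combined with a fast distributed LLL routine, and exploit the $\poly\log n$ bound on $\Delta$ to keep the deterministic LLL subroutine at $\poly\log\log n$ rounds. The only conceptual change is that variables and events now sit on disjoint vertex sides $\VR$ and $\VL$, so one simply tracks the two sides separately in the probability and dependency computations; nothing in the algorithm or its analysis relied on a variable and its corresponding event being the same vertex.

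Concretely, each $v\in\VR$ independently samples a color $X_v\in[k]$ uniformly, and for each $u\in\VL$ and $i\in[k]$ we introduce the bad event $B_{u,i}$ that $u$ has fewer than $d(u)/k - \eps\DeltaL/k$ or more than $d(u)/k+\eps\DeltaL/k$ neighbors colored $i$. A Chernoff bound gives $\Pr[B_{u,i}]\leq 2\exp(-\Omega(\eps^2\DeltaL/k))$, which under the hypothesis $k\leq c\,\eps^4\DeltaL/\ln\Delta$ becomes $\Delta^{-\Omega(1/\eps^2)}$. Two events $B_{u,i}$ and $B_{u',j}$ can be mutually dependent only when $u,u'\in\VL$ share some common neighbor in $\VR$, so each event has at most $k\cdot\DeltaL\cdot\DeltaR\leq\poly(\Delta)$ dependent events. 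Choosing the constant in the bound on $k$ small enough therefore yields a polynomial LLL criterion $\Pr[B_{u,i}]\cdot D^{\Omega(1)}\leq 1$ with as much polynomial slack as desired.

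Algorithmically, we invoke the same distributed LLL machinery as in \Cref{thm:vertexSplitting}: a shattering phase leaves connected components of polylogarithmic size in the dependency graph, and a deterministic network-decomposition-based LLL routine finishes each component in $\poly\log\log n$ rounds, since the dependency graph has $\poly(\Delta)\leq\poly\log n$ maximum degree. The $O(1/\eps)$ additive term is inherited from the refinement step of \Cref{thm:vertexSplitting}, which is indifferent to whether the variable and event sides coincide. The one point requiring care, relative to the non-bipartite case, is that the dependency degree now picks up an extra factor $\DeltaR$; since $\DeltaR\leq\poly\log n$ this is comfortably absorbed by the exponential Chernoff decay $\Delta^{-\Omega(1/\eps^2)}$ once the constant $c$ in the upper bound on $k$ is chosen small enough, which is essentially the only sanity check one needs to perform.
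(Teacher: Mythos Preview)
Your high-level plan—rerun the algorithm of \Cref{thm:vertexSplitting} while keeping variables on $\VR$ and events on $\VL$—is exactly what the paper does, and the observation that the extra $\DeltaR$ factor in the dependency degree is absorbed by the polynomial slack is correct.

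However, the concrete description you spell out does \emph{not} match what \Cref{thm:vertexSplitting} actually does, and taken literally it would not yield a working algorithm. You describe the direct LLL: sample all of $\VR$ at once and declare $B_{u,i}$ bad when the full discrepancy $\eps\DeltaL/k$ is exceeded, then ``shatter''. But naive shattering of this LLL fails for the reason the paper discusses in its challenges section: when an event node $u$ is unsatisfied and you retract its neighbors' assignments, a neighboring event node $u'$ may lose an arbitrary fraction of its already-assigned neighbors, so the residual instance on the shattered components is no longer a feasible LLL. The machinery of \Cref{thm:vertexSplitting} is \textsf{FastShattering}: first compute a (bipartite) $q$-divide with $q=\Theta(1/\eps)$ buckets, then sample one bucket at a time with per-slot thresholds $z=\eps^2\DeltaL/(72k)$, freezing and retracting locally so that each event node suffers at most one retracted slot. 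That structure is what makes the post-shattering instances solvable and is where the $O(1/\eps)$ term and the $\eps^4$ in the bound on $k$ come from (your direct-LLL Chernoff calculation would only need $\eps^2$). If by ``the same distributed LLL machinery'' you mean running \textsf{FastShattering} verbatim on the bipartite instance—including the bipartite $q$-divide—then your proposal is correct and coincides with the paper's proof; but the LLL you actually wrote down and analyzed is not the one that algorithm solves.
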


We also devise \CONGEST versions of \Cref{thm:bipartiteSplitting,thm:vertexSplitting} that are essential to our edge coloring result in \CONGEST. The formal statement appears in \Cref{thm:CONGESTsplittingBoth} and requires $k$ to be a $O(\log^2\log n)$ factor smaller than in \Cref{thm:bipartiteSplitting,thm:vertexSplitting}.

In a \emph{$d$-defective $c$-coloring}, each of the $c$ color classes induces a graph of maximum degree $d$. Defective colorings are frequently used in divide-and-conquer approaches to other coloring problems \cite{Kuhn2009WeakColoring,BE09} and they have been studied in several works, e.g., \cite{Kuhn2009WeakColoring,BE09,KS18,GK20,M21}, usually stating variations of deterministic algorithms for computing $d$-defective coloring with $O((\Delta/d)^2)$ colors. 
As any vertex splitting is also a defective coloring, \Cref{thm:vertexSplitting}  implies a $\poly\log\log n$-round algorithm for  $(1+\eps)\Delta/k$-defective $k$-coloring. Previous algorithms for $(1+\eps)\Delta/k$ defective colorings either used $O(k^2)$ colors \cite{Kuhn2009WeakColoring,M21} or a logarithmic number of rounds through solving the respective LLL problem \cite{CPS14}.

\subsection{Challenges to Fast and Efficient Splitting}
Known approaches to splitting (or any of the other problems we consider) all build on the 
\emph{\lovasz Local Lemma} (LLL) for the low-degree case ($\Delta < \log n$).
This
hits a wall, since there are no 
strongly sublogarithmic time distributed LLL algorithms known, in spite of intensive efforts \cite{CP19,CHLPU19}.

There are two known approaches to distributed LLL algorithms. 
The breakthrough Moser-Tardos method \cite{MoserTardos10,CPS14}  is based on stochastic local search,
which appears to inherently require logarithmic rounds. The other approach is to use the \emph{shattering} technique, solving most of the problem quickly, leading to small remaining ``shattered'' subgraphs for which we can afford to apply slower techniques. This was introduced by Beck \cite{Beck91} in the centralized setting and Alon \cite{Alon91LLL} in the parallel setting.

Fischer and Ghaffari \cite{FG17} proposed a shattering-based distributed algorithm, modeled on an earlier sequential algorithm of Molloy and Reed \cite{MR98LLL}.
Using recent network decompositions \cite{RG19}, their method runs in $O(\Delta^2 + \poly \log\log n)$ time, which is fast for low-degree graphs ($\Delta \le \poly\log\log n$) but doesn't improve the general case.
To understand the issue, let us examine more closely the reasoning behind the method of \cite{FG17} in the context of 2-splitting.

A random assignment (of the nodes into the parts) is easily seen to satisfy the lion's share of the vertices, where ``satisfied'' means having discrepancy within the stated bound.
Each node is so likely to be satisfied that the remaining subgraph is indeed \emph{shattered}: the connected components induced by the set of unsatisfied nodes are of small size (assuming $\Delta \le \poly\log n$, which is the hard case). 
One natural approach is to \emph{undo} the assignment to the unsatisfied nodes, and then solve the problem separately on the unsatisfied nodes. However, this causes new problems: Nodes that were previously satisfied may become 
hard to satisfy.
For instance, suppose a node $v$ has neighbors $u_1, \ldots, u_t$ assigned to the first part and nodes $u_{t+1}, \ldots, u_{2t}$ assigned to the second part, for a perfect split. But it is now possible that all of $u_1, u_2, \ldots, u_t$ are retracted, having their assignment undone. 
Then, satisfying $v$ now requires assigning its neighbors back to the second part, which leaves little flexibility, and there may not be a valid solution. 

Fischer and Ghaffari \cite{FG17} (and \cite{MR98LLL}) fix this by sampling the random variables (i.e., which part each node is assigned to) only \emph{gradually}, i.e., at most one variable per event is sampled simultaneously.  Along with ``freezing'' (or deferring) certain nodes, this ensures that no vertex experiences too heavy a setback caused by retractions.
The gradual sampling is achieved by first computing a distance-2 coloring of the graph using $O(\Delta^2)$ colors, and then sampling only the nodes of a single color class at a time.
The downside is that this unavoidably requires time complexity at least $\Delta^2$. 

A different type of challenge appears when aiming for bandwidth-efficient algorithms. Even if one drastically improves upon the sketched $O(\Delta^2)$ ``pre-shattering'' procedure from \cite{FG17}, the deterministic procedure used in the ``post-shattering'' phase of their algorithm to complete the obtained partial solution makes heavy use of the unlimited bandwidth of the \LOCAL model. In fact, while both types of randomized distributed LLL methodologies \cite{MT20,CPS14} are themselves frugal in terms of bandwidth, 
known deterministic LLL algorithms are based on bandwidth-hungry generic derandomization results \cite{SLOCAL17,GHK17,RG19}.

\subsection{Our Methods in a Nutshell}

\subparagraph{Fast splitting.}
Our approach is to sample gradually -- like in \cite{MR98LLL,FG17} -- but faster.
We group the variables (representing the part assigned to a node) into \emph{buckets} and then sample the variables one bucket at a time. This is crucially done so that the impact of any given bucket on any given event is limited (namely, the number of neighbors of a node in any given bucket is upper bounded), so that we can recover from bad probabilistic assignments. Intuitively, a node might have to ``give up'' on all of its neighbors inside a bucket, i.e., it may be that their assignment is chosen adversarially.  As we can guarantee that each event has to give up on at most one bucket, it turns out to suffice to use a constant number of buckets to get a good split, or more generally $O(1/\eps)$ buckets to get $(1+\epsilon)$-approximate split.
Generating this bucket assignment is itself a splitting problem (that we term a \emph{$q$-divide}) requiring the use of LLL, but 
one with less moving parts and a much simpler analysis in \LOCAL. In the \CONGEST model, it still requires a novel post-shattering phase.

\subparagraph{Post-shattering in \CONGEST.} 
We solve the post-shattering phase as a sequence of successive relaxations, one for each disjoint group of clusters of the network decomposition. In effect, we solve a new LLL for each cluster group, with progressively stricter criteria (due to the accumulated discrepancy). Each relaxation is solved by a randomized, rather than a deterministic, algorithm. Namely, we run $O(\log n)$ independent instances of the Moser-Tardos process on the cluster, and since each succeeds with constant probability, we achieve at least one valid solution, w.h.p. 
This parallel instance technique was introduced by Ghaffari \cite{Ghaffari2019} for problems like $\Delta+1$-coloring, a simpler setting where the problem is always solvable on clusters processed \emph{later}, regardless of how the \emph{earlier} clusters are solved.

In our edge-coloring application, we use splitting to whittle down the degree parameter to a manageable size. Once degrees are down to $\poly\log\log n$, we can simulate the known algorithms from the \LOCAL model, including derandomization techniques, to solve them also in 
$\poly\log\log n$ \CONGEST rounds. 

For our list coloring results, we use our splitting procedure to first reduce the parameter $L$ and $T$ to $\poly\log\log n$  while keeping the initial ratio $L/T\geq (1+\delta)$ almost intact. Then, in additional color pruning steps we amplify the ratio until it is larger than a sufficiently large constant, at which point the problem can be solved efficiently via a known LLL-based method \cite{FG17}.

\subsection{Further Related Work}
The only known LLL algorithm in the \CONGEST model is by Maus and Uitto \cite{MU21} who provide a $\poly\log\log n$-round algorithm for LLLs with a polynomial criterion and constant dependency degree. In this work, we observe that their runtime remains $\poly\log\log n$, even if the dependency degree is as large as $\poly\log\log n$, see \Cref{lem:CONGESTpostshatteringSmallDegree} for details.

\subparagraph{Edge coloring.}
Dubhashi, Grable, and Panconesi \cite{DGP98} gave a distributed algorithm for $(1+\epsilon)\Delta$-edge coloring based on the R\"odl nibble method. Their results only apply to large values of $\Delta$.
Elkin, Pettie, and Su \cite{EPS15} extended the reach to arbitrary $\Delta$ by reduction to distributed LLL, and obtained improved complexity of $O(\log^* \Delta \cdot \ceil{ \log n / \Delta^{1-o(1)}} )$. 
Chang et al.\ \cite{CHLPU19} improved the complexity  to $O(\log_\Delta n + \log^{3+o(1)}\log n)$ for $\eps^{-1}\in O(1)$, and to $O(\log n)$ for $\eps^{-1} \in \tilde{O}(\sqrt{\Delta})$. 

There are clear tradeoffs between the number of colors and the time complexity. Computing $(2\Delta-1)$-edge coloring can be achieved in $\poly\log\log n$ rounds \cite{EPS15} (even in {\CONGEST} \cite{representativesets}), and even in $O(\log^* n)$ rounds for $\Delta \ge \log^2 n$ \cite{EPS15,representativesets}). 
Chang et al. \cite{CHLPU19}
showed via the round elimination method that 
computing a $(2\Delta-2)$-edge coloring requires $\Omega(\log_\Delta\log n)$ rounds. A
$\poly(\Delta,\log n)$-round algorithm is known for $\Delta+2$-coloring \cite{SuVu19} 
and very recently for $\Delta+1$-coloring \cite{Bernshteyn22}. Chang et al.\ \cite{CHLPU19} showed that an (possibly randomized) algorithm for $\Delta+1$-coloring that can start with any partial coloring requires $\Omega(\Delta \log n)$ rounds. 
They also showed that $(1+\log \Delta/\sqrt{\log \Delta})\Delta$-edge coloring can be found in $O(\log n)$ rounds.

\subparagraph{Splitting.}
Ghaffari and Su~\cite{GS17} gave three \LOCAL algorithms for splitting the edges of a graph into two parts such that each node has at most $(1+\eps)\Delta/2$ incident edges in each part, rounded up for their randomized result.

Their deterministic algorithms achieve complexity $O(\eps^{-1}\Delta^2\log^5 n)$ when $\Delta \geq c\cdot \eps^{-1}\log n$, and complexity $O(\eps^{-3}\log^7 n)$ when $\Delta \geq c \cdot \eps^{-2}\log n$, where $c$ is a suitable absolute constant.
Their randomized algorithm solves the problem for all $\Delta$ in $O(\eps^{-2}\Delta^2\log^4 n)$ rounds.
These results were later improved by \cite{GHKMSU17} to $O(\eps^{-1-o(1)}\log n)$ rounds for deterministic algorithms and   $O(\eps^{-1-o(1)}\log\log n)$ for randomized algorithms, with stronger guarantees on the split. 
However, it is unclear whether these edge-splitting algorithms can be extended to the \CONGEST model, as the algorithms communicate simultaneously over various long paths in the network.
The importance of splitting problems for the area was highlighted in \cite{SLOCAL17} and \cite{BGKMU19}. The latter gave various direct reductions of the maximal independent set problem and coloring problems to splitting problems. In addition, they studied several weak variants of the splitting problem, e.g., splitting into two parts such that each node needs to have at least one neighbors in each part. They show that even these have a $\Omega(\log_{\Delta}\log n)$ lower bound for randomized algorithms and $\Omega(\log_{\Delta}n)$ for deterministic algorithms. They also obtain a $\poly\log\log n$-round algorithm for the weak variant in the special case of regular graphs. 

\subsection{Outline}
In \Cref{sec:prelim}, we define the models, the setup for the \lovasz Local Lemma and introduce notation. 
\Cref{sec:qdivide} contains our algorithm for the $q$-divide that does not just serve as a warm-up for our more involved splitting algorithms, but is also used as a subroutine in the latter. Our main splitting algorithm is presented in \Cref{sec:splitting} for  \LOCAL  and in \Cref{sec:splittingCongest} for \CONGEST. In \Cref{sec:edgeColoring,sec:listColoring} we present our splitting applications, that is,  edge coloring and list coloring, respectively. 
In \Cref{sec:bipartiteSplitting}, we present a bipartite version of our splitting results (\Cref{thm:bipartiteSplitting}). In \Cref{thm:qdivide}, we present a local version of the $q$-divide, in which the discrepancy of a node is relative to its own degree that may be much smaller than $\Delta$. 

\section{Models, \lovasz Local Lemma, Shattering, and Notation}
\label{sec:prelim}
\subparagraph{\LOCAL and \CONGEST model \cite{linial92,peleg00}.} In the \LOCAL model, a communication network is abstracted as an $n$-node graph with maximum degree $\Delta$. Nodes communicate in \emph{synchronous rounds}, in each of which, a node can perform arbitrary local computations and send messages of arbitrary size to each of its neighbors. Initially, each node is unaware of the network topology and at the end of the computation a node has to output its own part of the solution, e.g., the colors of its incident edges in an edge coloring problem. The main complexity measure is the number of rounds until each node has produced an output. 
The \CONGEST model is identical, with the additional restriction that messages 
contain $O(\log n)$ bits. 

\subparagraph{Distributed \lovasz Local Lemma.} There are random variables \Var\  and (bad) events \Event\ at the nodes. Each event $\event$ depends on a subset $\Var(\event)$ of the random variables.
Let $p(\event)$ denote the probability that event $\event$ occurs. 
As usual, we want to find an assignment to the variables so that none of the events occur.

We form the dependency graph $H = (\Event, E_H)$ on the events, where two events $X_1, X_2 \in \Event$ are adjacent if they depend on a common variable, i.e., if $\Var(X_1) \cap \Var(X_2) \ne \emptyset$.

In a distributed setting, we assume that each variable and each event is associated with some node of the communication graph $G$. For most LLL algorithms it is essential, that the dependency graph can be simulated efficiently in the communication network. In the \LOCAL model, one round of communication in $H$ can be simulated in $t$ rounds if the variables $\Var(\event)$ upon which the event $\event$ depends are within distance $t$ in $G$ of the node where $\event$ resides. 
Let $d$ be the maximum degree of $H$, while $\Delta$ is the max degree of $G$.

Normally, an LLL is specified in terms of a function $f$, such that $p(\event)  f(d)\le 1$. 
The original specification of Lov\'asz has $f(d) = e \cdot d$ and ensures the existence of an assignment of the variables such that all bad events are avoided. In the study of distributed LLL algorithms, the functions $d^2$ \cite{CPS14}, $c \cdot d^8$ \cite{FG17} (both \emph{polynomial criteria}), and $2^{d}$ (\emph{exponential criterion}) \cite{brandt2016LLL,BMU19,BGR20} have appeared in the literature.

\begin{example}[$k$-vertex splitting with discrepancy $6\Delta/k$ is an LLL] Let each node in the graph pick one of $k$ parts, $V_1,\ldots,V_k$, uniformly at random. Introduce a \emph{bad event} $\event_v$ for each node $v\in V$ that holds if the number of neighbors of $v$ within any one part deviates from the expected value by more than $6\Delta/k$, i.e., if $|N(v)\cap  V_i|\neq d(v)/k\pm 6\Delta/k$, for some $i \in[k]$. Formally, there is one \emph{variable} for each vertex indicating the part that the vertex joins. As the event $\event_v$ shares variables only with the events in its $2$-hop neighborhood, the dependency degree of the LLL is $d \le \Delta^2$. A Chernoff bound shows that $\Pr(\event_v)\leq \exp(-\Omega(\Delta))=\exp(-\Omega(\sqrt{d}))$. Hence, 
	this splitting problem is an LLL with exponential criterion, if $\Delta$ is above an absolute constant.
\end{example}
The constant $6$ is chosen somewhat arbitrarily in order to make the Chernoff bound-based claim simple. In the following sections, we aim at splittings with discrepancy $(1+\eps)\Delta/k$.

\subparagraph{Shattering.}
Our algorithms make use of the influential shattering technique\footnote{The technique has been used extensively for efficient algorithms for various local distributed graph problems and in particular symmetry breaking problems such as sinkless orientation \cite{GS17}, $\Delta+1$-vertex coloring \cite{CLP20}, $\Delta$-coloring \cite{GHKM18}, Maximal Independent Set \cite{ghaffari16_MIS}, Maximal Matching \cite{BEPSv3}, $(2\Delta-1)$-Edge-Coloring \cite{BEPSv3}, and also for general LLL algorithms on small degree graphs \cite{FG17}.} in which one first uses a randomized algorithm to set the values of some of the variables such that \emph{unsolved} parts of the graph induce \emph{small connected components}, which are solved in the \emph{post-shattering phase}. The following lemma shows that the remaining components are indeed small. 

\begin{lemma}[Lemma 4.1 of \cite{CLP20}]
	Consider a randomized procedure that generates a subset $\Bad \subseteq  V$ of vertices. Suppose that for each $v \in  V$, we have $\Pr[v \in  \Bad] \leq \Delta^{-3c}$, and the events $v\in \Bad$ and $u\in \Bad$ are determined by non-overlapping sets of independent random variables for nodes with distance larger than $2c$. 
	Then, w.p.\ $1-n^{-\Omega(c')}$, each connected component in $G[\Bad]$ has size at most $(c' /c)\Delta^{2c} \log_\Delta  n$.
	\label{lem:shattering}
\end{lemma}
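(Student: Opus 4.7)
The plan is a standard shattering-style union bound: if $G[\Bad]$ has a component exceeding the target size, it must contain a well-structured ``witness'' set of nodes whose bad events are mutually independent, and I will union-bound the total probability over all possible witnesses.

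Set $t := (c'/c)\log_\Delta n$, so the target component size is of order $t\cdot \Delta^{2c}$. The first step is to show that any component $C$ of $G[\Bad]$ large enough contains a set $I\subseteq C$ of size $t$ that is simultaneously (i) an independent set in $G^{2c}$ (pairwise graph-distance strictly greater than $2c$), and (ii) connected in $G^{2c+1}$. I would construct $I$ greedily: start from any $v_0\in C$, and at each step extend $I$ by a vertex of $C$ at graph-distance exactly $2c+1$ from the current $I$. Such a vertex exists while $|I|<t$, because $|B_G(I,2c)\cap C|\le |I|\cdot O(\Delta^{2c})< |C|$, so some $z\in C$ lies outside $B_G(I,2c)$; walking from $z$ toward $I$ along a path inside $C$, the last vertex of the walk with distance $>2c$ from $I$ must itself be at distance exactly $2c+1$, because its successor has entered $B_G(I,2c)$. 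This extension preserves both (i) and (ii).

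Given such a witness $I$, property (i) combined with the independence hypothesis of the lemma implies that the events $\{v\in \Bad\}_{v\in I}$ are mutually independent, so $\Pr[I\subseteq \Bad]\le \Delta^{-3ct}$. For counting, property (ii) makes $I$ a connected vertex subset of size $t$ in $G^{2c+1}$, which has maximum degree $O(\Delta^{2c+1})$; the classical tree-counting bound gives at most $n\cdot (e\Delta^{2c+1})^{O(t)}$ such subsets across all choices of a root. A union bound yields
\[
\Pr\bigl[\text{some component has size }>\Theta(t\Delta^{2c})\bigr]\;\le\; n\cdot (e\Delta^{2c+1})^{O(t)}\cdot \Delta^{-3ct}\;=\;n\cdot e^{O(t)}\cdot \Delta^{-\Omega(c)\cdot t},
\]
where the final equality uses that $3c$ dominates the branching $2c+1$ (say for $c\ge 2$). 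Plugging in $t=(c'/c)\log_\Delta n$ and choosing $\Delta$ above an absolute constant drives this to $n^{-\Omega(c')}$.

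The main obstacle is the greedy construction in the first step: it is essential that the witness $I$ be connected in the \emph{small} power $G^{2c+1}$ rather than $G^{4c+1}$. A naive maximal-independent-set argument would only give connectivity in $G^{4c+1}$, inflating the tree-counting factor to $\Delta^{(4c+1)t}$ and swamping the $\Delta^{-3ct}$ savings from independence, breaking the union bound. The ``extend by distance exactly $2c+1$'' move---available precisely because $C$ is connected in $G$---is what simultaneously gives the independence needed for the probability bound and the sparsity needed for the counting bound.
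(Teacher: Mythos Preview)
The paper does not prove this lemma; it is quoted as Lemma~4.1 of \cite{CLP20} and used as a black box. Your argument is correct and is exactly the standard proof behind the cited result: build a witness $I\subseteq C$ that is independent in $G^{2c}$ (so the events $\{v\in\Bad\}_{v\in I}$ are mutually independent, giving probability at most $\Delta^{-3c|I|}$) yet connected in $G^{2c+1}$ (so the number of candidate witnesses of size $t$ is at most $n\cdot(e\,O(\Delta^{2c+1}))^{t}$), and union-bound. Your emphasis that the greedy ``extend by a vertex at distance exactly $2c+1$'' step is what buys connectivity in $G^{2c+1}$ rather than $G^{4c+1}$ is precisely the crux; a naive maximal-independent-set construction would indeed only give the larger power and the counting would swamp the probability savings.

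One small remark on your parenthetical ``say for $c\ge 2$'': for $c=1$ the exponents $3c$ and $2c+1$ coincide and the union bound as you wrote it does not close. This is harmless here, since every application of the lemma in the paper (Lemmas~\ref{lem:qdivideShattering} and~\ref{lem:fastShatteringShatters}) uses $c\ge 3$.
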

The following standard result solves these small components efficiently. 
\begin{lemma}[\cite{RG19}]
	\label{lem:LOCALpostshattering}
	There is a deterministic \LOCAL LLL algorithm with polynomial criterion that runs in $\poly\log N$ rounds on instances of size $N$, even with 
	an ID space of size exponential in $N$.
\end{lemma}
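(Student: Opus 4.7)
My plan is to combine the deterministic network decomposition of \cite{RG19} with the method of conditional expectations applied to a standard LLL potential function. Working in \LOCAL, the dependency graph $H$ can be simulated on the communication graph with only a constant-factor overhead, so I would begin by running the \cite{RG19} decomposition on $H$ to obtain, in $\poly\log N$ rounds, a partition of the events into $C=\poly\log N$ color classes, where each color class consists of disjoint clusters of weak diameter $D=\poly\log N$. Crucially, \cite{RG19}'s procedure touches IDs only through comparison and uses them as opaque labels, so its round complexity is insensitive to the bit-length of IDs; together with the unbounded bandwidth of \LOCAL, this takes care of the ``exponential ID space'' clause.

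Next I would process the $C$ color classes sequentially. Within a single class, the clusters are pairwise non-adjacent in $H$, so they can be handled in parallel without interference. In each cluster, a designated leader gathers in $O(D)$ rounds the entire local sub-instance, namely every variable on which an event of the cluster depends, together with the events themselves and their probabilities. The leader then deterministically fixes these variables one at a time by the method of conditional expectations applied to the potential
\[
\Phi \;=\; \sum_{X \in \Event} p(X)\cdot (1+\alpha)^{d(X)},
\]
where $\alpha>0$ is a small constant tuned so that the polynomial criterion $p(X)\cdot d^{c}\le 1$ gives $\Phi<1$ throughout the execution. Since each greedy fixing can only leave $\Phi$ no larger (in conditional expectation), $\Phi<1$ is preserved; and at the end of the process $\Phi$ equals the number of violated events, which is therefore $0$. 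Across the $C$ classes this yields a valid assignment, in a total of $C\cdot O(D) + \poly\log N = \poly\log N$ rounds.

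The main obstacle is ensuring that the invariant $\Phi<1$ survives the transitions from one color class to the next. Within a class the argument is essentially free, because the same-colored clusters fix disjoint variables in $H$ and hence their conditional-expectation steps commute. Across classes one must verify that after each batch of fixings the residual instance, with $p(X)$ replaced by the corresponding conditional probability, still satisfies the polynomial criterion — this is precisely the slack that the polynomial (rather than merely Lov\'asz's $e\cdot d$) criterion offers, and is the reason the exponent $d(X)$ appears in $\Phi$. The remaining pieces — choosing the direction that does not increase the conditional expectation, proving that $\Phi$ is locally computable inside a cluster (since each event's probability depends only on variables gathered by the leader), and bookkeeping of IDs of unbounded length — are routine given that \LOCAL imposes no bandwidth constraint.
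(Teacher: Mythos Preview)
Your proposal has two gaps.

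First, your claim that the \cite{RG19} decomposition ``touches IDs only through comparison and uses them as opaque labels'' is incorrect: that algorithm iterates over the \emph{bits} of the identifiers, so with an ID space of size $2^N$ (hence $\Theta(N)$-bit IDs) its round complexity degrades to $N\cdot\poly\log N$, not $\poly\log N$. The paper handles the exponential ID space differently: it first runs Linial's algorithm to compute a $\Theta(T)$-distance coloring with $\poly N$ colors (where $T$ is the target round complexity), and these colors then serve as short IDs. Linial's round complexity is $O(\log^*)$ of the original ID space, so this preprocessing costs only $\poly\log N$ rounds.

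Second, and more seriously, your potential $\Phi=\sum_{X\in\Event}p(X)(1+\alpha)^{d(X)}$ need not be below $1$ at the outset. The polynomial criterion $p(X)\,d^{c}\le 1$ bounds each summand by $d^{-c}(1+\alpha)^{d}$, but there are up to $N$ summands, so $\Phi$ can be as large as $N\,d^{-c}(1+\alpha)^{d}$; in the regime of interest here ($d=\poly\log N$) no choice of constant $\alpha>0$ drives this below $1$. Hence the conditional-expectation argument cannot conclude that zero events are violated. The paper's route avoids this: it derandomizes the $O(\log N)$-round distributed Moser--Tardos algorithm of \cite{MoserTardos10,CPS14} via the \cite{RG19} decomposition (as already outlined in \cite{RG19}). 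The pessimistic estimator implicit in that derandomization comes from the Moser--Tardos resampling analysis, whose bound is per-event rather than a global sum, and therefore remains controlled under the polynomial criterion independently of $N$.
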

\begin{proof}[Proof Sketch]
	The result follows with the derandomization of the distributed version of Moser-Tardos \cite{MoserTardos10} via the network decomposition by Rozhon and Ghaffari \cite{RG19}, as explained in \cite{RG19}. Note that the exponential ID space is not an obstacle in the \LOCAL model as it can be circumvented by first computing a $\Theta(T)$-distance coloring with $\poly N$ colors, e.g., by using Linial's coloring algorithm \cite{linial87} if the algorithm runs in $T$ rounds.
\end{proof}

\subparagraph{Notation and concentration bounds.}
Given a graph $G=(V,E)$ and a subset $S\subseteq V$ the induced graph $G[S]$ is the graph with vertex set $S$ that contains all edges of $E$ with both endpoints in $S$. Similarly, for an edge set $F\subseteq E$, the induced graph $G[F]$ is the graph with edge set $F$ that contains all vertices that appear in an edge of $F$.  We denote $[n]=\{0,\ldots, n-1\}$. We use the following standard concentration bounds (see, e.g., \cite{DGP98}). 

\begin{lemma}[Chernoff bounds,\cite{DGP98}]\label{lem:basicchernoff}
	Let $\{X_i\}_{i=1}^r$ be a family of independent binary random variables with $\Pr[X_i=1]=q_i$, and let $X=\sum_{i=1}^r X_i$. For any $\delta>0$, $\Pr[|X-\Exp[X]|\ge \delta\Exp[X]]\le 2\exp(-\min(\delta,\delta^2) \Exp[X]/3)$.
\end{lemma}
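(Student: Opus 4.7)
The plan is to follow the classical moment-generating function (MGF) argument, treating the upper and lower tails separately and combining them by a union bound to absorb the factor of $2$.

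For the upper tail $\Pr[X \geq (1+\delta)\mu]$ with $\mu = \Exp[X] = \sum_i q_i$, I would first apply Markov's inequality to $e^{tX}$ for a parameter $t > 0$ to be chosen later, obtaining $\Pr[X \geq (1+\delta)\mu] \leq e^{-t(1+\delta)\mu}\Exp[e^{tX}]$. Using independence of the $X_i$ and the elementary inequality $1+x \leq e^x$, I bound each factor as
\[
\Exp[e^{tX_i}] = 1 + q_i(e^t - 1) \leq \exp\bigl(q_i(e^t-1)\bigr),
\]
which yields $\Exp[e^{tX}] \leq \exp(\mu(e^t-1))$. Optimizing by choosing $t = \ln(1+\delta)$ then gives the standard Chernoff form
\[
\Pr[X \geq (1+\delta)\mu] \leq \left(\frac{e^\delta}{(1+\delta)^{1+\delta}}\right)^{\mu}.
\]

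The next step is to turn this into the clean $\exp(-\min(\delta,\delta^2)\mu/3)$ form. I would split on the magnitude of $\delta$: for $0 < \delta \leq 1$ a Taylor expansion of $(1+\delta)\ln(1+\delta) - \delta$ yields a lower bound of $\delta^2/3$, giving $\Pr[X \geq (1+\delta)\mu] \leq \exp(-\delta^2 \mu/3)$; for $\delta > 1$ one checks directly that $(1+\delta)\ln(1+\delta) - \delta \geq \delta/3$ (by monotonicity after verifying at $\delta = 1$ and checking the derivative), yielding $\exp(-\delta \mu/3)$. For the lower tail, a symmetric argument with $t < 0$ and the same MGF bound gives $\Pr[X \leq (1-\delta)\mu] \leq \exp(-\delta^2 \mu/2) \leq \exp(-\delta^2 \mu/3)$, which is only relevant for $\delta \leq 1$. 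Adding the two tail bounds contributes the factor of $2$.

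The main obstacle is not conceptual but purely the bookkeeping: one has to verify the two calculus inequalities $(1+\delta)\ln(1+\delta) - \delta \geq \delta^2/3$ for $\delta \in (0,1]$ and $\geq \delta/3$ for $\delta \geq 1$ with the stated constant $1/3$, which forces the exact threshold between the regimes $\delta \leq 1$ and $\delta > 1$ in the $\min(\delta,\delta^2)$ expression. Once these elementary inequalities are in hand, combining the two tail bounds and the union bound gives exactly the statement of the lemma.
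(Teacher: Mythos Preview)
Your proposal is the standard and correct MGF argument for this form of the Chernoff bound, and the two calculus inequalities you identify do hold with the constant $1/3$ as stated. Note, however, that the paper does not actually prove this lemma: it is stated as a standard concentration bound and attributed to \cite{DGP98}, so there is no ``paper's own proof'' to compare against. Your write-up would serve perfectly well as a self-contained proof if one were desired.
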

\begin{corollary}\label{cor:handychernoff}
	With $X$ of the same form as in \cref{lem:basicchernoff}, $\forall \mu, z$ s.t.\ $z\leq \mu$ and $\Exp[X] \leq \mu$, $\Pr[|X-\Exp[X]|\ge z]\le 2\exp(-z^2/(3\mu))$.
\end{corollary}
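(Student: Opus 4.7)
The plan is to derive the corollary directly from \Cref{lem:basicchernoff} by choosing $\delta=z/\Exp[X]$ and splitting into two cases according to whether $\delta\le 1$ or $\delta>1$. The only subtlety is handling the degenerate situation $\Exp[X]=0$, and then checking that in both regimes the exponent $\min(\delta,\delta^2)\Exp[X]/3$ supplied by \Cref{lem:basicchernoff} dominates $z^2/(3\mu)$.

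First I would dispose of the trivial case $\Exp[X]=0$: then $X=0$ almost surely, so $\Pr[|X-\Exp[X]|\ge z]$ is zero (assuming $z>0$; if $z=0$ the bound is vacuous), and the inequality holds. Assuming $\Exp[X]>0$, set $\delta=z/\Exp[X]$ so that $\delta\Exp[X]=z$, and invoke \Cref{lem:basicchernoff} to obtain
\[
\Pr[|X-\Exp[X]|\ge z]\le 2\exp\!\bigl(-\min(\delta,\delta^2)\,\Exp[X]/3\bigr).
\]

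Next I would verify the exponent bound in each case. If $\delta\le 1$, then $\min(\delta,\delta^2)\Exp[X] = \delta^2\Exp[X] = z^2/\Exp[X] \ge z^2/\mu$, using the hypothesis $\Exp[X]\le\mu$. If $\delta>1$, then $\min(\delta,\delta^2)\Exp[X]=\delta\Exp[X]=z\ge z^2/\mu$, this time using the hypothesis $z\le\mu$. Combining the two cases gives $\min(\delta,\delta^2)\Exp[X]/3\ge z^2/(3\mu)$, which, after plugging back into the Chernoff bound, yields the claimed inequality.

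There is no real obstacle here: the argument is a routine unification of the additive and multiplicative forms of Chernoff, and the two hypotheses $\Exp[X]\le\mu$ and $z\le\mu$ are exactly what is needed to handle the two regimes of $\delta$. The only thing to be slightly careful about is to phrase the statement so that the case $\Exp[X]=0$ (and the accompanying undefined $\delta$) does not cause an issue, which is easily addressed as above.
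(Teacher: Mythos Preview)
Your argument is correct and is exactly the standard derivation one would expect: the paper itself states \Cref{cor:handychernoff} without proof, so there is no alternative approach to compare against. The case split on $\delta\le 1$ versus $\delta>1$ together with the two hypotheses $\Exp[X]\le\mu$ and $z\le\mu$ is precisely what is needed, and your handling of the degenerate case $\Exp[X]=0$ is appropriate.
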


\section{Warm-Up: Computing \texorpdfstring{$q$}{q}-divides}
\label{sec:qdivide}
For an integer $q\geq 1$, a \emph{$q$-divide} of a graph is a partition of its vertices into $q$ parts (``buckets'') $V_1,\ldots,V_q$ such that each vertex has at most $8\Delta/q$ neighbors in each bucket. We show the following theorem.
\begin{restatable}{theorem}{thmqdivide}
	\label{thm:qdivide}
	For any $\Delta\leq \poly\log n$ and  $q\in [1,(1/6)\Delta/\ln\Delta]$, there is a \LOCAL algorithm to compute a $q$-divide in $\poly\log\log n$-rounds.
\end{restatable}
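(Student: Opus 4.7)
I would cast the $q$-divide problem as a distributed LLL with polynomial criterion and invoke the shattering template: one round of random sampling followed by a deterministic post-shattering phase on the remaining small components. Each vertex $v$ independently picks a bucket $c(v)\in[q]$ uniformly at random; declare $v$ \emph{bad} (i.e.\ $v\in\Bad$) if some bucket $i$ contains more than $\alpha\Delta/q$ of its neighbors, for a threshold $\alpha\in(1,8)$ chosen strictly below the final target $8$ in order to leave slack. The event $X_v$ depends only on the bucket variables of $v$'s neighbors, so the dependency graph has degree $d\leq \Delta^2$ and events at $G$-distance larger than $2$ involve disjoint variables.

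Since $\Exp[|N(v)\cap c^{-1}(i)|]\leq \Delta/q$, and the assumption $q\leq \Delta/(6\ln\Delta)$ gives $\Delta/q\geq 6\ln\Delta$, Chernoff together with a union bound over the $q$ buckets yields
\[
\Pr[X_v]\;\leq\;2q\exp\!\Big(-\tfrac{(\alpha-1)\Delta}{3q}\Big)\;\leq\;\Delta^{-3c}
\]
for any desired constant $c$ by taking $\alpha$ large enough (but bounded away from $8$). This comfortably satisfies the polynomial LLL criterion, and Lemma~\ref{lem:shattering} then yields that w.h.p.\ each connected component of $G[\Bad]$ has size $\poly(\Delta)\log_\Delta n=\poly\log n$, using $\Delta\leq\poly\log n$.

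For the post-shattering phase, I freeze the bucket choice of every non-bad vertex and, on each small component, set up a residual LLL whose variables are the buckets of the bad vertices and whose events require the final count (frozen plus resampled) in every bucket of every vertex in $N[\Bad]$ to be at most $8\Delta/q$. Every non-bad boundary vertex already satisfies its bucket counts below the smaller threshold $\alpha\Delta/q$, leaving a slack of $(8-\alpha)\Delta/q=\Theta(\Delta/q)$; the same Chernoff calculation shows that the residual event probabilities are again $\Delta^{-\Omega(1)}$, so the polynomial LLL criterion continues to hold on each component. Applying Lemma~\ref{lem:LOCALpostshattering} solves each instance of size $N=\poly\log n$ deterministically in $\poly\log N=\poly\log\log n$ rounds, and the disjoint components are processed in parallel.

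The main obstacle is the post-shattering step: a naive resampling of bad vertices can violate the constraint at a non-bad boundary neighbor whose count in some bucket was already close to $8\Delta/q$. The gap between the definition threshold $\alpha\Delta/q$ (used to identify $\Bad$) and the $q$-divide requirement $8\Delta/q$ is exactly what provides the slack needed for the residual LLL to satisfy a polynomial criterion. Everything else is a direct application of standard Chernoff concentration, the shattering lemma, and the deterministic LLL algorithm via Rozhon--Ghaffari network decomposition.
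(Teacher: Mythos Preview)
Your shattering template is right, but the choice of which variables to resample in the post-shattering phase creates a genuine feasibility problem, and your slack analysis only covers half the relevant events.

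You declare $v\in\Bad$ when the \emph{event} $X_v$ fails (some bucket in $N(v)$ exceeds $\alpha\Delta/q$), and then you resample the \emph{variable} $c(v)$. But $c(v)$ does not occur in $X_v$: the event depends only on $\{c(w):w\in N(v)\}$, so resampling $c(v)$ does nothing for $v$'s own constraint. Concretely, take a vertex $u$ whose initial sample puts more than $8\Delta/q$ neighbors into a bucket $i$, with those neighbors themselves non-bad (each neighbor's own neighborhood is balanced at threshold $\alpha$, which happens with probability $1-\Delta^{-3c}$). Then the frozen contribution at $u$ in bucket $i$ already exceeds $8\Delta/q$, and the residual event at $u\in\Bad\subseteq N[\Bad]$ is violated regardless of how the live variables are set: the residual instance is infeasible. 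Since $\Delta\le\poly\log n$, such vertices $u$ arise at roughly $n/\Delta^{\Theta(1)}=n/\poly\log n$ locations, so this is not a tail case you can union-bound away. Your slack argument treats only the non-bad boundary vertices in $N[\Bad]\setminus\Bad$ and is silent about the events at $u\in\Bad$.

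The paper avoids this with two devices that replace the slack idea entirely. First, when a node $v$ detects an overfull bucket it retracts \emph{the neighbors sitting in that bucket} rather than its own choice; after retraction every node automatically meets the threshold in each of the used buckets. Second, the $q$ buckets are split in half: pre-shattering assigns only into the first $q/2$, and all retracted nodes are placed into the last $q/2$ during post-shattering. The residual LLL therefore works on fresh buckets with zero frozen contribution, and a polynomial criterion follows directly from Chernoff with no interaction between the two phases. Your threshold-gap idea can be salvaged, but only after switching which nodes get retracted: you must unassign the variables of a violated event (the overfull neighbors), not the event node itself.
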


We use $q$-divide as a subroutine in our $k$-splitting algorithm of \Cref{thm:vertexSplitting}. Additionally, the techniques to compute a $q$-divide serve as a warm up for the more involved algorithm for vertex splitting. 
There are two crucial differences between a (tight) $k$-splitting and a $q$-divide: 
(1) A splitting guarantees both the minimum and maximum number of neighbors of a node inside each part, while a $q$-divide gives only an upper bound, 
(2) the upper bounds asked for by a $q$-divide are loose, i.e., we deviate by a factor $8$ from a perfect partition, while a splitting is within a $(1\pm\eps)$-factor.

A $q$-divide is guaranteed to exist by LLL when $q\in O(\Delta / \log\Delta)$.  A $q$-divide can also be defined as a $8\Delta/q$-defective $8\Delta/q$-frugal $q$-coloring, where $x$-frugal means that each color appears no more than $x$ times in each neighborhood.

\begin{remark} 
\label{rem:zeroRoundQdivide} For  $\Delta/q=\Omega(\log n)$, there is a trivial zero round \CONGEST algorithm for $q$-dividing.
Each vertex assigns itself to a bucket uniformly at random ; each vertex has $d(v)/q \leq \Delta/q$ neighbors in each bucket in expectation. By \cref{lem:basicchernoff} (Chernoff bound), for each vertex $v$ and $i\in [q]$, $\Pr[|N(v)\cap V_i| >8\Delta / q] \leq \exp(-\Delta/q) \in n^{-\Omega(1)}$. Therefore, w.h.p., no vertex has more than $8\Delta/q$ neighbors in a bucket.
\end{remark}

For smaller $\Delta/q$ we give an algorithm based on shattering (that also works for large $\Delta/q$). 

The algorithm is parameterized with a threshold parameter $z(v)$ for each vertex $v$. For \Cref{thm:qdivide} we set $z(v)=8\Delta/q$ for all nodes. In \Cref{thm:qdivideLocalGuarantees} we compute slightly different versions of $q$-divides with different choices of $z(v)$. 

\subparagraph{Algorithm.}
\textbf{Phase I:} (Pre-shattering) Each vertex picks one of the first $q/2$ buckets u.a.r. Whenever a node has more than $z(v)$ neighbors in a bucket, it deselects these, i.e., these neighbors are removed from the bucket. \textbf{Phase II:} (Post-shattering) The post-shattering instance is formed by all nodes that are not assigned to any bucket, together with their neighbors. The objective is to add each unassigned node to one of the last $q/2$ buckets, such that each node has at most $z(v)$ neighbors in each bucket. In \Cref{lem:qdividePostShatteringLLL}, we show that this problem is an LLL instance with a polynomial criterion, and in 
\Cref{lem:qdivideShattering} that it is induced by connected components of small size.   
We solve it via \Cref{lem:LOCALpostshattering} in \LOCAL.

\begin{lemma}
	\label{lem:qdivideShattering}
	For threshold discrepancy $z(v)=8\Delta/q$ for all $v\in V$, the connected components participating in the post-shattering phase of the algorithm are of size $\poly(\Delta)\cdot \log n$, w.h.p.
\end{lemma}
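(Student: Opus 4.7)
The plan is to invoke the shattering lemma (\cref{lem:shattering}) with $\Bad$ taken to be the set of vertices that participate in the post-shattering instance, namely the unassigned vertices together with their neighbors. Two inputs are required: an upper bound on $\Pr[v \in \Bad]$, and a bound on the spatial dependency radius of this event.

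For the probability bound, observe that a vertex $u$ becomes unassigned in Phase~I exactly when some neighbor $v \in N(u)$ observes more than $z(v) = 8\Delta/q$ of its own neighbors in the bucket that $u$ selected. For a fixed such $v$ and bucket $i \in [q/2]$, the count $X_{v,i} := |N(v)\cap V_i|$ has expectation at most $2\Delta/q$, since each vertex picks one of $q/2$ buckets uniformly, whereas the deselection threshold is $8\Delta/q$. A Chernoff bound (\cref{lem:basicchernoff}) therefore yields $\Pr[X_{v,i} > 8\Delta/q] \le \exp(-\Omega(\Delta/q))$, and the hypothesis $q \le \Delta/(6\ln \Delta)$ converts this into $\Delta^{-c_0}$ for a constant $c_0$ that is made large by the gap between the threshold multiplier~$8$ and the mean multiplier~$2$. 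Union-bounding over the at most $\Delta$ neighbors $v \in N(u)$ gives $\Pr[u \text{ is unassigned}] \le \Delta^{-c_0+1}$, and a further union bound over $v$ and its neighbors shows $\Pr[v \in \Bad] \le \Delta^{-c_0+2}$.

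For the dependency radius, the event $v \in \Bad$ depends on whether some vertex $u \in N[v]$ is unassigned, which in turn depends on the bucket choices of all vertices within distance $2$ of $u$. Chained together, $v \in \Bad$ is a function of the random bucket choices of nodes in the ball of radius $3$ around $v$ in $G$, so the events $v \in \Bad$ and $u \in \Bad$ depend on disjoint sets of random variables whenever the $G$-distance between $v$ and $u$ exceeds~$6$.

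With these two ingredients, \cref{lem:shattering} can be applied with $c = 3$: the distance condition is satisfied, and the probability requirement $\Pr[v \in \Bad] \le \Delta^{-9}$ holds for $\Delta$ above an absolute constant, since $c_0$ can be made as large as desired by the slack built into the constants $8\Delta/q$ and $\Delta/q \ge 6\ln \Delta$. The conclusion is that each connected component of $G[\Bad]$ has size $O(\Delta^{2c}\log_\Delta n) = \poly(\Delta)\cdot \log n$ w.h.p., matching the claim. The only nontrivial aspect is this bookkeeping of constants: one must pick the Chernoff slack large enough to simultaneously absorb the two union bounds over $\Delta$-sized neighborhoods and satisfy the $\Delta^{-3c}$ threshold with $c$ at least as large as the dependency radius; structurally, the argument is a direct application of shattering to a single easy-to-bound bad event.
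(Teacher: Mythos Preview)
Your proposal is correct and follows essentially the same route as the paper: bound $\Pr[v\in\Bad]$ via Chernoff plus union bounds over $\poly(\Delta)$ nearby nodes, observe the dependency radius is~$3$, and invoke \cref{lem:shattering} with $c=3$. One small imprecision: with the fixed constants $8\Delta/q$ and $q\le(1/6)\Delta/\ln\Delta$ you do not get to make $c_0$ ``as large as desired''---you get the specific value $c_0=12$ from $\exp(-2\Delta/q)\le\Delta^{-12}$---but this is already enough to absorb the two $\Delta$-factor union bounds and meet the $\Delta^{-9}$ threshold, so the argument goes through exactly as in the paper.
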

\begin{proof}
	For $j\in [q/2]$ and node $v$, let $D_j(v)$ be the number of neighbors of $v$ in bucket $j$. We have $E[D_j]=2\Delta/q$ for each of the  $q/2$ buckets. By Chernoff (\cref{lem:basicchernoff}), a node $v$ has an unusually high number of neighbors ($>z(v)=8\Delta/q=(1+3)2\Delta/q$) in a given bucket w.p.\ at most $\exp(-2\Delta/q) \leq \Delta^{-12}$~,
	using $q \leq (1/6)\Delta/\ln \Delta$. A node $v$ takes part in the post-shattering phase if one of its neighbors or $v$ itself renounced its choice of bucket, i.e., if a node in its distance-$2$ neighborhood had too many neighbors in one of the buckets. This occurs w.p.\ at most $q\cdot \Delta^2 \cdot \Delta^{-12} \leq \Delta^{-9}$, and is fully determined by the random choices of nodes inside the 3-hop ball around $v$. 
	Hence, by \cref{lem:shattering}, the graph is shattered into components of size $O(\Delta^6\log n)$, w.h.p.
\end{proof}

\begin{lemma}
	\label{lem:qdividePostShatteringLLL}
	When $z(v)=8\Delta/q$, for all $v\in V$, the instances formed 
	in the post-shattering phase are LLL problems with criterion $f(d) = (q/2)\exp(-2\sqrt{d}/q)$ and $d\leq \Delta^2$. For $q\leq (1/6)\Delta/\ln \Delta$, the error probability of the LLL is upper bounded by $d^{-5}$.
\end{lemma}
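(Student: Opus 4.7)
The plan is to identify the exact LLL instance produced by the post-shattering phase, read off the dependency structure from the underlying graph $G$, then bound each bad-event probability with a direct Chernoff argument and simplify for the stated range of $q$.

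First, I would make the variables and events explicit. The only Phase-II randomness is each unassigned node picking one of the last $q/2$ buckets uniformly and independently, so these are the LLL variables. Phase I touched only the first $q/2$ buckets, so the count of $v$'s neighbors in any bucket $j \in \{q/2{+}1,\dots,q\}$ depends purely on Phase-II variables. For each node $v$ in the post-shattering instance I define the bad event $\event_v$ as the union over such $j$ of ``$v$ has strictly more than $z(v)=8\Delta/q$ neighbors in bucket $j$.'' Note that $v$'s own choice does not appear in $\event_v$ -- only its unassigned neighbors' do -- so $\event_v$ and $\event_u$ share a variable precisely when $u,v$ share an unassigned neighbor, forcing $u,v$ to be at distance at most $2$ in $G$. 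Hence the dependency degree of the induced LLL satisfies $d\le\Delta^2$.

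Second, I would bound $\Pr[\event_v]$ by Chernoff. For a fixed bucket $j$, the number of $v$'s neighbors landing in bucket $j$ is a sum of at most $\Delta$ independent Bernoulli$(2/q)$ variables with mean at most $2\Delta/q$, and the threshold $8\Delta/q$ corresponds to $\delta=3$. \Cref{lem:basicchernoff} gives a per-bucket tail bound of $\exp(-2\Delta/q)$, and a union bound over the $q/2$ relevant buckets yields $\Pr[\event_v]\le(q/2)\exp(-2\Delta/q)$. Since $d\le \Delta^2$ gives $\sqrt d\le\Delta$, weakening the exponent rewrites this bound as $\Pr[\event_v]\le(q/2)\exp(-2\sqrt d/q)$, i.e.\ the stated criterion $f(d)$.

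Finally, under the hypothesis $q\le\Delta/(6\ln\Delta)$, one has $2\Delta/q\ge 12\ln\Delta$, hence $\exp(-2\Delta/q)\le\Delta^{-12}$ and $(q/2)\exp(-2\Delta/q)\le\Delta^{-11}$; combined with $d\le\Delta^2$ (so $d^{-5}\ge\Delta^{-10}$), this gives $\Pr[\event_v]\le d^{-5}$, the desired polynomial criterion. I do not expect any serious obstacle: the only subtlety is being careful that we condition on the Phase-I outcome (fixing which nodes are unassigned and which of $v$'s neighbors remain bucket-free) before appealing to independence, which is exactly what legitimises applying Chernoff in the post-shattering phase.
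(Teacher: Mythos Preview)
Your proposal is correct and follows essentially the same argument as the paper: define the bad event for $v$ as exceeding $8\Delta/q$ neighbors in some second-half bucket, observe dependencies are confined to distance-$2$ so $d\le\Delta^2$, apply the Chernoff bound of \Cref{lem:basicchernoff} with $\delta=3$ and mean $\le 2\Delta/q$ per bucket, and simplify under $q\le(1/6)\Delta/\ln\Delta$. Your extra remark about conditioning on the Phase-I outcome before invoking independence is a nice clarification that the paper leaves implicit.
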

\begin{proof}
	Consider the following probabilistic process. Each node picks each part in $[q]\setminus [q/2]$ u.a.r., i.e., with probability $p=2/q$. 
	For $j\in [q]\setminus [q/2]$, let $D_j$ denote the random variable describing the number of neighbors in bucket $j$. We have $E[D_j]\leq 2\Delta/q$. 
	Let $X_v$ denote the ``bad'' event that node $v$ has more than $z(v)=8\Delta/q$ neighbors in one of the $q/2$ buckets.
	We analyze the LLL formed by the events $X_v$ and their underlying variables.

	The event $X_v$ is fully determined by the random choices of direct neighbors of $v$. Hence, two bad events $X_v$ and $X_w$ are dependent on a shared variable iff $v$ and $w$ are at distance $2$ or less, and each bad event shares a variable with at most $\Delta^2$ other events.

	Therefore, the dependency graph of the LLL has maximum degree at most $d \leq \Delta^2$.
	By Chernoff (\cref{lem:basicchernoff}), $X_v$ occurs w.p.\ at most $(q/2)\exp(-2\Delta/q)$.
	Hence, the LLL has criterion $f(d) = (q/2)\exp(-2\sqrt{d}/q)$, which ranges from being polynomial to exponential depending on how small $q$ is compared to $\Delta \geq \sqrt{d}$. In the worst case, the bound $q\leq (1/6)\Delta / \ln\Delta$ implies that $f(d) \leq (\Delta/ (12\ln \Delta))\Delta^{-12} \leq d^{-5}$.
\end{proof}

\begin{proof}[Proof of \Cref{thm:qdivide}]
	The problem is solved by the algorithm above.
	The runtime is $O(1)$ rounds for the pre-shattering phase, and $\poly\log\log n$ rounds for the post-shattering phase via \Cref{lem:LOCALpostshattering}. To apply this lemma we require \Cref{lem:qdivideShattering} that shows that any component in the post-shattering phase has size $\log n \cdot \poly \Delta=\poly\log n$, w.h.p., and that \Cref{lem:qdividePostShatteringLLL}  shows that these components form LLLs with a polynomial criterion.
\end{proof}
Note that in the special case of $\Delta/q=\Omega(\log n)$ we get the stronger property that,  w.h.p., there will no post-shattering phase (see \Cref{rem:zeroRoundQdivide}). 
\section{Vertex Splitting in \LOCAL}
\label{sec:splitting}
In this section, we prove the following result on vertex splitting. 

\thmVertexSplitting*

When $\Delta$ is logarithmically larger than $k$, there is an easy solution.
\begin{restatable}{observation}{obsZeroRound}
	\label{obs:zeroRoundSplitting}
	If $k \leq \eps^2 \Delta/(9\ln n)$, the trivial zero round algorithm in which each node picks one of the $k$ parts u.a.r.\ results in a $k$-vertex splitting with discrepancy $\eps \Delta/k$, w.h.p.
\end{restatable}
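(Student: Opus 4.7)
The plan is a direct Chernoff-and-union-bound argument, using \Cref{cor:handychernoff} that is already available in the preliminaries. We may assume $\eps \leq 1$ (otherwise the discrepancy constraint is looser than the $\eps=1$ case and the same proof works verbatim).

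First, I would fix an arbitrary node $v\in V$ and an arbitrary part index $i\in[k]$, and define $X_i(v) = |N(v)\cap V_i|$. Since each of $v$'s $d(v)$ neighbors selects part $i$ independently with probability $1/k$, we have $\Exp[X_i(v)] = d(v)/k$, and $X_i(v)$ is a sum of $d(v)$ independent indicator variables, so \Cref{cor:handychernoff} applies. Setting $\mu = \Delta/k$ (which upper bounds $\Exp[X_i(v)]$) and $z = \eps\Delta/k$ (so $z\le \mu$ as $\eps\le 1$), the corollary yields
\[
\Pr\bigl[\,|X_i(v) - d(v)/k|\ge \eps\Delta/k\,\bigr] \;\le\; 2\exp\!\left(-\frac{(\eps\Delta/k)^2}{3\cdot \Delta/k}\right) \;=\; 2\exp\!\left(-\frac{\eps^2\Delta}{3k}\right).
\]

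Second, I would plug in the hypothesis $k \le \eps^2\Delta/(9\ln n)$, which rearranges to $\eps^2\Delta/(3k)\ge 3\ln n$, hence the above probability is at most $2n^{-3}$.

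Finally, I would take a union bound over all at most $n\cdot k \le n^2$ choices of $(v,i)$: the probability that \emph{some} node $v$ and part $i$ violate $|N(v)\cap V_i| = d(v)/k \pm \eps\Delta/k$ is at most $2n^{-1}$. Consequently, with probability $1-O(1/n)$ the random assignment is a $k$-vertex splitting with discrepancy $\eps\Delta/k$, and no communication was performed, so the algorithm runs in $0$ rounds. There is no real obstacle here; the only bookkeeping point is verifying that the parameters $\mu$ and $z$ in \Cref{cor:handychernoff} are correctly chosen (so that the worst-case degree $\Delta$, and not $d(v)$, controls the concentration), which is why the hypothesis is stated in terms of $\Delta$.
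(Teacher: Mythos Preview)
Your proposal is correct and essentially identical to the paper's own proof: both fix a node and a part, apply \Cref{cor:handychernoff} with $\mu=\Delta/k$ and $z=\eps\Delta/k$ to get the $2\exp(-\eps^2\Delta/(3k))\le 2n^{-3}$ bound, and then union bound over all $(v,i)$ pairs. Your explicit remark that $\eps\le 1$ ensures $z\le\mu$ is a helpful clarification the paper leaves implicit.
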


\begin{proof}
	For a node $v$ and class $i$, let $D$ be the number of neighbors of $v$ that picked class $i$. Then $\Exp[D] = d_v/k$. Let $\mu:=\Delta/k$, $z=\eps\Delta/k$. By \cref{cor:handychernoff} (Chernoff bound)
	\[ \Pr[|D - \Exp[D]| \ge \epsilon \Delta/k] \le 2 \exp(-z^2/(3\mu))
	= 2 \exp(-\eps^2 \Delta /(3k))\ . \]
	This is at most $2n^{-3}$ when $k\leq \eps^2\Delta/(9\ln n)$, so by union bound over all nodes $v$ and classes $i$ we get a $k$-splitting w.h.p.
\end{proof}

\subsection{Shattering for \texorpdfstring{$\eps$}{ε}-Vertex-Splitting in \texorpdfstring{$O(1/\eps)$}{O(1/ε)} Rounds}

\label{ssec:Algorithm}
Due to \Cref{obs:zeroRoundSplitting}, the most challenging case for a  a $\poly\log\log n$-round algorithm is when $\Delta\leq \poly\log n$ and  $\Delta/k=O(\log n)$ holds. Next, we present our algorithm.

\subparagraph{\textsf{FastShattering}.} 
Find a $q$-divide $\chi$ for $q=24/\eps$. To avoid confusion between this partition of the nodes and that of the $k$-splitting we are computing, let us refer to $\chi$ as a \emph{schedule} of the nodes, made of $q$ \emph{slots}, which we denote by $N_1,\ldots,N_q$. Go through the $q$ slots of $\chi$ sequentially, and temporarily assign each node in this slot one of the $k$ parts uniformly at random. 
If a node has received too few or too many neighbors in a part when processing a slot, we retract the last batch of assignments within the neighborhood of that node and freeze those nodes. We also freeze all nodes within distance $3$ that are in later slots. All non-frozen nodes (in slot $j$) keep their assignment permanently. The frozen nodes then get solved in post-shattering (along with all neighbors acting as events, including non-frozen neighbors). For each $j\in [q]$, there is one such post-shattering instance stemming from nodes that were frozen when processing slot $j$.  

\begin{figure}[ht]
	\centering
	\includegraphics[width=0.6\textwidth]{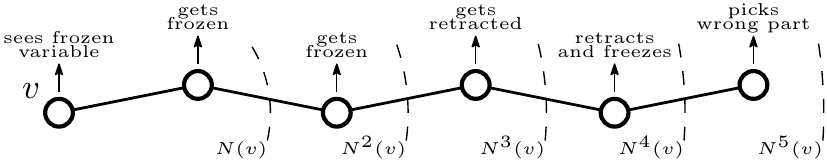}
	\caption{Whether a node joins the post-shattering instance depends on random choices at distance up to $5$.}
	\label{fig:lllshattering}
\end{figure}

For the rest of this section, set the number of slots to $q=24/\eps$ and define the following threshold parameters for the pre-shattering and post-shattering phase $z^{\pres}_j(v)=z^{\post}_j(v) = \eps^2\Delta/(72k)$ for all $j\in[q], v\in V$.

\subparagraph{Detailed description of \textsf{FastShattering}.}
During the course of the algorithm nodes are either \frozen or non-\frozen. Initially, all nodes are non-\frozen. 
\textsf{Pre-shattering:} 
After computing the $q$-divide, we iterate through the slots $1, \ldots, q$. In each iteration, we temporarily assign the non-\frozen nodes in slot $j$ by sampling each u.a.r.\ into one of the $k$ parts. Next, we formalize the event that retracts these assignments. Fix a node $v\in V$, a slot $j\in[q]$ and a part $i\in [k]$. Let $N_j(v)$ be the neighbors of node $v$ in slot $j$. Let $\hat{d}_j(v)$ denote the \emph{live degree} of node $v$ when processing slot $j$, i.e., the number of vertices in $N_j(v)$ that are not \frozen just before processing slot $j$. Let $D_{i,j}$ denote the number of neighbors of $v$ in $N_j(v)$ that are temporarily assigned to part $i$. Event $\mathcal{B}_{i,j}^{\pres}(v)$ holds if $D_{i,j}$ deviates from its expectation $\Exp[D_{i,j}]=\hat{d}_j(v)/k$ by more than the threshold parameter $z^{\pres}_j(v)$. Let $\mathcal{B}_{j}^{\pres}(v)=\bigvee_{i\in k}\mathcal{B}_{i,j}^{\pres}(v)$ be the event that $v$ sees such a large deviation from its expectation in some part $i$. Suppose after sampling event $\mathcal{B}_{j}^{\pres}(v)$ occurs, then node $v$ undoes the temporal assignment of all neighbors in slot $j$, i.e., of all nodes in $N_j(v)$, and additionally freezes all unassigned nodes in distance $3$, i.e., the nodes in $\{u \in \cup_{j' > j} N_j(v) : d(u,v) \le 3 \}$. Add all nodes that become frozen when processing slot $j$ to $\Bad_j$. All (temporal) assignments that do not undergo a retraction are kept permanently. 
While \frozen nodes do not sample colors, each node monitors how its neighbors are being colored and thus yields an \emph{event node} in each of the $q$ iterations, regardless of whether it is \frozen or not.

\textsf{Post-shattering:}  For each $j\in [q]$ there is a separate post-shattering LLL instance containing a variable for each node in $\Bad_j$ and a bad event node  $\mathcal{B}^v_{j}$ for each node $v$ with a neighbor in $\Bad_j$.  The random process of the $j$-th LLL is as follows: Each node in $\Bad_j$ picks one of the $k$ parts independently and u.a.r. For a node $v$ the number of neighbors in $\Bad_j$ is denoted by $f_j(v)$. Let $F_{i,j}(v)$ be the number of neighbors of $v$ in part $i$ (restricted to neighbors in $\Bad_j$). Event $\mathcal{B}_{i,j}^{\post}(v)$ holds if $F_{i,j}$ deviates from its expectation $\Exp[F_{i,j}(v)]=f_j(v)/k$ by more than the threshold parameter $z^{\post}_j(v)$. The \emph{bad event} $\mathcal{B}_{j}^{\post}(v)=\bigvee_{i\in k}\mathcal{B}_{i,j}^{\post}(v)$ holds if $v$ sees such a large deviation from its expectation in some part $i$. 
In \Cref{lem:postShatteringLLL,lem:shattering} we show that for each $j\in[q]$ we indeed obtain an LLL with polynomial criterion that can be solved via  \Cref{lem:LOCALpostshattering} in the \LOCAL model. All $q$ instances are solved in parallel; their deviations add up to $(\eps/3)\cdot \Delta/k$ as shown in \Cref{lem:discrepancy}.

\textbf{Intuition for the runtime:} The \textsf{pre-shattering} phase runs in $O(q)=O(1/\eps)$ rounds. The \textsf{post-shattering} phase runs in $\poly\log\log n$ rounds for the following reason. Each component in each of the $q$ \textsf{post-shattering} instances forms an LLL and is of size $N=\poly(\Delta)\cdot \log n=\poly\log n$, see \Cref{lem:fastShatteringShatters}. As all components are independent, they can can be solved in parallel in $\poly\log N=\poly\log\log n$ rounds in the \LOCAL model via \Cref{lem:LOCALpostshattering}.

\subparagraph{Notation.}
We summarize and extend the notation that we need for the analysis. 
\begin{itemize}
	\item $V_1, \ldots, V_k$ parts (changing throughout the algorithm),
	\item $N(v)$ neighbors of $v$ in $G$, $N_j(v)$ neighbors of $v$ in slot $j$, $\hat{N}_j(v)\subseteq N_j(v)$ are the live neighbors of $v$ in bucket $j$, i.e., the unfrozen neighbors of $v$ in slot $j$ just before slot $j$ is processed, $F_j(v)\subseteq N(v)$ neighbors of $v$ in the $j$-th post-shattering instance,
	\item $d(v)=\card{N(v)}$, $d_j(v) = \card{N_j(v)}$, $\hat{d}_j(v) = \card{\hat{N}_j(v)}$, $f_j(v) = \card{\hat{F}_j(v)}$~.
\end{itemize}

\begin{observation}
	\label{obs:oneRetraction}
	In \textsf{FastShattering}, any node can have at most one slot in which (some) of its neighbors get their part assignment undone. 
\end{observation}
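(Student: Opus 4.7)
The plan is to trace every retraction back to the monitor node whose pre-shattering event fired, and then exploit the distance-$3$ freezing rule to argue that such a monitor close to $u$ can strike at most once.

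Fix a node $u$. The first step is to show that if at slot $j$ some neighbor $x\in N(u)$ has its temporary assignment undone, then the triggering event is $\mathcal{B}_j^{\pres}(v)$ for some $v$ with $x\in N_j(v)$; in particular $v$ is adjacent to $x$, so the path $u$--$x$--$v$ has length at most $2$ and $d(u,v)\le 2$. This is immediate from the description of \textsf{FastShattering} and sets up the key geometric fact: any retraction affecting $N(u)$ at slot $j$ originates from a monitor within graph distance $2$ of $u$.

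Next I would invoke the freezing rule attached to $\mathcal{B}_j^{\pres}(v)$: when this event fires, $v$ freezes every unassigned node that is within distance $3$ of $v$ and lies in a slot later than $j$. Every neighbor $w$ of $u$ satisfies $d(w,v)\le d(w,u)+d(u,v)\le 1+2=3$, so every neighbor of $u$ that still lies in some slot $j'>j$ becomes frozen at the end of slot $j$. Since the pre-shattering loop only samples non-frozen nodes, no neighbor of $u$ can acquire a fresh temporary assignment in any future slot, and therefore no retraction after slot $j$ can touch $N(u)$.

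To finish, I would rule out an earlier offending slot by the symmetric argument: if some $j'<j$ had already undone a neighbor of $u$, the previous paragraph applied at $j'$ would have frozen every neighbor of $u$ lying in a slot $>j'$, including the neighbor allegedly undone at slot $j$ -- a contradiction, since a frozen node never holds a temporary assignment that can be retracted. Hence at most one slot exhibits a retraction of some neighbor of $u$. The only delicate point is that the distance budget $1+2=3$ matches the freezing radius exactly, which is precisely why the constant $3$ appears in the freezing step of \textsf{FastShattering}; I expect this tight matching to be the only non-mechanical element to highlight in the proof.
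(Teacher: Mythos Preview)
Your proposal is correct and follows essentially the same argument as the paper: a retracted neighbor is adjacent to a triggering monitor, placing that monitor within distance $2$ of the observer node, so the distance-$3$ freeze covers all remaining unassigned neighbors. Your final ``symmetric argument'' paragraph is harmless but unnecessary---if you simply let $j$ be the \emph{first} slot in which a neighbor of $u$ is retracted, the forward direction already finishes the proof.
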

\begin{proof}
	Let $v$ be a node with a neighbor $u\in N_j(v)$ that has its  assignment retracted during slot $j$. Then $u$ is adjacent to a node that detected that too few or too many of its neighbors were assigned a given part  when processing slot $j$. That node is at distance at most $2$ from $v$, and it freezes the nodes in slots higher than $j$ within distance $3$. Therefore, all the unassigned neighbors of $v$ are frozen, and $v$ will not see another retraction in its neighborhood (in fact, it will not even see an assignment).
\end{proof}

\subsection{Vertex Splitting: Bounding the Discrepancy}
\label{s:discAnalysis}
In this section, we bound the deviation in the number of neighbors that a node $v$ sees in the $i$-th part from $d(v)/k$.
For a node $v$ let $N_{\pres}(v)$ ($N_{\post}(v)$) be the neighbors of $v$ that are permanently assigned to a part in the pre-shattering (post-shattering) phase. Also,  let $d_{\pres}(v)=|N_{\pres}(v)|$ and $d_{\post}(v)=|N_{\post}(v)|$.
Recall, the definition of $z^{\pres}_j(v) = z^{\post}_j(v) = \eps^2/(72k)$ and $q=24/\eps$, which immediately yields the following claim. 
\begin{claim}
	\label{claim:boundZj}
	We have 
	$\sum_{j\in [q]}z^{\pres}_j(v)=\sum_{j\in [q]}z^{\post}_j(v)\leq \eps/3 \cdot \Delta/k$~.
\end{claim}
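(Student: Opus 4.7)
The plan is to verify the claim by a direct arithmetic computation, since both $z^{\pres}_j(v)$ and $z^{\post}_j(v)$ are defined as the constant value $\eps^2 \Delta/(72k)$, independent of $j$ and $v$, and $q = 24/\eps$.

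First I would note that by the definition given just before the claim, the sum reduces to $q$ copies of a single value:
\[
\sum_{j \in [q]} z^{\pres}_j(v) \;=\; q \cdot \frac{\eps^2 \Delta}{72 k} \;=\; \frac{24}{\eps} \cdot \frac{\eps^2 \Delta}{72 k} \;=\; \frac{24 \eps \Delta}{72 k} \;=\; \frac{\eps \Delta}{3k}.
\]
The identical calculation applies to $\sum_{j \in [q]} z^{\post}_j(v)$ since $z^{\pres}_j(v) = z^{\post}_j(v)$ by definition. This establishes both the equality and the claimed upper bound (with equality, in fact).

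There is no real obstacle here: the claim is stated as a ``which immediately yields'' remark, and its role is simply to package the numerical choices of $q$ and the threshold parameters so that the later discrepancy analysis (summing a pre-shattering error budget and a post-shattering error budget) cleanly produces a total deviation of at most $(2\eps/3) \cdot \Delta/k$, leaving slack within the target $(1+\eps)\Delta/k$ splitting bound. The only thing worth being careful about is that the claim uses exact equality between the two sums, which holds by the stipulated equality $z^{\pres}_j(v) = z^{\post}_j(v)$ at every slot and every vertex.
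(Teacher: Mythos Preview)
Your proof is correct and matches the paper's approach exactly: the paper simply states that the definitions of $q=24/\eps$ and $z^{\pres}_j(v)=z^{\post}_j(v)=\eps^2\Delta/(72k)$ ``immediately yield'' the claim, and your direct arithmetic computation is precisely that verification. (A minor aside: in your closing commentary the combined pre- and post-shattering deviations actually total $\eps\Delta/k$, not $(2\eps/3)\Delta/k$, but this does not affect the proof of the claim itself.)
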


\medskip

\begin{lemma}
\label{lem:discrepancy}
	In the final assignment $V_1, \ldots, V_k$, i.e., after the pre-shattering and post-shattering phase, we have the following guarantees on the split for each part $i\in[k]$:
	\begin{enumerate}
		\item Node $v$ has  $d_{\pres}(v)/k \pm 2\eps/3 \cdot \Delta/k$ neighbors in $V_i\cap N_{\pres}(v)$. 
		\item  Node $v$ has  $d_{\post}(v)/k \pm \eps/3 \cdot \Delta/k$ neighbors in $V_i\cap N_{\post}(v)$. 
	\end{enumerate}
	In total, for each $i\in [k]$ any node $v$ has $d(v)/k \pm \eps\Delta/k$ neighbors in $V_i$.
\end{lemma}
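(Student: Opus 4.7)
My plan is to prove the two phase-wise bounds separately and combine them via the triangle inequality. The total bound follows immediately since each neighbor of $v$ is permanently assigned in exactly one phase, so $d(v) = d_\pres(v) + d_\post(v)$ and the count in part $i$ splits as $|V_i \cap N(v)| = |V_i \cap N_\pres(v)| + |V_i \cap N_\post(v)|$.

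\textbf{Part 2 (post-shattering).} I would use a slot-wise triangle inequality. Since the $j$-th post-shattering LLL is solved correctly, no bad event $\mathcal{B}_{i,j}^{\post}(v)$ occurs in the final assignment, hence $|F_{i,j}(v) - f_j(v)/k| \leq z_j^{\post}(v)$ for every $v, i, j$. As each frozen node is frozen in exactly one slot, the $\Bad_j$'s are pairwise disjoint, so $\sum_j F_{i,j}(v) = |V_i \cap N_\post(v)|$ and $\sum_j f_j(v) = d_\post(v)$. Summing the slot-wise inequalities and invoking \Cref{claim:boundZj} yields $|V_i \cap N_\post(v)| = d_\post(v)/k \pm (\eps/3)\Delta/k$.

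\textbf{Part 1 (pre-shattering).} I would use \Cref{obs:oneRetraction} to isolate the unique slot $j^*(v)$ in which $v$'s neighborhood experiences retractions, if any. For every other slot $j \neq j^*(v)$, no retraction reaches $v$'s neighborhood, hence $\pi_{\cdot,j}(v) = \hat{d}_j(v)$ and $\pi_{i,j}(v) = D_{i,j}(v)$; moreover, since $v$ did not trigger $\mathcal{B}_j^\pres$, by definition $|D_{i,j}(v) - \hat{d}_j(v)/k| \leq z_j^\pres(v)$. Summing over these good slots via \Cref{claim:boundZj} contributes at most $(\eps/3)\Delta/k$. For slot $j^*(v)$ there are two sub-cases: (a)~if $v$ itself triggered, then all live neighbors in $N_{j^*}(v)$ are retracted and $\pi_{\cdot,j^*}(v) = \pi_{i,j^*}(v) = 0$, contributing nothing; (b)~if only some node $v''$ at distance $\le 2$ triggered, a partial retraction of a subset $R \subseteq \hat{N}_{j^*}(v)$ occurs, and a careful accounting is needed to confine the slot's contribution to an additional $(\eps/3)\Delta/k$. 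Together these yield the claimed $(2\eps/3)\Delta/k$ bound.

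\textbf{Main obstacle.} The technical heart is sub-case~(b). A naive triangle-inequality decomposition of $|\pi_{i,j^*}(v) - \pi_{\cdot,j^*}(v)/k|$ isolates a residual term of the form $\bigl| |R|/k - |R \cap D_i| \bigr|$, which in the worst case is as large as $|R|(1 - 1/k) \leq 8\Delta/q = \eps\Delta/3$, a factor $k$ above the target budget. The refined argument must exploit either (i)~the structural correlation between the retracted set $R$ and the retractor $v''$'s own sampled-color imbalance that caused the trigger, or (ii)~an amortization against the post-shattering budget, recognizing that the retracted nodes are precisely those contributing to $f_{j^*}(v)$ in Part~2. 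Overcoming this is what makes the overall discrepancy $\eps\Delta/k$ rather than the trivial $\Theta(\eps\Delta)$.
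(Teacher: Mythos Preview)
Your treatment of Part~2 and of the good slots $j \neq j^*$ in Part~1 matches the paper. The gap is exactly where you flag it, in sub-case~(b) of slot $j^*$, but neither of your two proposed fixes is what the paper does, and neither is needed.

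The resolution is simpler than you suggest: do not try to bound $|\pi_{i,j^*}(v) - \pi_{\cdot,j^*}(v)/k|$ at all. Instead, use $\hat{d}_j(v)/k$ (the \emph{live} degree just before processing slot $j$, divided by $k$) as the per-slot reference throughout, and only at the very end relate $\sum_j \hat{d}_j(v)$ to $d_\pres(v)$. Concretely, in case~(b) $v$ did not trigger, so the pre-retraction count satisfies $D_{i,j^*}(v) \le \hat d_{j^*}(v)/k + z_{j^*}^\pres$; retractions only remove nodes, hence $|V_i \cap N_{j^*}(v)\cap N_\pres(v)| \le \hat d_{j^*}(v)/k + z_{j^*}^\pres$ as well (and in case~(a) this holds trivially since the left side is $0$). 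Summing over all slots gives
\[
|V_i \cap N_\pres(v)| \;\le\; \Bigl(\textstyle\sum_j \hat d_j(v)\Bigr)/k + \textstyle\sum_j z_j^\pres .
\]
For the lower bound simply drop slot $j^*$ entirely: $|V_i \cap N_\pres(v)| \ge \sum_{j\neq j^*}\bigl(\hat d_j(v)/k - z_j^\pres\bigr)$.

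It remains to compare $\sum_j \hat d_j(v)$ and $\sum_{j\neq j^*}\hat d_j(v)$ with $d_\pres(v)$. Every live neighbor in a slot $j\neq j^*$ is permanently assigned, while in slot $j^*$ anywhere between $0$ and $\hat d_{j^*}(v)$ of them survive, so
\[
\textstyle\sum_{j\neq j^*}\hat d_j(v) \;\le\; d_\pres(v) \;\le\; \textstyle\sum_j \hat d_j(v).
\]
Plugging this in, the extra term on each side is at most $\hat d_{j^*}(v)/k \le (8\Delta/q)/k = (\eps/3)\Delta/k$ by the $q$-divide property, which together with $\sum_j z_j^\pres \le (\eps/3)\Delta/k$ from \Cref{claim:boundZj} yields the claimed $2\eps/3\cdot\Delta/k$.

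The point you missed is that the ``bad'' quantity $\hat d_{j^*}(v)$ enters the bookkeeping already \emph{divided by $k$}, because it appears as the gap between two degree sums that are themselves divided by $k$. Your residual $\bigl|\,|R|/k - |R\cap D_i|\,\bigr|$ is an artifact of choosing $\pi_{\cdot,j^*}(v)/k$ as the per-slot target; with $\hat d_j(v)/k$ as the reference this term never appears, and no correlation argument or cross-phase amortization is needed.
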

\begin{proof}
	We first prove the first claim. The discrepancy (deviation from expectation) for a node $v$ comes from two sources: (a) slots with neighbors that got retracted; (b) other slots. We bound both separately.  Consider a vertex $v$ and fix a part $V_i, i\in [k]$. For the rest of the proof let $z_j=z^{\pres}_j(v)$. We partition the vertices in $V_i\cap N_{\pres}(v)$ according to the $q$ slots $N_1(v),\ldots N_q(v)$.  Due to \Cref{obs:oneRetraction} for at most one $j$ does $N_j(v)$ contain nodes whose values were retracted. Denote this $j$ (if any) by $j^*$, otherwise set $j^*=\bot$.
	\begin{claim} 
		\label{claim:discrepancyBucketRetracted}
		If $j^*\neq \bot$, then  
		$|V_i\cap N_{j^*}(v) \cap N_{\pres}(v)|\leq \hat{d}_{j^*}(v)/k +z_j$.
	\end{claim} 
	\begin{proof}
		If $v$ caused the retraction then, $N_{j^*}(v)\cap N_{\pres}(v)=\emptyset$, as $v$ retracted all assignments of nodes in $N_{j^*}(v)$ and froze the nodes (they will only be assigned in the post-shattering phase). Now consider the case that $v$ did not cause the retraction, i.e., $\mathcal{B}^{\pres}_j$ does not occur, and let $X_i$ be the nodes in part $i$ in the temporal assignment of nodes in slot $j$ before any retractions happened (also before the ones caused by nodes $u\neq v$). Since $\mathcal{B}^{\pres}_j$ does not occur, we have $|X_i\cap N_{j^*}|\in \hat{d}_{j^*}/k\pm z_j$. Some nodes of $X_i$ might get retracted by other nodes $u\neq v$, but we obtain $|V_i\cap N_{j^*}(v) \cap N_{\pres}(v)|\leq |X_i\cap N_{j^*}(v)| \leq \hat{d}_{j^*}(v)/k+z_j$.
	\end{proof}
	\begin{claim}
		\label{claim:discrepancyBucket}
		For each $j\notin [q]\setminus j^*$, we obtain $|V_i\cap N_{\pres}(v)\cap N_j(v)|=\hat{d}_j(v)/k\pm z_j$.
	\end{claim}
	\begin{proof}
		Since $j\neq j^*$ there are no retracted variables in $N_j(v)$. If the bound in the claim does not hold, then $\mathcal{B}^{\pres}_{j}(v)$ would have occurred after the sampling, and $v$ would have retracted, a contradiction. 
	\end{proof}
	\begin{claim}
		\label{claim:sumDegrees}
		$\sum_{j\in [q],j\neq j^*} \hat{d}_j(v)
		\leq d^{\pres}(v)
		\leq \sum_{j\in [q]} \hat{d}_j(v)$.
	\end{claim}
	In the following we omit the explicit dependence on $v$, e.g., we write $\hat{d}_{j}$ instead of $\hat{d}_{j}(v)$. 
	Using, $\sum_{j\in [q]}z_j(v)\leq \eps \Delta/(3k)$ (\Cref{claim:boundZj}),
	$\hat{d}_{j*}\leq d_{j*}\leq 8\Delta/q = \eps \Delta/3$ (from the properties of a $q$-divide), bounds on $|V_i\cap N_{\pres}(v)\cap N_j(v)|$ (\Cref{claim:discrepancyBucketRetracted,claim:discrepancyBucket}), and \Cref{claim:sumDegrees}
	we obtain
	\begin{align*}
		|V_i \cap N_{\pres}(v)|
		& \leq \sum_{j\in [q]}\parens*{\hat{d}_j/k+z_j} 
		\leq \parens{d^{\pres}+\hat{d}_{j^*}}/k+\sum_{j\in[q]} z_j
		\leq d^{\pres}/k + 2\eps \Delta/(3k) ~~~\text{and}
		\\
		|V_i \cap N_{\pres}(v)|
		& \geq \sum_{j\in [q], j\neq j^*}\parens*{\hat{d}_j/k-z_j}
		\geq \parens*{d^{\pres}-\hat{d}_{j^*}}/k- \eps \Delta/(3k)
		\geq d^{\pres}/k-2\eps \Delta/(3k)~.
	\end{align*}
	
	For the second part of the claim, fix again some $i\in [k]$ and a node $v$. There are $q$ separate post-shattering instances. Recall,  the set of neighbors of a node participating in the $j$-th instance  is denoted by $F_{j}(v)$ and  $f_j(v)=|F_{j}(v)|$. The solution to the LLL instance yields 
	\begin{align}
		|V_i\cap F_j(v)\cap N_{\post}(v)|=f_j(v)/k\pm z^{\post}_j(v).
	\end{align}
	Summing over all $q$ post-shattering instances, using $d_{\post}(v)=\sum_{j\in [q]}f_j(v)$ and using \Cref{claim:boundZj} to bound $\sum_{j\in [q]}z^{\post}_j(v)\leq \eps\Delta/3\leq \eps\Delta/2$ yields the second part of the claim. 
\end{proof}

\subsection{Analysis of Bad Event Probabilities}

Throughout our analysis of the pre-shattering and post-shattering parts of our algorithm, we consider random processes and events which are essentially always the same: nodes in some subgraph each pick a random bucket u.a.r.\ independently from other nodes, and for each node we analyze the probability that the number of neighbors that pick a given bucket deviates too much from expectation.
Recall, that we set $q=24/\eps$ and $z=\eps^2\Delta/(72k)$ earlier.

\begin{claim}
	\label{claim:deviationProb}
	Let $k$, $N$ be positive integers. Let $D$ be a sum of at most $N$ independent Bernouilli random variables of parameter $1/k$, and let $z \leq N/k$. Consider the event $\mathcal{B}$ that $D$ deviates from its expectation by more than $z$. $\Pr(\mathcal{B}) \leq 2e^{-z^2k/(3N)}$.
	
	In particular, for $N=\Delta$, $k\leq \eps^4\Delta/(2^{19}\ln \Delta)$ and $z=\eps^2\Delta/(72k)$ we obtain $\Pr(\mathcal{B})\leq  \Delta^{-24}$. If $D$ is a sum of only $N=8\Delta/q$ variables, $k\leq \eps^3\Delta/(2^{17}\ln \Delta)$ suffices for the same bound. 
\end{claim}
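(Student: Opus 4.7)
The plan is to derive the first (generic) bound as a direct application of the Chernoff corollary \Cref{cor:handychernoff}, and then verify by arithmetic that the two specific parameter regimes plug in to yield the claimed $\Delta^{-24}$ bound.

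For the generic bound, I would first note that $\Exp[D] \leq N/k$ since $D$ is a sum of at most $N$ Bernoulli$(1/k)$ variables. Setting $\mu := N/k$, the hypothesis $z \leq N/k = \mu$ lets us invoke \Cref{cor:handychernoff} directly with this $\mu$ and the given $z$, giving
\[
\Pr(\mathcal{B}) \;=\; \Pr\bigl[|D - \Exp[D]| \geq z\bigr] \;\leq\; 2\exp\!\bigl(-z^2/(3\mu)\bigr) \;=\; 2\exp\!\bigl(-z^2 k/(3N)\bigr),
\]
which is the stated inequality.

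For the first specialization, substitute $N = \Delta$ and $z = \eps^2\Delta/(72k)$. The side condition $z \leq N/k$ holds because $\eps^2/72 \leq 1$. The exponent becomes
\[
\frac{z^2 k}{3N} \;=\; \frac{\eps^4 \Delta^2}{72^2 k^2}\cdot\frac{k}{3\Delta} \;=\; \frac{\eps^4 \Delta}{3 \cdot 72^2 \cdot k}.
\]
Using the hypothesis $k \leq \eps^4 \Delta / (2^{19}\ln\Delta)$, this is at least $2^{19}\ln\Delta/(3\cdot 72^2) \geq 25\ln\Delta$ (since $2^{19}/(3\cdot 72^2) > 33$), hence $2\exp(-z^2k/(3N)) \leq 2\Delta^{-25} \leq \Delta^{-24}$ for $\Delta$ above any absolute constant.

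For the second specialization, substitute $N = 8\Delta/q = \eps\Delta/3$ (using $q = 24/\eps$); the condition $z \leq N/k$ now requires $\eps/72 \leq 1/3$, which holds for any reasonable $\eps$. The exponent becomes
\[
\frac{z^2 k}{3N} \;=\; \frac{\eps^4 \Delta^2}{72^2 k^2}\cdot\frac{k}{\eps\Delta} \;=\; \frac{\eps^3 \Delta}{72^2 \cdot k}.
\]
Plugging in $k \leq \eps^3 \Delta/(2^{17}\ln\Delta)$ gives an exponent of at least $2^{17}\ln\Delta/72^2 \geq 25\ln\Delta$ (since $2^{17}/72^2 > 25$), so again $\Pr(\mathcal{B}) \leq 2\Delta^{-25} \leq \Delta^{-24}$. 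No step here is a real obstacle; the main care point is just keeping track of the constants so that the concrete numerical thresholds $2^{19}$ and $2^{17}$ suffice, and verifying the precondition $z \leq \mu$ of \Cref{cor:handychernoff} in each regime.
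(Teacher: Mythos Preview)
Your proposal is correct and follows essentially the same approach as the paper: invoke \Cref{cor:handychernoff} with $\mu=N/k$ for the generic bound, then substitute the two parameter regimes and check the arithmetic. You are in fact slightly more careful than the paper's own proof, explicitly verifying the side condition $z\leq \mu$ in each case and tracking the leading factor of $2$.
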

Throughout this paper, $D$ is taken to be a sum of indicator random variables associated to a set of nodes. More precisely, for a subset of nodes in a neighborhood $N(v)$, we consider the sum of the random variable indicating whether each node chose a specific part $i$ out of $k$ choices. 
\begin{proof}
	The general bound on $\Pr(\mathcal{B})$ is from \cref{cor:handychernoff} (Chernoff bound).
	
	When $N=\Delta$, $k \leq \eps^4 \Delta /(2^{19} \ln \Delta)$ and $z = \eps^2 \Delta /(72k)$, $\exp(-z^2/(3N))$ simplifies to $\exp(-\eps^4 \Delta/(3\cdot 72^2 k)) \leq \Delta^{-24}$.
	When $N=8\Delta/q$ (recall $q=24/\eps)$, $k \leq \eps^3 \Delta /(2^{17} \ln \Delta)$ and the same $z$ as before, $\exp(-z^2k/(3N))$ simplifies to $\exp(-\eps^3 \Delta/(72^2 k)) \leq \Delta^{-24}$.
\end{proof}

\subsection{Analysis of \textsf{FastShattering}}
\label{ssec:fastShatteringAnalysis}
Next, we show that the post-shattering instances consist of small connected components. 
\begin{restatable}{lemma}{lemFastShatteringShatters}
	\label{lem:fastShatteringShatters}
	After  \textsf{FastShattering}, each connected component in each of the $q$ post-shattering instances is of size $\Delta^{10}\log n$, w.h.p.
\end{restatable}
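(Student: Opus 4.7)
The plan is to show that, for each $j\in[q]$, the graph $G[\Bad_j]$ has small connected components by applying the shattering lemma (Lemma \ref{lem:shattering}) to the slot-$j$ randomness, conditional on the history of earlier slots. Lemma \ref{lem:shattering} requires (i) a uniform upper bound of $\Delta^{-3c}$ on the marginal probability $\Pr[v\in\Bad_j]$, and (ii) that the events $\{v\in\Bad_j\}_v$ at pairwise graph distance greater than $2c$ be determined by disjoint sets of independent random variables; the target size $\Delta^{10}\log n$ suggests aiming for $c=5$.

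First I would bound $\Pr[v\in\Bad_j]$. By definition, $v\in\Bad_j$ exactly when some witness $w\in B_3(v)$ triggers the pre-shattering event $\mathcal{B}^{\pres}_j(w)$. The $q$-divide property guarantees $\hat{d}_j(w)\le d_j(w)\le 8\Delta/q$ irrespective of the freezing history, so conditional on any history of slots $1,\dots,j-1$ the slot-$j$ part choices are fresh independent uniform draws over $[k]$; Claim \ref{claim:deviationProb} (instantiated with $N=8\Delta/q$, using the hypothesis $k\le \cte\cdot \eps^4\Delta/\ln\Delta$) gives $\Pr[\mathcal{B}^{\pres}_{i,j}(w)\mid\mathrm{history}]\le \Delta^{-24}$ for each part $i$. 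Union-bounding over the $k\le\Delta$ parts and over $w\in B_3(v)$ then yields $\Pr[v\in\Bad_j\mid\mathrm{history}]\le \Delta^{-20}\le\Delta^{-15}$, which is the required threshold for $c=5$.

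For the dependency property, once the history at the start of slot $j$ is fixed, the set of live slot-$j$ neighbors of each $w$ is deterministic, and the only remaining randomness influencing $v\in\Bad_j$ is the slot-$j$ part choices made by nodes at distance at most $1$ from some $w\in B_3(v)$, i.e.\ at distance at most $4$ from $v$ (compare Figure \ref{fig:lllshattering}). Hence for any two nodes $u,v$ with graph distance greater than $2c=10$ the events $v\in\Bad_j$ and $u\in\Bad_j$ are determined by disjoint sets of fresh slot-$j$ variables. Applying Lemma \ref{lem:shattering} conditional on the history, with $c=5$ and a sufficiently large constant $c'$, shows that $G[\Bad_j]$ breaks into components of size at most $\Delta^{10}\log n$ with failure probability $n^{-\Omega(1)}$; as this bound is uniform over the history it holds unconditionally, and a final union bound over the $q=O(1/\eps)$ slots completes the argument. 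The main obstacle is the interplay between the slot-$j$ event and freezings caused in earlier slots: unfreezing propagates through a recursive chain if one tries to include everything in a single random-variable set. I would neutralize this by conditioning on the full history at the start of slot $j$, so that only the fresh slot-$j$ randomness enters the shattering analysis, while the Chernoff bound on the slot-$j$ deviation remains valid for any realization of the history.
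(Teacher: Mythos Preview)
Your approach is the same as the paper's: bound the probability that a node triggers via Claim~\ref{claim:deviationProb} with $N=8\Delta/q$, union-bound over nearby potential triggers, and invoke Lemma~\ref{lem:shattering} with $c=5$. Your explicit conditioning on the history of slots $1,\dots,j-1$ is a clean way to isolate the fresh slot-$j$ randomness; the paper leaves this implicit.

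There is one slip worth fixing. You analyze the components of $G[\Bad_j]$, but the lemma is about the components of the $j$-th post-shattering \emph{instance}, which consists of the frozen variable nodes in $\Bad_j$ together with all their event-node neighbors. Small components in $G[\Bad_j]$ do not directly bound components in $G[\Bad_j\cup N(\Bad_j)]$: two $G[\Bad_j]$-components can be merged through a shared non-frozen neighbor. The repair is immediate and is exactly what the paper does: take the ``bad'' set in Lemma~\ref{lem:shattering} to be $\{v : v\in\Bad_j \text{ or } v \text{ has a neighbor in }\Bad_j\}$. This shifts the witness radius from $3$ to $4$ and the random-variable radius from $4$ to $5$ (consistent with Figure~\ref{fig:lllshattering}), giving $\Pr[v\text{ in instance}]\le k\,\Delta^{4}\,\Delta^{-24}\le\Delta^{-19}\le\Delta^{-15}$, and the disjoint-randomness condition for $2c=10$ still holds.
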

\begin{proof}[Proof of \cref{lem:fastShatteringShatters}]
	Let us focus on one post-shattering instance, instance number $j$, formed of both the nodes in $\Bad_j$ that were frozen while processing slot $j$ and all their incident 'events nodes'. Let us say that a node $v$ \emph{triggers} if one of the events $\mathcal{B}_{i,j}^{\pres}$, $i\in [k]$ occurs, i.e., if $D_{i,j}(v)$ deviates too much from expectation. That a node triggers is entirely determined by the random choices of its neighbors.  By \Cref{claim:deviationProb} (applied with $N=8\Delta/q$, $z=z_j^{\pres}(v)=\eps^2\Delta/(72k)$, and $D_{i,j}(v)$), the probability that a node triggers is at most $\Delta^{-24}$. 
	A variable is \frozen if it is within distance $3$ of a triggering node.
	A node $v$ joins the post-shattering instance if it is \frozen itself or one of its neighbors is \frozen, which depends on whether nodes within distance $4$ of $v$ trigger or not, which itself is entirely determined by the random choices within distance $5$ of $v$. Thus, whether two nodes at distance $10$ participate in the $j$-th post-shattering instance depends on two sets of non-overlapping random variables from the processing of slot $j$. 
	
	By a union bound over the $\Delta^4$ nodes in the $4$-hop neighborhood and the $k$ parts, a node participates in the $j$-th post-shattering instance w.p.\ at most $k(\Delta^4) \Delta^{-24} \leq \Delta^{-19}$. By \cref{lem:shattering}, the resulting connected components of the post-shattering instance are all of size $O(\Delta^{10} \log_\Delta n)$, w.h.p.
\end{proof}

\subsection{\textsf{Post-shattering}}
In this section we show that the $q$ post-shattering instances are indeed LLLs.
\begin{lemma}
	\label{lem:postShatteringLLL}
	Each connected component in each of the $q=O(1/\eps)$ post-shattering instances forms an LLL with dependency degree $d'$, bad events' probabilities upper bounded by $p'$ such that the polynomial criterion $d'^8p'<1$ holds. In \LOCAL, the dependency graph can be simulated with $O(1)$ overhead in the communication network $G$. 
\end{lemma}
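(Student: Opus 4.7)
The plan is to check directly the three properties claimed: bounded dependency degree, small bad-event probability, and polynomial criterion, and then observe that the dependency graph embeds locally in $G$.

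\textbf{Dependency degree.} Fix a post-shattering instance indexed by $j\in[q]$. Its variables are the part choices of the nodes in $\Bad_j$, and its bad events are the $\mathcal{B}_j^{\post}(v)$ for nodes $v$ with at least one neighbor in $\Bad_j$. Two events $\mathcal{B}_j^{\post}(v)$ and $\mathcal{B}_j^{\post}(u)$ share a variable iff $u$ and $v$ have a common neighbor in $\Bad_j$, which forces $u$ and $v$ to be at distance at most $2$ in $G$. Hence the dependency degree satisfies $d'\leq \Delta^2$.

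\textbf{Bad event probability.} For fixed $v$ and $i\in[k]$, the random variable $F_{i,j}(v)$ is a sum of at most $f_j(v)\leq \Delta$ independent Bernoulli$(1/k)$ indicators. I would apply \Cref{claim:deviationProb} with $N=\Delta$ and $z=z_j^{\post}(v)=\eps^2\Delta/(72k)$; since we assume $k\leq \cte\cdot(\eps^4\Delta/\ln\Delta)$ with $\cte$ small enough to satisfy the hypothesis of the claim, this yields
\[
\Pr(\mathcal{B}_{i,j}^{\post}(v))\leq \Delta^{-24}.
\]
A union bound over the $k\leq \Delta$ parts gives $p'\leq \Pr(\mathcal{B}_j^{\post}(v))\leq k\Delta^{-24}\leq \Delta^{-23}$.

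\textbf{Polynomial criterion and simulation.} Plugging the bounds together,
\[
d'^{\,8}\cdot p' \leq (\Delta^2)^8\cdot \Delta^{-23}= \Delta^{-7}<1,
\]
so the polynomial criterion $d'^{\,8}p'<1$ is comfortably satisfied. Finally, since two dependent events live at distance at most $2$ in $G$, one round of communication on the dependency graph of the LLL can be simulated in $O(1)$ rounds of \LOCAL communication on $G$, completing the lemma.

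\textbf{Main obstacle.} There is no real obstacle: the heavy lifting has already been done by \Cref{claim:deviationProb}, so the only care needed is to verify that the constants in the hypothesis $k\leq \cte\cdot(\eps^4\Delta/\ln\Delta)$ match what the claim requires, and to apply it with $N=\Delta$ (the variant used in the pre-shattering analysis with $N=8\Delta/q$ is not needed here, since in the post-shattering phase a single event can depend on up to $\Delta$ frozen neighbors rather than just those in one slot).
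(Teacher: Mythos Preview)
Your proof is correct and follows essentially the same approach as the paper: bound the dependency degree by $\Delta^2$, apply \Cref{claim:deviationProb} with $N=\Delta$ and $z=\eps^2\Delta/(72k)$ to get $\Pr(\mathcal{B}_{i,j}^{\post}(v))\leq\Delta^{-24}$, union bound over the $k$ parts for $p'\leq\Delta^{-23}$, and verify the polynomial criterion. The paper records the slightly stronger $p'd'^{11}<1$, but your $d'^{8}p'<1$ is exactly what the lemma asks for, and your remark about using $N=\Delta$ rather than $N=8\Delta/q$ is a helpful clarification.
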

\begin{proof}
	Consider a post-shattering instance $j \in [q]$. The LLL is formally defined in \Cref{ssec:Algorithm}. Recall, that in the associated random process each node in $\Bad_j$ joins one of the  $k$ parts u.a.r.\ and that there is a bad event $\mathcal{B}^{\post}(v)=\bigvee_{i\in [k]}\mathcal{B}_i^{\post}(v)$ for each node with neighbor in $\Bad_j$. The  event $\mathcal{B}^{\post}(v)$ occurs if too many or too few neighbors join the $i$-th part. Thus, a bad event only depends on the randomness of adjacent nodes and the dependency degree is at most $d'=\Delta^2$. 
	
	By \cref{claim:deviationProb} (applied with $N=\Delta$, and $z=z_j^{\post}(v)=\eps^2\Delta/(72k)$), the probability that $\mathcal{B}_i^{\post}(v)$ holds is at most $\Delta^{-24}$ if $k\leq \eps^4\Delta/(2^{19}\log \Delta)$. With a union bound over all $k$ parts we obtain the upper bound $p'=k\Delta^{-24}=\Delta^{-23}$ for the probability of each bad event. 
	
	Hence, we obtain $p'd'^{11}<1$.
\end{proof}

\subsection{Proof of Theorem~\ref{thm:vertexSplitting}}

Assume $k\leq  \eps^{4}\Delta/(2^{19}\log \Delta)$ and recall that $\Delta\leq \poly\log n$. The runtime of \textsf{Fast-Shattering} is linear in the number of slots, i.e., $O(q)=O(1/\eps)$. 
Next, we show that the post-shattering instances meet the requirements of \Cref{lem:LOCALpostshattering}. Due to \Cref{lem:fastShatteringShatters} each connected component is of size $\poly(\Delta)\log n=\poly\log n$, w.h.p. Further, due to \Cref{lem:postShatteringLLL} each such component forms an LLL with polynomial criterion and the dependency graph can be simulated with $O(1)$ overhead in the communication network $G$. Thus, we can apply \Cref{lem:LOCALpostshattering} (in parallel for all $q$ instances) and obtain a runtime of $\poly\log\log n$ rounds for the post-shattering phase. 
\Cref{lem:discrepancy} shows that the deviation of $|N(v)\cap V_i|$ from $d(v)/k$ is upper bounded by  $\eps \Delta/k$ for any node $v\in V$.

\section{Vertex Splitting in \CONGEST}
\label{sec:splittingCongest}
We obtain the following theorem for vertex splitting and bipartite vertex splitting. 

\begin{restatable}{theorem}{thmCONGESTSplitting}
	\label{thm:CONGESTsplittingBoth}
	There exists a universal constant $\inccte>0$ s.t.:
	For 
	$\eps>0$,  $\Delta\leq \poly\log n$, and 
	$k\leq  \cte \cdot(\eps^{4} \Delta/(\ln \Delta\log^2\log n))$,
	there are distributed \CONGEST algorithms to solve the $k$-vertex splitting problem with discrepancy $\eps\Delta/k$  and to solve the bipartite $k$-vertex splitting problem with discrepancy $\eps\DeltaL/k$ in  $O(1/\eps)+\poly \log\log n$ rounds. 
\end{restatable}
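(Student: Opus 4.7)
The plan is to reuse the \textsf{FastShattering} pre-shattering from the proof of \Cref{thm:vertexSplitting} essentially verbatim: all of its operations are local distance-$O(1)$ samplings and threshold checks, so the pre-shattering runs in $O(1/\eps)$ \CONGEST rounds. Its only non-trivial subroutine is the $q$-divide of \Cref{thm:qdivide}; that algorithm's pre-shattering is also purely local, and its post-shattering components are small polynomial-criterion LLL instances (\Cref{lem:qdivideShattering,lem:qdividePostShatteringLLL}), so the same \CONGEST post-shattering machinery we construct below discharges it as well. After \textsf{FastShattering}, \Cref{lem:fastShatteringShatters,lem:postShatteringLLL} guarantee that each of the $q=O(1/\eps)$ post-shattering instances decomposes into connected components of size $N=\poly(\Delta)\log n=\poly\log n$, each forming an LLL with dependency degree $d'\le \Delta^2\le \poly\log n$ and polynomial criterion.

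The main obstacle is completing these post-shattering LLL instances in \CONGEST, where the derandomization of \Cref{lem:LOCALpostshattering} is no longer available because it relies on bandwidth-hungry generic tools. We instead follow the ``successive relaxations'' blueprint sketched in the introduction and formalized by \Cref{lem:CONGESTpostshatteringSmallDegree}. For each shattered component of size $N=\poly\log n$, we first compute a Rozhon--Ghaffari network decomposition into $L=O(\log N)=O(\log\log n)$ disjoint cluster groups whose clusters have diameter $\poly\log\log n$; on components of size $\poly\log n$ this decomposition can be built in $\poly\log\log n$ \CONGEST rounds.

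We then process the $L$ cluster groups sequentially. For group $\ell$, each cluster faces a residual LLL: assign its yet-unfixed variables so that no bad event still depending on them occurs, conditional on the variables already committed by earlier groups. Within each cluster of a group we run $\Theta(\log n)$ independent copies of the Moser--Tardos process in parallel, tagging messages by instance index so all copies fit into the bandwidth budget \cite{MT20}. Since a cluster has diameter $\poly\log\log n$ and the LLL has polynomial criterion with $d'\le \poly\log n$, each copy completes in $\poly\log\log n$ rounds and succeeds with constant probability; at least one copy thus succeeds w.h.p., and we commit its assignment before moving to the next group.

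The price paid for this incremental approach is that, to guarantee feasibility of each residual LLL, we must reserve discrepancy budget for each of the $L$ cluster groups: instead of aiming at the per-instance discrepancy $z^{\post}_j=\eps^2\Delta/(72k)$ directly, each group aims at a tightened discrepancy $\eps^2\Delta/(72kL)$, so that their accumulation over $L$ groups still fits into $z^{\post}_j$ and \Cref{lem:discrepancy}'s bookkeeping applies unchanged. Re-running \Cref{claim:deviationProb} with $z$ replaced by $z/L$ shrinks the Chernoff exponent by a factor of $L^2$, which forces $k\le c\cdot\eps^4\Delta/(L^2\ln\Delta)=c\cdot \eps^4\Delta/(\ln\Delta\cdot \log^2\log n)$, exactly matching the hypothesis. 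The bipartite version of the statement follows along identical lines using the $\VL/\VR$ event/variable structure of \Cref{sec:bipartiteSplitting}, whose per-bad-event probability bounds are structurally identical.
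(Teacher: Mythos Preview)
Your proposal is correct and follows essentially the same approach as the paper: pre-shattering runs unchanged in \CONGEST, and the post-shattering is handled on each $\poly\log n$-sized component by a network decomposition into $L=O(\log\log n)$ cluster groups, processed sequentially with $\Theta(\log n)$ parallel Moser--Tardos (or \cite{CPS14}) runs per cluster and a per-group discrepancy budget of $z/L$, which is exactly what produces the extra $\log^2\log n$ factor in the bound on $k$. One small correction: the blueprint you are describing is the one the paper encapsulates as \Cref{lem:CONGESTpostshattering}, not \Cref{lem:CONGESTpostshatteringSmallDegree} (the latter gathers full cluster topology and only works when $d\le\poly\log\log n$, which is not the regime here); also, the paper handles the $q$-divide's post-shattering by invoking \Cref{lem:CONGESTpostshattering} with $\eps=1$ and $k=q$, which is the concrete instantiation of your ``same machinery'' remark.
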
 

\Cref{thm:CONGESTsplittingBoth} requires $k$ to be a $O(\log^2\log n)$ factor smaller than in  \Cref{thm:vertexSplitting,thm:bipartiteSplitting}.

As the pre-shattering phase of \Cref{thm:vertexSplitting,thm:bipartiteSplitting} immediately works in the \CONGEST model, the main challenge to prove \Cref{thm:CONGESTsplittingBoth} is to design a new post-shattering method. 
Recall, the post-shattering phase in the \LOCAL model, i.e, the core steps of \Cref{lem:LOCALpostshattering}. Each connected component in the post-shattering phase forms an LLL with a polynomial criterion and has $N=\poly(\Delta)\cdot \log n=(\poly\log\log n)\cdot\log n$ nodes. This small size allows to compute a network decomposition (see \Cref{ssec:networkDecomp}) with $\poly\log\log n$ cluster diameter and $O(\log\log n )$ color classes with distance $s=\Omega(\log N)=\Omega(\log \log n)$ between clusters of the same color.  The latter is sufficient to derandomize the $O(\log N)=O(\log\log n)\ll s$ round LLL algorithm from \cite{CPS14}. The details of the derandomization are not important, but it is based on \emph{gathering all information in the cluster and close-by nodes}. In the \LOCAL model, this can be done in time that is linear in the cluster diameter, i.e., in $\poly\log\log n$ rounds. One can show that in the \CONGEST model all information of a cluster can be encoded with $N\cdot \poly\log \log n$ bits. By using a pipelining argument (\Cref{cor:treeAggregationBetter})  and that the \textsf{bandwidth} of the \CONGEST model is $\Theta(\log n)$ bits, one can aggregate all of this information at a cluster leader in  $N\cdot \poly\log\log n/\textsf{bandwidth} +\textsf{clusterdiameter}$ rounds, as done in \cite{MU21}. 
For $\Delta=\poly\log\log n$, we obtain $N=\log n\cdot \poly\log\log n$ and this method runs in $\poly\log\log n$ rounds. 
In summary, we obtain the following theorem\footnote{The proof of  \Cref{lem:CONGESTpostshatteringSmallDegree} appears in \Cref{app:postshattering}. It is similar to an \CONGEST LLL algorithm in \cite{MU21}  for instances of size $N=O(\log n)$ and the case of $d=O(1)$. In fact, following all dependencies on $d$ (and a slightly increased $N$) in the proof of \cite{MU21} yields an algorithm with runtime $\poly (d, \log\log n)$, which yields the desired runtime whenever $d=\poly\log \log n$. }  and the corollary thereafter.

\begin{restatable}[\cite{MU21}]{lemma}{CONGESTpostshatteringSmallDegree}
	\label{lem:CONGESTpostshatteringSmallDegree}
	There is a randomized \CONGEST algorithm with $\textsf{bandwidth}=\Theta(\log n)$ 
	for LLL instances of size $N\leq \log n\cdot \poly\log\log n$, dependency degree $d\leq \poly\log\log n$ and error probability $p<d^{-4}$, that runs in $\poly \log \log n$ rounds.
	
	The algorithm works with an ID space that is exponential in $N$ and is correct w.h.p in $n$. 
\end{restatable}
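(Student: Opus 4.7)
The plan is to follow the Maus--Uitto \cite{MU21} CONGEST LLL algorithm and retrace its analysis while tracking how every occurrence of the dependency degree $d$ and the instance size $N$ propagates through the bounds. The overall architecture has three layers: (i) compute a network decomposition of the dependency graph with weak cluster diameter $D = \poly\log N = \poly\log\log n$ and $C = O(\log N) = O(\log\log n)$ color classes, where clusters of the same color sit at pairwise distance at least $s = \Omega(\log N)$; (ii) iterate sequentially over the $C$ color classes and, within each color class, in parallel for every cluster, solve the LLL restricted to that cluster; (iii) within each cluster, elect a BFS-root leader, aggregate all cluster information at it, derandomize a Moser--Tardos/CPS14-style algorithm locally, and broadcast the resulting assignment back.

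The main technical step is the bandwidth-efficient intra-cluster aggregation. To handle the exponential ID space, I would first run Linial's coloring (adapted to the post-shattering subinstance) to recolor the nodes with $O(\log N) = O(\log\log n)$-bit labels. Each event then admits an $O(d \log N) = \poly\log\log n$-bit description: its own label plus those of its at most $d$ dependent variables, with variable domains of size at most $\poly(N)$. A cluster holds at most $N \le \log n \cdot \poly\log\log n$ events, so the total information to ship to the leader is $O(\log n \cdot \poly\log\log n)$ bits. By pipelined tree aggregation (\Cref{cor:treeAggregationBetter}) along a BFS tree of depth $D$, this completes in $O\!\bigl((\log n \cdot \poly\log\log n)/\log n + D\bigr) = \poly\log\log n$ rounds, and the reverse broadcast is symmetric. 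The leader derandomizes internally using the slack afforded by $p < d^{-4}$, which is more than enough for either the polynomial-criterion algorithm of CPS14 or the standard pessimistic-estimator derandomization of Moser--Tardos to succeed. Summing over the $C = O(\log\log n)$ color classes keeps the total complexity within $\poly\log\log n$.

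The principal obstacle --- and the reason one cannot simply invoke \cite{MU21} as a black box --- is that their statement is tailored to $d = O(1)$ and $N = O(\log n)$, whereas here both parameters blow up to $\poly\log\log n$ (for $d$) and $\log n \cdot \poly\log\log n$ (for $N$). One has to verify that every factor of $d$ appearing in the argument (in the dependency-graph simulation, in the size of a node's local state, in the arity of the pessimistic estimator, in the cost of evaluating one conditional expectation, etc.) stays polynomially bounded in $d$, and that every occurrence of $\log n$ arising from the underlying LLL runtime and network-decomposition depth can be replaced by $\log N = O(\log\log n)$. Because all of these dependencies are polynomial and $d = \poly\log\log n$, the resulting complexity is $\poly(d, \log\log n) = \poly\log\log n$. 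The high-probability guarantee in $n$ follows because each cluster fails with probability at most $N^{-\Omega(1)}$ and there are at most $n$ clusters in total.
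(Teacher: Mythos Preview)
Your proposal is essentially the paper's own proof: compute a network decomposition of the instance with $O(\log\log n)$ color classes and $\poly\log\log n$ weak diameter, iterate over the color classes, gather the full cluster topology plus already-fixed random bits at a leader via pipelining (after relabeling nodes with $O(\log N)$-bit IDs), and derandomize the CPS14 algorithm there by the method of conditional expectation; the only observation needed is that every dependency on $d$ and $N$ in \cite{MU21} is polynomial, so the bound stays $\poly(d,\log\log n) = \poly\log\log n$.

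One minor slip worth fixing: your closing justification for the w.h.p.\ guarantee does not work as written---a per-cluster failure probability of $N^{-\Omega(1)} = (\log n)^{-\Omega(1)}$ union-bounded over up to $n$ clusters is not small in $n$. This is moot, however, because once the random bits are fixed by the method of conditional expectation the procedure is deterministic (as the paper's proof sketch notes explicitly), so there is no per-cluster failure at all and the ``w.h.p.'' clause is satisfied vacuously.
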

\begin{corollary}
	\label{cor:LLLCONNGESTsmalldegree}
	There is a randomized \CONGEST algorithms for LLL with error probability $p$, dependency degree $d$ and criterion $p<d^{-8}$ that uses $\poly\log\log n$ rounds, whenever $d\leq \poly\log\log n$. Here, the dependency graph is also the communication network. 
\end{corollary}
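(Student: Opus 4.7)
The plan is to apply the standard shattering template---as used in \Cref{thm:qdivide} and \Cref{thm:vertexSplitting}---but substituting \Cref{lem:CONGESTpostshatteringSmallDegree} for the \LOCAL post-shattering subroutine \Cref{lem:LOCALpostshattering}.

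First I would execute a one-round pre-shattering: each variable samples a value from its distribution. An event is \emph{triggered} if the sample causes it to occur; a variable is \emph{bad} if it belongs to a triggered event. Declare an event $Y$ to join the post-shattering instance if any of its variables is bad, equivalently if some event at dependency-graph distance at most $1$ from $Y$ is triggered. A union bound over the at most $d+1$ events in that neighborhood gives $\Pr[Y \text{ joins}] \leq (d+1)p \leq 2d^{-7}$, and this event depends only on random choices of variables within dependency-graph distance $2$ of $Y$. Since in the setting of the corollary the dependency graph coincides with the communication network $G$, I would invoke \Cref{lem:shattering} with the dependency degree $d$ playing the role of $\Delta$ and a suitable constant $c$, to conclude that every connected component in the subgraph of events joining the post-shattering phase has size $\poly(d)\log n \leq \poly\log\log n \cdot \log n$, w.h.p.

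Next I would solve each such component $C$ in parallel. The variables of events in $C$ are resampled, while variables outside $C$ retain their original safely-sampled values. Conditioning on those boundary values blows up each event's probability by at most a $d^{O(1)}$ factor, so the conditional error probability is still at most $p\cdot d^{O(1)} < d^{-4}$; this is precisely why the criterion is chosen as $p<d^{-8}$ rather than $p<d^{-4}$. Each component is then a legal LLL instance of size $N \le \log n\cdot \poly\log\log n$, dependency degree at most $d \le \poly\log\log n$, and error probability less than $d^{-4}$, fitting the hypotheses of \Cref{lem:CONGESTpostshatteringSmallDegree}. Since components are vertex-disjoint in the communication network, running the subroutine in parallel on all components finishes in $\poly\log\log n$ rounds, which combined with the one-round pre-shattering gives the claimed complexity.

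The main obstacle is the bookkeeping around the conditioning on safely-sampled boundary variables when passing to the sub-LLL on each component: one must verify that the $d^{-8}$ criterion has enough slack to absorb a $d^{O(1)}$ factor and still meet the $p<d^{-4}$ hypothesis of the \CONGEST subroutine. Once that slack is accounted for, the rest is a direct composition of shattering with \Cref{lem:CONGESTpostshatteringSmallDegree}.
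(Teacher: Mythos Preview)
Your overall structure---shatter, then apply \Cref{lem:CONGESTpostshatteringSmallDegree} on the small components---matches the paper, which simply cites the shattering framework of \cite{FG17} to obtain residual LLL instances of size $N=\log n\cdot\poly d$ with criterion $p'<d^{-4}$ and then invokes \Cref{lem:CONGESTpostshatteringSmallDegree}. The gap is in your one-round pre-shattering. The claim that ``conditioning on those boundary values blows up each event's probability by at most a $d^{O(1)}$ factor'' is false for general LLLs: take an event $Y$ that holds iff all of its $d$ binary variables equal $1$, so $\Pr[Y]=2^{-d}$; if all but one of them are fixed to $1$ the conditional probability is $1/2$, a blowup of $2^{d-1}$. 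This is exactly the ``retraction'' obstacle discussed in the paper's introduction. With a single global sample you have no control over how much of an event's probability mass is already committed by the variables you keep fixed; and if instead you resample \emph{all} variables of events in $C$, you risk re-triggering an event outside the post-shattering set that shares a (non-bad) variable with some event in $C$.

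The fix, which the paper invokes by citing \cite{FG17}, is to sample variables \emph{gradually}: a distance-$2$ coloring of the dependency graph with $O(d^2)$ colors lets one reveal at most one variable per event per step, freezing the remaining variables of any event whose conditional probability crosses a fixed threshold. This monitoring is what guarantees the residual instances still satisfy $p'<d^{-4}$. The pre-shattering then costs $O(d^2)=\poly\log\log n$ rounds under the hypothesis $d\le\poly\log\log n$, so the final bound is unchanged---but the gradual-sampling mechanism is essential and cannot be replaced by a single sampling round.
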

\begin{proof}
	The shattering framework of \cite{FG17}, w.h.p.,  reduces to the LLL problem to LLL problems with error probability $p'$, the same dependency degree $d$ and criterion $p'<d^{-4}$ on instances of size $N=\log n\cdot \poly d$. These can be solved in $\poly\log\log n$ rounds  via  \Cref{lem:CONGESTpostshatteringSmallDegree}.
\end{proof}
For $\Delta\gg \poly\log\log n$, any such \emph{gather all information} approach inherently requires significantly larger runtimes.  
The main ingredient for \Cref{thm:CONGESTsplittingBoth} is a new method for solving  the vertex splitting instances in the post-shattering phase,  that can deal with degrees as large as $\Delta=\poly\log\log n$ while using only $\poly\log \log n$ rounds. 
We prove the following theorem.

\begin{restatable}{lemma}{CONGESTpostshattering}
	\label{lem:CONGESTpostshattering}
	There exists a universal constant $\inccte>0$ s.t.:
	For any $\eps>0$ and any 
	$k \leq \cte \cdot (\eps^2\Delta/(\log\Delta\log^2\log n))$,
	there is a $\poly\log\log n$-round randomized \CONGEST algorithm with $bandwidth=\Theta(\log n)$ 
	that computes a $k$-vertex splitting with discrepancy $\eps\Delta/k$ on instances of size $N\leq \poly\log n$.

	The algorithm works with an ID space that is exponential in $N$ and is correct w.h.p in $n$. 
\end{restatable}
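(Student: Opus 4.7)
I will implement the ``successive relaxations along a network decomposition'' strategy sketched in \Cref{sec:splittingCongest}. First, compute a Rozhon--Ghaffari network decomposition of the instance graph of size $N \le \poly\log n$, obtaining $\ell = O(\log N) = O(\log\log n)$ color classes of clusters of diameter $D = \poly\log\log n$, with same-colored clusters pairwise at distance $\ge 3$. This runs in $\poly\log\log n$ \CONGEST rounds. Then split the total discrepancy budget $\eps\Delta/k$ evenly among the $\ell$ classes, allocating $z = \eps\Delta/(k\ell)$ per class.

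Process the classes sequentially. In phase $i$, every cluster $C$ of color $i$ simultaneously extends the current partial assignment by sampling its vertices into $[k]$ so that, for every node $v$ with neighbors in $C$, the count of neighbors of $v$ in $C$ landing in any one part is $|N(v) \cap C|/k \pm z$. Because any node has neighbors in at most one cluster per color class (using the distance-$3$ separation), the cluster-wise deviations telescope to at most $\ell z = \eps\Delta/k$ once all classes are processed; since the per-cluster expectations sum to $d(v)/k$, this is exactly the required discrepancy bound.

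Each per-cluster subproblem is an LLL with bad-event probability at most $\exp(-\Omega(z^2 k/\Delta))$ (Chernoff, as in \Cref{claim:deviationProb}) and dependency degree $\le \Delta^2$, so the assumption $k \le c\,\eps^2 \Delta/(\log\Delta \cdot \log^2\log n)$ makes the polynomial criterion $p \cdot d^{O(1)} < 1$ hold with slack. I solve each such cluster LLL in \CONGEST via Ghaffari's parallel-trials technique: each cluster runs $T = \Theta(\log n)$ independent copies of distributed Moser--Tardos \cite{CPS14} for $O(\log N) = O(\log\log n)$ rounds each. Since every copy succeeds with at least constant probability, at least one yields a valid assignment w.h.p. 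A single MT copy communicates only the current $O(\log k) = O(\log\log n)$-bit sample per edge per round, so the $T$ copies pipeline within the $\Theta(\log n)$-bit \CONGEST bandwidth at a multiplicative slowdown of $O(\log\log n)$, keeping each phase within $\poly\log\log n$ rounds. Identifying a globally successful trial is done by a bitwise-AND reduction of each node's $T$-bit ``ok''-vector along a spanning tree of $C$, which fits in $\Theta(\log n)$ bandwidth and $O(D) = \poly\log\log n$ rounds; the leader then broadcasts the index of a successful trial, and each node commits its corresponding sample.

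The main obstacle is the simultaneous balancing of three slack factors: the per-class discrepancy $z = \eps\Delta/(k\ell)$ appears squared in the Chernoff exponent, so $\ell = \Theta(\log\log n)$ relaxations force the stated $\log^2\log n$ loss on top of the $\log\Delta$ factor inherent to polynomial LLL. A secondary subtlety is bandwidth: the gather-and-derandomize approach behind \Cref{lem:CONGESTpostshatteringSmallDegree} fails here because a single cluster may contain $\poly\log n$ vertices with dependency degree up to $\Delta^2 = \poly\log n$. The argument must therefore exploit MT's locality --- each trial stores only $O(\log k)$ bits per node and per round communicates only the newly resampled bits --- which is exactly what makes $\Theta(\log n)$-fold pipelining compatible with $\Theta(\log n)$-bit \CONGEST messages.
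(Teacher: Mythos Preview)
Your proposal is correct and matches the paper's proof essentially step for step: a distance-$3$ network decomposition of the size-$N$ instance into $O(\log\log n)$ color classes, the discrepancy budget split evenly across classes, each per-cluster subproblem cast as a polynomial-criterion LLL (via the same Chernoff computation that produces the $\log^2\log n$ loss), and each such LLL solved by running $\Theta(\log n)$ parallel copies of the \cite{CPS14} algorithm with a bitwise-\textsf{AND} to select a successful trial, exploiting that each copy communicates only $O(\log k)$ bits per edge per round. The paper uses the decomposition of \Cref{thm:networkDecomp} rather than invoking Rozhon--Ghaffari directly, but this is the same object and your proof goes through as written.
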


The proof of \Cref{lem:CONGESTpostshattering} uses  network decompositions that we introduce in \Cref{ssec:networkDecomp}, before proving the lemma in \Cref{ssec:CONGESTpostshattering}. In \Cref{ssec:proofthmCONGESTsplitting}, we prove \Cref{thm:CONGESTsplittingBoth}.

\subsection{Network Decomposition}
\label{ssec:networkDecomp}
A \emph{weak distance-$s$ $(C,\beta)$-network decomposition with congestion $\kappa$} is a partition of the vertex set of a graph into clusters $\cC_1,\ldots,\cC_p$ of (weak) diameter $\leq \beta$ , together with a color from $[C]$ assigned to each cluster such that clusters with the same color are further than $s$ hops apart. Additionally,  each cluster has a communication backbone, a Steiner tree of radius $\leq \beta$, and each edge of $G$ is used in at most $\kappa$ backbones. For additional information on such decompositions we refer the reader to \cite{MU21,GGR20}. For the sake of our proofs we only require that such decompositions can be computed efficiently (\Cref{thm:networkDecomp}) and that one can efficiently aggregate information in all clusters of the same color in parallel in time that is essentially proportional to the diameter $\beta$ (see \Cref{cor:treeAggregationBetter} in \cref{app:networkDecompositionRouting} for the precise statement). 

\begin{theorem}[\cite{MU21}]
	\label{thm:networkDecomp}
	For any constant $C>0$ and $s\in \poly\log\log n$, there is a deterministic \CONGEST algorithm with bandwidth $b$ that, given a graph $G$ with at most $n$ nodes and unique $b$-bit IDs from an exponential ID space, computes  a weak $(C\log n, O(s/C\cdot \log^3 n))$-network decomposition with cluster distance $s$ and congestion $O(s\cdot \log^2 n)$ in $O(\log^7 n\cdot s^2)$ rounds.
\end{theorem}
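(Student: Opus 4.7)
The plan is to adapt the Rozhon--Ghaffari (RG) network decomposition algorithm, which in \LOCAL produces a distance-$1$ $(O(\log n), O(\log^3 n))$ decomposition in $O(\log^7 n)$ rounds, so that (i) clusters of the same color end up at pairwise distance at least $s$ in $G$, and (ii) the entire procedure runs in \CONGEST with only a controlled blow-up in both running time and edge congestion.

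I would first recall the structure of the RG procedure: it processes $C\log n$ color classes sequentially, and for each class it runs $O(\log n)$ phases of ``ball-growing''/``token-passing'' that are each realized by $O(\log^3 n)$ rounds of communication along Steiner trees of the current clusters. To enforce cluster distance $s$ between same-color clusters, I would replace every abstract one-hop step by an $s$-hop step in $G$: when a node would communicate with a direct neighbor in the abstract algorithm, it instead communicates with its entire $s$-hop ball. Each such virtual step is simulated in $s$ real \CONGEST rounds. Consequently, Steiner trees of diameter $O(\log^3 n)$ in the abstract algorithm become Steiner trees in $G$ of diameter $O(s\cdot \log^3 n/C)$, matching the stated cluster diameter, and same-color clusters end up separated by at least $s$ hops by construction.

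To respect the \CONGEST bandwidth, I would use a pipelined tree-aggregation primitive (the one stated in \Cref{cor:treeAggregationBetter}) along each of these Steiner trees, which propagates a message of total size $M$ bits along a tree of diameter $\beta$ in $O(M/\textsf{bandwidth} + \beta)$ rounds. Since each RG phase requires only $\poly\log n$ bits of information per cluster and the number of phases is $O(\log n)$ per color times $O(\log n)$ colors, the aggregation-induced running time stays within $\poly\log n$ even after multiplication by the $s$-hop simulation overhead. Summing the contributions gives $O(\log^7 n\cdot s^2)$ rounds overall: $O(\log^7 n)$ from the \LOCAL baseline, a factor of $s$ from simulating every hop over $s$ real edges, and a further factor of $s$ coming from the congestion analysis.

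The main technical obstacle is showing that the congestion is at most $O(s\cdot \log^2 n)$. I would argue as follows: in the original RG analysis each edge of $G$ can be charged to at most $O(\log^2 n)$ Steiner-tree usages across the lifetime of the algorithm, essentially one per level of the nested ball-growing recursion. When an abstract Steiner-tree edge is expanded into a path of length up to $s$ in $G$, each such usage contributes at most $s$ to the load on any particular graph edge, giving total congestion $O(s\cdot \log^2 n)$. Making this charging precise while simultaneously keeping the cluster diameter within $O(s/C\cdot\log^3 n)$ and ensuring that the pipelined aggregation fits inside the $\poly\log n\cdot s^2$ budget is the delicate part; once it is established, the theorem follows by combining the $s$-hop simulation of RG with \Cref{cor:treeAggregationBetter}.
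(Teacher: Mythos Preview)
The paper does not prove this statement at all: \Cref{thm:networkDecomp} is quoted verbatim as a black-box result from \cite{MU21}, with no accompanying proof or proof sketch in the present paper. There is therefore no ``paper's own proof'' to compare your proposal against.

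Your sketch is a reasonable high-level outline of the kind of argument that underlies the result in \cite{MU21} (adapting the Rozhon--Ghaffari decomposition to enforce distance-$s$ separation by simulating each abstract hop over $s$ real hops, and then controlling congestion and diameter accordingly). However, several of the quantitative claims you make are asserted rather than derived: the precise $O(\log^7 n \cdot s^2)$ round count, the $O(s\cdot \log^2 n)$ congestion bound, and the $O(s/C \cdot \log^3 n)$ diameter all require the detailed internal accounting of the RG/MU21 construction, which you do not reproduce. In particular, your explanation of the second factor of $s$ in the running time (``a further factor of $s$ coming from the congestion analysis'') is vague, and the charging argument for congestion is only gestured at. If you intend this as a self-contained proof rather than a pointer to \cite{MU21}, these steps would need to be filled in; as it stands, the proposal is best read as a plausibility sketch rather than a proof.
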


\subsection{Efficient Post-shattering in \CONGEST (Proof of \texorpdfstring{\Cref{lem:CONGESTpostshattering}}{Lemma~\ref{lem:CONGESTpostshattering}})}
\label{ssec:CONGESTpostshattering}

In order to devise an efficient \CONGEST post-shattering algorithm, we decompose each small component into small  clusters via the network decomposition algorithm from \Cref{thm:networkDecomp}. Then, the objective is to iterate through the color classes of the decomposition and when processing a cluster we want to assign all nodes in that cluster to a part. When doing so we ensure that each node of the graph obtains a discrepancy of at most $(\eps/Q)\Delta/k$ in each iteration. Hence, over the $Q$ iterations, each node's discrepancy adds up to at most $\eps\Delta/k$.

\begin{proof}[Proof of \Cref{lem:CONGESTpostshattering}]
	
	First, compute a distance-$3$ network decomposition of the graph with $Q=2\log\log n$ colors via \Cref{thm:networkDecomp}. Then, iterate through the color classes of the network decomposition, processing all clusters of a color class as it gets considered.

	When processing a cluster $\cC$, we set up a new instance of the vertex splitting problem as follows: Let $V^{\Lmark,\cC}=N(\cC)$ be all nodes that have a neighbor in $\cC$; $V^{\Lmark,\cC}$ may contain many nodes of $\cC$ itself. Each node of $\cC$ is supposed to join one one the parts $V^{\cC}_1,\ldots,V^{\cC}_k$ such that for each $i\in [k]$ each node in $v\in V^{\Lmark,\cC}$ has $d_{\cC}(v)/k\pm \eps/Q \cdot \Delta/k$ neighbors in $V^{\cC}_i$. After processing all clusters we set $V_i=\bigcup_{\text{cluster}\  \cC}V^{\cC}_i$. As clusters processed at the same time are in the same color class, they have distance-$3$, and no node has neighbors in more than one simultaneously processed cluster. Hence, the deviation of the number of neighbors into one $V_i$ from $d(v)/k$ is bounded by $Q\cdot \eps/Q \cdot \Delta/k=\eps\Delta/k$. 
	
	The bounds on $k$ and $Q$ imply that the problem that we solve when processing one cluster is an LLL $\cL_{\cC}$ with a polynomial criterion: Variables and the random process are given by the nodes of $\cC$ choosing one of the parts $V^{\cC}_1, \ldots, V^{\cC}_k$ uniformly at random. For a node $v\in V^{\Lmark,\cC}$ introduce a bad event $\cB^{\cC}_v$ that holds if for any $i\in [k]$ node $v$ does not have $d_{\cC}(v)/k\pm \eps/Q\cdot \Delta/k$ neighbors in part $V^{\cC}_i$. Due to the distance between clusters no node can have a bad event for more than one of the simultaneously processed clusters. 
	
	Due to  \Cref{claim:deviationProb} (applied with $N=\Delta$, $z=\eps/Q \cdot \Delta/k$), we obtain that $\Pr(\cB^{\cC}_v)\leq k\cdot  \exp(-\eps^2\Delta / (3Q^2k))$. Plugging in $Q = 2 \log \log n$ and $k \leq C \eps^2 \Delta / (\log \Delta \log^2 \log n)$, we get $\Pr(\cB^{\cC}_v)\leq k\cdot  \exp(-\log \Delta / (12 C)) \leq \Delta^{-19}$ for $C \leq 2^{-8}$.
	As the dependency degree is at most $\Delta^2$, we obtain an LLL with a polynomial criterion of exponent  $9$.

	The goal is to assign all nodes of $\cC$ to a part such that   all bad events $\cB^{\cC}(v)$ for $v\in V^{\Lmark,\cC}$ are  avoided. In order to do so,  we run $\ell=6\log n$ parallel instances of the LLL algorithm of \cite{CPS14} on $\cL_{\cC}$, each running for $O(\log N)=O(\log \log n)$ rounds. At the end of the proof we reason that these $\ell$ instances can indeed be run efficiently in parallel, for now, we continue with the remaining steps of the algorithm. We say that an instance is \emph{correct} for an event of $\cL_{\cC}$ if it is avoided under the computed assignment of the instance. By the properties of the algorithm of \cite{CPS14}, each instance is correct for all events of $\cL_{\cC}$ with probability $\geq (1-1/N)\geq 1/2$. Hence, with probability $1-1/2^{\ell}=1-1/n^6$ one of the $\ell$  instances is \emph{correct} for all events of $\cL_{\cC}$.  Then, each node holding an event of $\cL_{\cC}$ determines which instances are correct, and the nodes agree on a \emph{winning} instance, i.e., one that is correct for all of them. 
	
	Assume that nodes know in which instance their bad events are avoided. Then, agreeing on a winning instance can be done efficiently as follows: Let each such node hold a bit string of length $\ell=O(\log n)$ in which the $j$-th bit indicates whether the bad event is avoided in the outcome of the $j$-th instance. All nodes can agree on a winning instance in time linear in the cluster's weak diameter by computing a bitwise-\textsf{AND} of the bitstrings via \Cref{cor:treeAggregationBetter}.
	
	In order to determine the status of its events in each of the $\ell$ instances, node $v$ only needs to know which part each neighbor has chosen in which instance. As there are only $k$ parts, the index of the part can be communicated with $O(\log k)$ bits. Hence, a node $u$ can inform each neighbor about the parts node $u$ chose  in all $\ell$ instances by communicating $\ell \cdot O(\log k)=O(\log n\log\log n)$ bits over each incident edge. Using  $\textsf{bandwidth}=\Theta(\log n)$, this requires $O(\log \log n)$ rounds. The same reasoning is also sufficient to run the $\ell$ instances of \cite{CPS14} in parallel.  In one iteration of \cite{CPS14}, the variables of local ID minima in the graph induced by violated events are re-sampled.  We just reasoned that a node can determine the status of its events in each of the $\ell$ instances in $O(\log\log n)$ rounds, and with an additional round we can compute a set of local ID minima of violated events for each instance. Then, nodes can inform neighbors about the instances in which they need to re-sample their part.
\end{proof}

\subsection{Proof of Theorem \ref{thm:CONGESTsplittingBoth}}
\label{ssec:proofthmCONGESTsplitting}
The pre-shattering phase of computing the $q$-divide can immediately be implemented in the \CONGEST model. Its post-shattering phase is replaced with the stronger $q$-vertex splitting result of \Cref{lem:CONGESTpostshattering} (with $\eps=1$ and $k=q=24/\eps$) that runs in $\poly\log\log n$ rounds. Note that
the hypotheses of \cref{thm:CONGESTsplittingBoth} assume that $\Delta/(\log \Delta \log^2 \log n)$ is greater than an absolute constant $1/c_4$. With $c_4$ chosen s.t.\ $c_4 \leq c_5/24$, $q$ satisfies the hypotheses of \cref{lem:CONGESTpostshattering}.

The pre-shattering of the main algorithm can also immediately be implemented in the \CONGEST model. For each of its $q$ post-shattering instances we use \Cref{lem:CONGESTpostshattering} with $\eps^2/72$ and the same $k$. Using the proof of \Cref{lem:discrepancy},  the total discrepancy of the pre-shattering and the post-shattering phase is upper bounded by $(2\eps/3) \Delta/k$ and $(\eps/3) \Delta/k$, respectively.

\section{Application: \texorpdfstring{$(1+\eps)\Delta$}{(1+epsilon)∆}-edge coloring}
\label{sec:edgeColoring}
In this section, we first prove the \LOCAL version of the following theorem. 
\thmEdgeColoring*

We use the following result based on prior work to color small degree graphs. 

\begin{theorem}[\cite{FG17,EPS15,CHLPU19}]
	\label{thm:edgeColoringSlow}
	For any constant $\eps>0$, there is an absolute constant $\Delta_0$ such that for $\Delta \geq \Delta_0$, there is a randomized \LOCAL algorithm with runtime $O(d^2)+\poly\log\log n$ for $(1+\eps)\Delta$-edge coloring where $d=\poly\Delta$. 
\end{theorem}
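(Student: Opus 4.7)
The plan is to combine a known reduction from $(1+\eps)\Delta$-edge coloring to distributed LLL (as in \cite{EPS15,CHLPU19}) with the general LLL algorithm of Fischer and Ghaffari \cite{FG17}.

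First, I would cast $(1+\eps)\Delta$-edge coloring on graphs with $\Delta\ge\Delta_0(\eps)$ as a distributed LLL instance with polynomial criterion and dependency degree $d=\poly\Delta$, following \cite{EPS15,CHLPU19}. The random process has each edge pick a color from a palette of size $(1+\eps)\Delta$ (directly, or via R\"odl-nibble partial colorings), and the bad events capture color conflicts between adjacent edges and failures of an edge to be properly colored. For $\eps$ constant and $\Delta\ge\Delta_0$ above a sufficiently large absolute constant, a Chernoff-style analysis uses the $\eps\Delta$ slack in the palette to bound each bad event's probability by $\Delta^{-c}$ for an arbitrarily large constant $c$. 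Since each bad event depends only on edges within distance $O(1)$ in the line graph of $G$, the dependency degree is $d=\poly\Delta$, and one round of the dependency graph can be simulated in $O(1)$ rounds of $G$.

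Second, I would plug this LLL into the algorithm of \cite{FG17}. Using the network decompositions of \cite{RG19}, that algorithm solves distributed LLL with polynomial criterion in $O(d^2)+\poly\log\log n$ rounds when the dependency degree is $d$: its pre-shattering phase samples variables gradually following a distance-$2$ coloring of the dependency graph with $O(d^2)$ colors, contributing the $O(d^2)$ term; its post-shattering phase applies the deterministic network-decomposition-based LLL solver of \cite{RG19} on shattered components of size $\poly(d)\log n$, contributing the $\poly\log\log n$ term in the relevant regime $d\le\poly\log n$.

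The main obstacle is the first step---producing an LLL with polynomial rather than merely exponential criterion, so that \cite{FG17} combined with \cite{RG19} is applicable. This is exactly what the techniques of \cite{EPS15,CHLPU19} accomplish through careful Chernoff-type concentration of the number of available colors per edge, exploiting the $\eps\Delta$ slack and using a nibble-style decomposition when needed to keep the dependency degree and error probabilities polynomially controlled. Given such a reduction, the runtime bound $O(d^2)+\poly\log\log n$ with $d=\poly\Delta$ then follows immediately by instantiating \cite{FG17}.
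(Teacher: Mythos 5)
Your proposal matches the paper's argument: it likewise invokes the reductions of \cite{EPS15,CHLPU19}, which solve $(1+\eps)\Delta$-edge coloring via a constant number of LLL instances with polynomial criterion and dependency degree $d=\poly\Delta$ whose dependency graph is simulated with $O(1)$ overhead, and then plugs these into the $O(d^2)+\poly\log\log n$ LLL algorithm of \cite{FG17} (with \cite{RG19} for post-shattering). The only cosmetic difference is that the paper states the reduction as a constant number of successive LLL iterations rather than a single instance, which you essentially acknowledge via the nibble-style decomposition.
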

The papers \cite{EPS15,CHLPU19} both solve the $(1+\eps)\Delta$-edge coloring problem via a constant number of LLL iterations (for constant $\eps>0$). Their dependency graph can be simulated in the original network with $O(1)$ overhead and has dependency degree $d=\poly\Delta$. Plugging in the runtime of $O(d^2)+\poly\log\log n$ for solving such LLLs by \cite{FG17} yields \Cref{thm:edgeColoringSlow}.

\subparagraph{High level overview $(1+\eps)\Delta$-edge coloring algorithm.} We recursively (two recursion levels)  partition the edge set of $G$ into parts that induce small degree subgraphs. Then, we color each subgraph with a disjoint color palette. More detailed, first we partition the edge set into $k=\Theta(\eps^2\Delta/\log n)$ parts such that each part induces a graph of maximum degree at most $\Delta'=\poly\log n$. Then, in another recursive step we partition the edge set of each of these parts further into $k'=\Theta(\eps^4\Delta'/\log^2\log n)$ parts, each with maximum degree $\Delta''=\poly\log\log n$. We obtain $k\cdot k'$ subgraphs, each with maximum degree at most $\Delta''$. We color each part with a disjoint color palette with $(1+\eps/10)\Delta''$ colors via \Cref{thm:edgeColoringSlow} in $O((\Delta'')^2)+\poly\log \log n=\poly\log\log n$ rounds. The colors of the $k\cdot k'$ subgraphs sum up to $(1+\eps)\Delta$ colors in total. 
\begin{proof}[Proof of \Cref{thm:edgeColoring}, {\LOCAL}]
	If $\Delta\leq \poly\log\log n$ we skip the first two steps of the algorithm and immediately apply \Cref{thm:edgeColoringSlow} to compute a $(1+\eps)\Delta$-edge coloring in $\poly\log\log n$ rounds.  If  $\Delta\leq \poly\log n$, we skip the first step and set  $\Delta'=\Delta$, $k=1$ and $G_1=G$, otherwise we first partition the graph into $k=(\eps/6)^2\Delta/(9\log n)$ subgraphs $G_1,\ldots, G_k$, each with maximum degree $\Delta'=\Delta/k+(\eps/6) \cdot \Delta/k=\poly\log n$. To this end, let each edge uniformly at random and independently join one of the $G_i$'s. The same Chernoff bound as in \Cref{obs:zeroRoundSplitting} shows that w.h.p., the maximum degree of each $G_i$ is upper bounded by $\Delta'$.

	In the next step, we set $k'=c_4(\eps')^4\Delta'/\log^2\log n$, $\eps'=\eps/6$, and use \cref{thm:CONGESTsplittingBoth} to  split each $G_i, i\in [k]$ in parallel into $k'$ graphs $G_{i,j}, j\in [k']$, each of maximum degree $\Delta''=\Delta'/k'+\eps'\Delta'/k'=\poly\log\log n$. Recall, that $c_4$ is the constant from  \cref{thm:CONGESTsplittingBoth}.  More formally, we set up the following $k$ bipartite splitting instances $B_i=(\VL_i\cup \VR_i, E_i)$, $i\in [k]$:
	$\VR_i=E(G_i)$ and $\VL_i=V(G_i)$. Note that the degree $d^{B_i}(v)=d^{G_i}(v)$ for a node $v\in \VR_i$ and $d^{B_i}(e)=2$ for a node $e\in \VR_i$. Hence, $B_i$ has maximum degree $\Delta'$. 
	
	We use \cref{thm:bipartiteSplitting} (for each $B_i$ in parallel and with the same $k'$ and $\eps'$) to compute a partition of $\VR_i$ into $\VR_{i,1},\ldots,\VR_{i,k'}$ such that each $v\in \VL_i$ has $d^{B_i}(v)/k'\pm\eps' \Delta'/k'=d^G(v)/(k\cdot k')\pm 3\eps'\Delta/(k\cdot k')$ neighbors in each $\VR_i$. Now, for $i\in [k], j\in[k']$ let $G_{i,j}=G_i[\VR_{i,j}]$ and note that  $G_{i,j}$ has maximum degree at most $\Delta''=\Delta'/k'+\eps'\Delta'/k'=\poly\log\log n$.
	
	In the last step, we apply \Cref{thm:edgeColoringSlow} on each $G_{i,j}, i\in [k], j\in [k']$ in parallel to edge-color $G_{i,j}$ with $(1+\eps/6)\Delta''$ colors in $\poly \Delta''+\poly\log\log n=\poly\log\log n$ rounds. 
	
	The total number of colors used is upper bounded by
	\begin{align*}
		k\cdot k'\cdot (1+\eps/6)\Delta''& 
		\leq k \cdot (1+\eps/6)^2\cdot \Delta'
		\leq (1+\eps/6)^3\cdot \Delta
		\leq (1+\eps)\Delta~.& & \qedhere
	\end{align*}	
\end{proof}

\subsection{Edge coloring in \CONGEST (almost identical to previous section)}
\label{ssec:CONGESTedgecoloring}

We begin with proving a \CONGEST counterpart of \Cref{thm:edgeColoringSlow} for very low degree graphs. 

\begin{theorem}
	\label{thm:CONGESTedgeColoringSlow}
	For any constant $\eps>0$, there is an absolute constant $\Delta_0$ such that there is a randomized \CONGEST algorithm with runtime $\poly\log\log n$ for $(1+\eps)\Delta$-edge coloring any $n$-node graph with maximum degree $\Delta_0\leq \Delta\leq \poly\log\log n$. 
\end{theorem}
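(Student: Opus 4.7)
The plan is to mirror the {\LOCAL} proof of \Cref{thm:edgeColoringSlow} (from \cite{EPS15,CHLPU19}), but to replace the only non-{\CONGEST} ingredient, namely the LLL subroutine of \cite{FG17}, with the {\CONGEST} LLL algorithm from \Cref{cor:LLLCONNGESTsmalldegree}. The point is that in the regime $\Delta \leq \poly\log\log n$ the dependency degree of the LLL instances produced by the standard edge-coloring reductions is $d = \poly \Delta = \poly\log\log n$, which exactly matches the hypothesis of \Cref{cor:LLLCONNGESTsmalldegree}.

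More concretely, I would proceed as follows. First, recall that \cite{EPS15,CHLPU19} reduce the computation of a $(1+\eps)\Delta$-edge coloring (for constant $\eps>0$) to a constant number of LLL instances; each instance has dependency degree $d \leq \poly\Delta$, satisfies a polynomial criterion, and its dependency graph can be simulated in the line graph of $G$ with $O(1)$ overhead (which in turn can be simulated in $G$ itself with $O(1)$ overhead in {\CONGEST}, since each edge-event depends on the colors of the $O(\Delta)$ incident edges, information local to its two endpoints). Second, I would verify that the reductions themselves are bandwidth-frugal: they consist of uniform random color trials, local validity checks, and exchanging the chosen colors / list-membership information over each incident edge. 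Since $\Delta \leq \poly\log\log n$, a color index fits in $O(\log\log\log n)$ bits and a full palette description fits in $\poly\log\log n \ll \log n$ bits, so a single round of the reduction can be simulated in $O(1)$ {\CONGEST} rounds.

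Next, I would invoke \Cref{cor:LLLCONNGESTsmalldegree} on each of the resulting LLL instances. With $d \leq \poly\log\log n$ and a polynomial criterion of sufficiently large exponent (which, as in the analyses of \cite{EPS15,CHLPU19}, can be arranged by a small adjustment in constants because the probabilities of the bad events decay exponentially in $\Delta$), the corollary yields a $\poly\log\log n$-round randomized {\CONGEST} solution for each LLL instance that succeeds w.h.p.\ in $n$. Since only $O(1)$ such instances are chained, the overall round complexity remains $\poly\log\log n$, and the union bound over the constant number of stages preserves the high-probability guarantee.

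The main obstacle, and the part that requires the most care, is to make sure that the LLL instances of \cite{EPS15,CHLPU19} really satisfy the criterion $p < d^{-8}$ demanded by \Cref{cor:LLLCONNGESTsmalldegree} (their original analyses often state ``polynomial criterion'' with a smaller exponent). This is handled by choosing $\Delta_0$ sufficiently large as a function of $\eps$: the bad events' probabilities are of the form $\exp(-\Omega(\eps^{c}\Delta))$ for some constant $c$, and once $\Delta \geq \Delta_0$ for a sufficiently large $\Delta_0(\eps)$, we have $\exp(-\Omega(\eps^{c}\Delta)) \leq \Delta^{-\Omega(1)} \cdot d^{-8}$, giving the desired slack. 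A secondary point to check is that \Cref{cor:LLLCONNGESTsmalldegree} is applied on the dependency graph, whereas our communication network is $G$; this is fine because, as noted above, one round on the dependency graph is simulated in $O(1)$ rounds of $G$ in {\CONGEST} when $\Delta$ is small.
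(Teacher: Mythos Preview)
Your approach is essentially the same as the paper's: reduce to the constant number of LLL instances of \cite{EPS15,CHLPU19}, observe that their dependency degree is $d=\poly\Delta=\poly\log\log n$, and solve them via \Cref{cor:LLLCONNGESTsmalldegree}. The paper additionally points out that the algorithm of \cite{CHLPU19} bottoms out in a $5\Delta$-edge coloring step that is not an LLL and must be handled separately (it cites \cite{representativesets} for a $\poly\log\log n$-round \CONGEST solution); you should mention this base case as well. Also, the paper is slightly more conservative in stating that one round in the dependency graph $H$ costs $\poly\log\log n$ \CONGEST rounds rather than $O(1)$, but either bound suffices.
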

\begin{proof}
	Recall from the text after \Cref{thm:edgeColoringSlow} that the edge-coloring problem can be solved via a constant number of LLL instances that are defined on a dependency graph $H$ such that one round of communication can be simulated in $O(1)$ rounds in \LOCAL in the original network \cite{CHLPU19}. The dependency degree of $H$ is $d=\poly\Delta$.  Hence, if $\Delta\leq \poly\log\log n$, one round of communication in $H$ can be simulated in $\poly\log\log n$ \CONGEST rounds in the communication network. The result follows via  \Cref{cor:LLLCONNGESTsmalldegree}. 
	The base case for the algorithm of \cite{CHLPU19} is a $5\Delta$-edge coloring step, which can also be solved in $\poly \log\log n$ \CONGEST rounds \cite{representativesets}.
\end{proof}

\begin{proof}[Proof of \Cref{thm:edgeColoring}, {\CONGEST}]
	We use the same high level algorithm as in the \LOCAL model, that is, we first split into subgraphs of $\Delta'=\poly\log n$ maximum degree, then in to subgraphs of $\Delta''=\poly\log\log n$ degree, which we then color with disjoint color spaces, with $(1+\eps/6)\Delta''$ colors each. We refer to the \LOCAL version for further the details on this reduction. Here, we only explain which parts differ in the \CONGEST model. 
	First, each edge is simulated by one of its endpoints. 
	The reduction to $\poly\log n$ degrees works in zero rounds, just as in the \LOCAL model. 
	
	The most challenging part is the reduction from $\poly\log n$ degrees to $\poly\log\log n$ degrees.  The pre-shattering phases (in computing the $q$-divide and in the main algorithm) immediately work in the \CONGEST model. We only need to reason that the post-shattering phases (of $q$-divide and the $q$ instances in the main algorithm)  can be solved via \Cref{lem:CONGESTpostshattering}. To this end, we need to run $\ell=O(\log n)$ instances of \cite{CPS14} in parallel with $\textsf{bandwidth}=\Theta(\log n)$. Observe that $k\leq \poly\log n$ holds. For each node to be able to evaluate the status of its bad events in all $\ell$ instances in parallel, we have the node simulating each edge send to the other endpoint the parts it chose in all $\ell$ instances (one part per instance). As the part index of an edge can be encoded with $O(\log k)$ bits, the indices in the $\ell$ instances can be communicated with $\ell\cdot O(\log k)=O(\log n \log\log n)$ bits. With the available bandwidth this requires only $O(\log\log n)$ rounds. The other messages needed by the algorithm to simulate the $\ell$ instances of \cite{CPS14} in parallel, such as whether the edge needs be re-sampled in each instance, similarly never exceed $O(\log \log n)$ rounds. 
	
	Once the degrees of the subgraphs are at most $\poly\log\log n$ we use \Cref{thm:CONGESTedgeColoringSlow} to color the graphs. Formally, the color space is still large. To really use \Cref{thm:CONGESTedgeColoringSlow}, vertices color each subgraph with colors in the range from $1$ to $\poly\log\log n$, and map their color back to the original color space at the end of the computation. 
\end{proof}

\section{Application: List Coloring}
\label{sec:listColoring}
We begin with defining the $(L,T)$-list-coloring problem. 
\begin{definition}[$(L,T)$-list-coloring]
	In an \emph{$(L,T)$-list-coloring} instance on a graph $G=(V,E)$, each node $v\in V$ is given a list $L(v)$ of colors of at least $L$ such that for each $c\in L(v)$ there are at most $T$ neighbors $u$ of $v$ with $c\in L(u)$. The parameter $T$ is referred to as the \emph{color degree}. Similarly for $c\in L(v)$, $|\{u\in N(v)\mid c\in L(u)\}|$ is the \emph{color degree} of color $c$ for $v$. 
\end{definition}
We first note that for the rest of the section, we assume that each vertex has a list of exactly $L$ colors, which, e.g., can be achieved if each node $v$ drops arbitrary $|L(v)|-L$ colors from its list. Further, we may assume that there is only an edge between two nodes if their lists intersect. Note, that still the degree of the graph $G$ can be much larger than the list size $L$; it can be as large as $L\cdot T$.  In particular, one cannot generally solve such the problem via a greedy approach, not even centrally. Still, the objective is to find solutions for arbitrary $L$ and $T$ with a ratio $L/T$ as small as possible. 

Reed \cite{Reed99} gave a simple LLL argument for the existence of a solution when $L/T \ge \lceil 2e \rceil$.  This was improved to $L/T = 2$ by Haxell \cite{Haxell01}. Reed and Sudakov \cite{RS02} then showed that $L/T = 1+o(1)$ suffices. Reed's famous list coloring conjecture states that $L=T+2$ colors always suffice \cite{Reed99}. We recall Reed's argument for the existence if $L/T>2e$.
In the distributed setting, there is an $O(\log n)$ round algorithm for $L/T\geq (1+\delta)$ \cite{CPS14}, and a $O(\poly\Delta+\poly\log\log n)$ rounds for $L/T\geq C_0$ for a sufficiently large constant $C_0$ \cite{FG17}.

\subparagraph{LLL formulation (for existence only).} Suppose each node picks a color from its list uniformly at random. Define a bad event $\mathcal{B}_{u,v,c}$ for each edge $\{u,v\}\in E$ and each color $c$ if both $u$ and $v$ choose the color $c$.
The probability for such an event is at most $p=1/|L(v)|\cdot 1/|L(u)| \le 1/L^2$. The dependency degree of these events is $d=2L \cdot T$, because  it can depend on at most $L$ colors for each of the endpoints of the edge and on $T$ other incident edges for each of these colors. 
Thus, we obtain the LLL criterion $p\cdot (2L\cdot T)=2T/L$, and hence for $L/T>2e$, a the standard criterion $epd<1$ is satisfied and a solution exists.

\subparagraph{Distributed results.} Reed's argument leads to an $O(\log^2 n)$-round  {\LOCAL} algorithm for any $L/T > 2e$ with the classic Moser-Tardos algorithm  \cite{MoserTardos10}.
Chang, Pettie and Su \cite{CPS14} gave a {\LOCAL} algorithm for $L/T=1+\delta$, with a quite involved analysis, that runs in time $O(\log^* L \max(1,\log n)/D^{(1-\gamma})$. 
Fischer and Ghaffari \cite{FG17} showed using \emph{color pruning} that there exists some (possibly large) constant $C$ to solve $(L,T)$-list coloring whenever $L/T\geq C$ in  $\poly(\Delta,\log\log n)$ rounds of {\LOCAL}.
\begin{theorem}[\cite{FG17}]
	\label{thm:distributedListColoringSlow}
	There is a constant $C$ and a $\poly(\Delta,\log\log n)$-round \LOCAL algorithm to solve any $(L,T)$-list coloring instance whenever $L/T\geq C$. 
\end{theorem}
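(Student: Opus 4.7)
The plan is to cast $(L, T)$-list coloring as a Lovász Local Lemma instance with polynomial criterion and then invoke the generic shattering-based distributed LLL algorithm of \cite{FG17} (the same routine that underlies \Cref{thm:edgeColoringSlow}), which runs in $O(d^2) + \poly\log\log n$ rounds on LLL instances with polynomial criterion and dependency degree $d$.

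The naive LLL---a bad event per monochromatic edge when each vertex picks a uniformly random color from its list---only gives $p \le 1/L$ and $d \le 2\Delta$, which satisfies the symmetric criterion $epd < 1$ when $L \gtrsim \Delta$ but generally fails the stronger polynomial criterion $p \cdot d^c < 1$ needed by \cite{FG17}. To obtain a polynomial criterion, I would use a \emph{nibble-style} random process: each vertex tentatively picks a uniformly random color from its current palette and keeps it iff no neighbor picks the same color; vertices that fail remain uncolored, and colors taken by colored neighbors are pruned from their palettes. Introduce a bad event $\mathcal{B}_v$ that triggers whenever the residual palette size or residual per-color degree at $v$ deviates significantly from its expectation. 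Since $L/T \ge C$, the expected fraction of neighbors that become colored and the expected shrinkage of $v$'s palette are tightly concentrated; a Chernoff bound yields $\Pr[\mathcal{B}_v] \le \exp(-\Omega(L))$. The event $\mathcal{B}_v$ depends only on the random choices in a constant-radius neighborhood of $v$, giving LLL dependency degree $d \le \poly(\Delta)$. Because $\Delta \le LT \le L^2/C$, we have $L \ge \sqrt{C\Delta}$, and thus $\Pr[\mathcal{B}_v] \le \exp(-\Omega(\sqrt{\Delta}))$, which beats every fixed polynomial in $\Delta$; the polynomial criterion is satisfied provided $C$ is a sufficiently large absolute constant.

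Invoking the \cite{FG17} LLL solver on this instance produces a partial coloring for which the residual sub-instance on uncolored vertices still satisfies $L'/T' \ge C$ (concentration in fact makes the ratio slightly better), while both $L'$ and $T'$ shrink by a constant factor. Iterating the nibble-LLL $O(\log L)$ times drives $L$ and $T$ down to absolute constants, at which point $\Delta \le LT$ is a constant and the residual instance can be finished directly via the LLL algorithm of \cite{FG17} on a constant-degree dependency graph in $\poly\log\log n$ rounds. Summing over the $O(\log L) = O(\log \Delta)$ iterations, each costing $O(\poly \Delta) + \poly\log\log n$ rounds, yields the claimed $\poly(\Delta, \log\log n)$ complexity.

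The main obstacle will be the bookkeeping for the invariant $L/T \ge C$ across iterations: after a nibble, palettes and color-degrees need not shrink uniformly, so the bad events must simultaneously cover concentration of the palette size and of the per-color degree for \emph{every} remaining color. Getting all these concentrations tight enough to feed back into a polynomial LLL criterion at the next iteration---without gradually eroding the ratio $L/T$ below $C$---is the principal technical difficulty and is where the constant $C$ from \cite{FG17} has to be chosen with some slack.
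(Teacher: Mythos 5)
You should first note that the paper does not prove \Cref{thm:distributedListColoringSlow} at all: it is imported verbatim from \cite{FG17}, whose route is \emph{palette sparsification} (``color pruning''): each node keeps each color of its list independently at random, so the residual list size and the residual color degrees are sums of independent Bernoulli variables and genuine Chernoff bounds apply, exactly as in \Cref{lem:listColorReduction} of this paper. Your re-derivation replaces this by a conflict-resolution nibble (pick a color, keep it if no neighbor picked it, prune kept colors from neighbors), and this is where the main gap sits: under that process the residual palette size of $v$ and the residual color degree of each color at $v$ are \emph{not} sums of independent indicators -- they are complicated functions of the picks of all nodes within distance $2$, and a single neighbor changing its pick can shift the color degree of a color by up to $T$. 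So ``a Chernoff bound yields $\Pr[\mathcal{B}_v]\le \exp(-\Omega(L))$'' is not justified; the known analyses of exactly this process (Reed--Sudakov, Molloy--Reed, and \Cref{lem:listColLLL} of this paper, which runs the nibble with activation probability $1/\ln g$ precisely to control these dependencies) need Talagrand/Azuma-type inequalities and deliver bounds like $g^{-\ln g}$, not $\exp(-\Omega(L))$. Moreover, with $L/T\ge C$ the color degree shrinks by a factor $\approx 1/C$ per round while the list size shrinks only by $(1-1/C)$, so after a few rounds the expected color degree is far below the deviation scale at which any such concentration bound can beat the (still large) dependency degree; your invariant ``both $L$ and $T$ shrink by a constant factor, with failure probability below $d^{-\Theta(1)}$, until both are absolute constants'' cannot be maintained as stated and has to be replaced by additive error terms and an explicit stopping point.

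The endgame also needs an argument you have omitted. Once $L$ and $T$ are (large) constants, the dependency degree is constant, but the polynomial criterion for the finishing LLL is not automatic: with per-vertex conflict events one gets $p\le T'/L'$ and $d=\poly(L'T')$, so $p\,d^{8}\approx (L')^{O(1)}/(L'/T')$, which is $<1$ only if the ratio $L'/T'$ has been amplified to a constant that is polynomially larger than the stopping value of $L'$; with per-(edge,color) events, $p\le 1/L'^2$ and $d=2L'T'$ fail the polynomial criterion outright. So the constants $C$, the stopping threshold, and the amount of ratio amplification along the way must be fixed jointly -- this is exactly the bookkeeping you flag as a difficulty, and it is the substance of the proof rather than a detail. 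The architecture (iterated pruning LLLs with polynomial criterion, each solved by the $O(d^2)+\poly\log\log n$ algorithm of \cite{FG17}, finishing at bounded parameters) is sound and mirrors both \cite{FG17} and \Cref{lem:listColIncreaseRatio}; but as written, the concentration step and the final criterion check are genuine gaps, and the cleanest repair is to prune palettes by independent subsampling (where Chernoff is valid) rather than by a full-activation coloring nibble.
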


In this section, we combine the effectiveness of \cite{CPS14} with the speed of \cite{FG17}. In particular, we prove the following result.

\thmListColoring*

Similarly to the edge-coloring problem our high level idea is to first reduce the size of the relevant parameters  to $\poly\log\log n$, after which we can solve arbitrary LLLs efficiently on the problem. However, the reduction and the base case (once parameters are of size $\poly\log\log n$)  
are significantly more involved than in the edge-coloring problem. We begin with a statement on efficiently reducing the parameters $L$ and $T$ while keeping the ratio of list size and color degree almost the same.

\begin{lemma}[List color sparsification]
	\label{lem:listColorReduction}
	There exists a universal constant $\inccte>0$ s.t.:
	For any $\eps>1/\poly\log\log n$ and
	$k \leq \cte \cdot(\eps^4L /\log L)$,
	there is a $\poly(\eps^{-1},\log\log n)$-round algorithm for the following list coloring sparsification problem: Given a $(L,T)$-list coloring instance with $T<L\leq \poly\log n$ on an  $n$-node graph $G=(V,E)$ the goal is to compute a sublist $L'(v)\subseteq L(v)$ for each node yielding a $(L',T')$-list coloring instance on the same graph with  
	\begin{align}
		L'= L/k\pm \eps L/k, 
		T'\leq T/k+\eps T/k \text{ and }
		L'/T'\geq (1-\eps)L/T. 
	\end{align}
	
	We obtain the same properties in zero rounds if $L>\poly\log n$  and $k\leq \eps^2\Delta/(9\ln n)$ holds for $\Delta=L\cdot T$.
\end{lemma}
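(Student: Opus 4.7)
My plan is to use the obvious random-sampling algorithm: independently, each pair $(v,c)$ with $c \in L(v)$ is assigned a uniform bin in $[k]$, and we set $L'(v) := \{c : \text{bin}(v,c) = 1\}$. Then $\Exp[|L'(v)|] = L/k$ and, for every $c \in L(v)$, the expected new color degree $\Exp[|\{u \in N(v) : c \in L'(u)\}|] = \text{colordeg}(v,c)/k \leq T/k$. The two guarantees requested by the lemma are precisely $(1\pm\eps)$-relative concentration statements about these sums.

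The zero-round regime ($L > \poly\log n$) follows from a direct Chernoff plus union bound over the $\leq nL$ potential bad events (one list-size event per node, one color-degree event per pair); the hypothesis on $k$ pushes each bad-event probability below $n^{-\Omega(1)}$, so no communication is required.

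For $L \leq \poly\log n$, I would recast the sampling as a bipartite vertex splitting instance and invoke a per-event-threshold variant of \Cref{thm:bipartiteSplitting}. Set $V^R := \{(v,c) : c\in L(v)\}$, and on the left side include two event families: (i) one list-size event per node $v$, incident to all $(v, c)$ with $c \in L(v)$ (left-degree $L$); and (ii) one color-degree event per pair $(v, c)$, incident to $(u, c)$ for each $u \in N(v)$ with $c \in L(u)$ (left-degree $\leq T$). This bipartite graph has maximum left-degree $\Delta_\Lmark = L$ and can be simulated with $O(1)$ overhead in $G$. Running \textsf{FastShattering} from \Cref{sec:splitting} with per-event discrepancy thresholds $\eps L/k$ for type (i) and $\eps T/k$ for type (ii) produces Chernoff estimates $\exp(-\Omega(\eps^2 L/k))$ and $\exp(-\Omega(\eps^2 T/k))$ respectively; under the hypothesis $k \leq c_8\eps^4 L/\log L$ (and when $L$ and $T$ are comparable), both are dominated by $L^{-c}$ for an arbitrarily large constant $c$, yielding a polynomial LLL criterion on a dependency graph of degree $\poly(L,T)$. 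The shattering lemma together with \Cref{lem:LOCALpostshattering} then finish the post-shattering work in $O(1/\eps) + \poly\log\log n$ rounds. Keeping the pairs assigned to part $1$ yields $L' = L/k \pm \eps L/k$ and $T' \leq T/k + \eps T/k$, and a short calculation gives $L'/T' \geq (1-O(\eps))\,L/T$.

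The main obstacle is reconciling the two event families in the splitting analysis: list-size events have left-degree $L$ and tolerate a discrepancy of $\eps L/k$, whereas color-degree events have smaller degree ($\leq T$) yet need the tighter discrepancy $\eps T/k$. Using a single uniform threshold $\eps L/k$ for both types would cost a factor $(L/T)^4$ in the admissible range of $k$; to match the stated bound $k \leq c_8\eps^4 L/\log L$, I would rely on a per-event-threshold refinement of \textsf{FastShattering}, analogous to the local $q$-divide of \Cref{thm:qdivide}. Verifying that the Chernoff bound for the lower-degree color-degree events remains strong enough to satisfy the LLL criterion is the crux; this is exactly the regime $T = \Omega(L)$ that is relevant to the list-coloring application downstream.
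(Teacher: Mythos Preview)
Your approach is essentially the paper's: the same bipartite splitting instance (variables $(v,c)$ on the right; list-size events $\mathcal{B}_v$ and color-degree events $\mathcal{B}_{v,c}$ on the left) solved by invoking \Cref{thm:bipartiteSplitting}, with the high-$L$ regime handled by zero-round sampling plus Chernoff. One difference worth noting: the paper does \emph{not} introduce per-event thresholds but simply applies \Cref{thm:bipartiteSplitting} with the uniform discrepancy $\eps\DeltaL/k=\eps L/k$ (using $\eps/100$), so its proof literally yields $T'=T/k\pm(\eps/100)L/k$ rather than $T/k\pm\eps T/k$; the stated bound on $T'$ and the ratio $L'/T'\ge(1-\eps)L/T$ are thus only recovered when $L/T=O(1)$, which is exactly the regime $L=(1+\delta)T$ of the downstream application. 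Your instinct that the lemma as stated in full generality would require either the hypothesis $T=\Omega(L)$ or a per-event-threshold refinement of \textsf{FastShattering} is correct.
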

\begin{proof}
	Consider the $(L,T)$-list coloring instance on graph $G=(V,E)$ with a list $L(v)$ for each node $v\in V$. 
	
	For the zero round statement, every node just keeps each of its colors independently with probability $1/k$. In expectation, the list size will be $\Delta/k\geq \poly\log n$, and in expectation the color degree also shrinks by a factor $k$. Both values are w.h.p.\ concentrated with discrepancy $\pm\eps L/k$ and $\pm\eps T/k$ with the same Chernoff bound reasoning as in \Cref{obs:zeroRoundSplitting}. 
	
	For the case that $L\leq \poly\log n$ we perform the following reduction. 
	
	\begin{figure}[ht]
		\centering
		\hspace{0.05\textwidth}
		\begin{subfigure}{0.2\textwidth}
			\includegraphics[page=11,width=\textwidth]{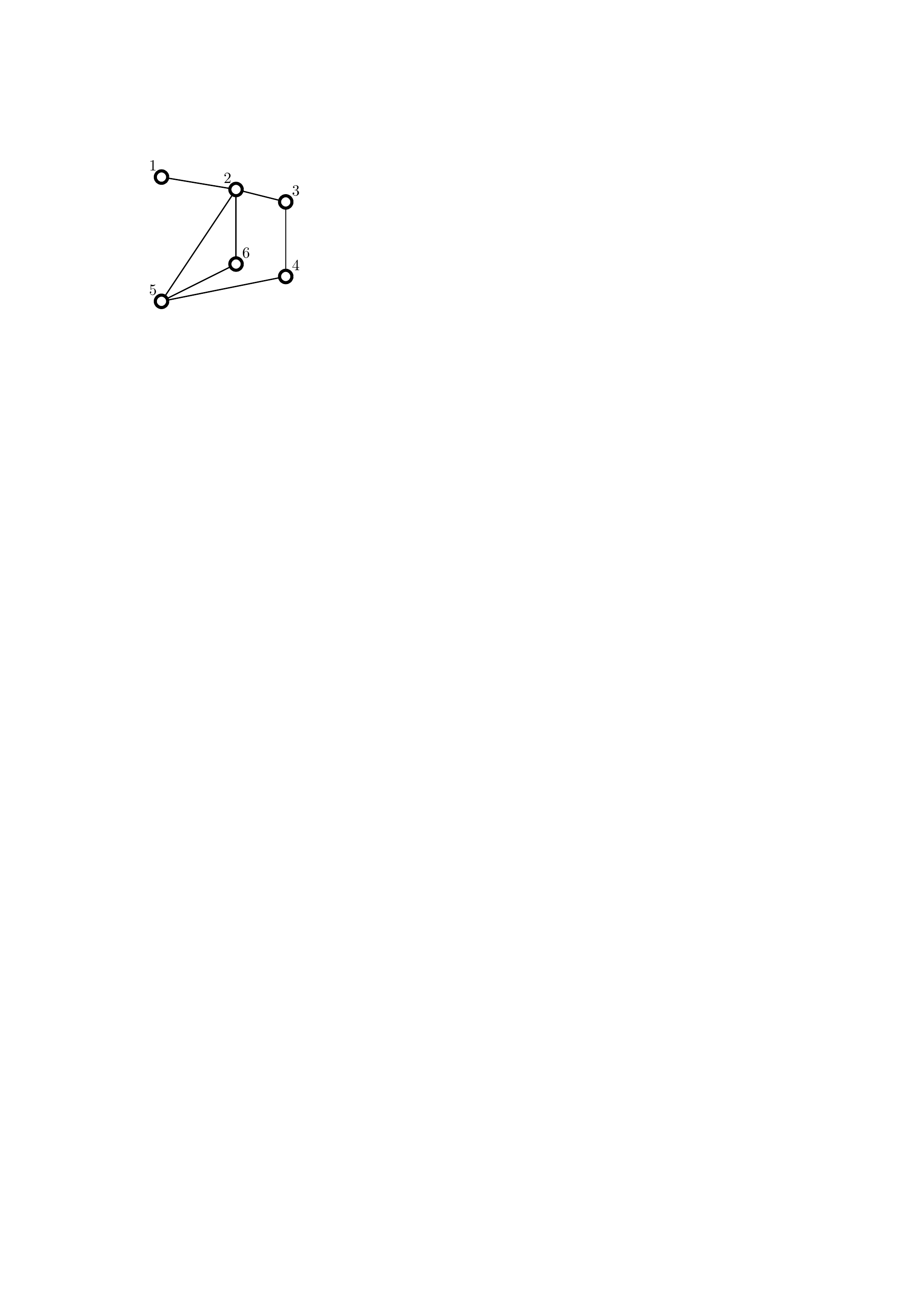}
		\end{subfigure}
		\hfill
		\begin{subfigure}{0.2\textwidth}
			\includegraphics[page=12,width=\textwidth]{new_splitting}
		\end{subfigure}
		\hfill
		\begin{subfigure}{0.2\textwidth}
			\includegraphics[page=13,width=\textwidth]{new_splitting}
		\end{subfigure}
		\hspace{0.05\textwidth}
		\caption{A graph (left), the associated graph representing the conflicts between palette colors (middle), and the bipartite splitting instance obtained from the list coloring problem on this graph (right).}
		\label{fig:bipartiteTranslateList}
	\end{figure}
	
	\textbf{Bipartite splitting instance:} Create the a bipartite splitting instance $B=(\VL\cup \VR, E_\Bmark)$: 
	\begin{itemize}
		\item Vertices: $\VR=\{(v_{c} \mid v\in V, c\in L(v)\}$ and $\VL=\{\mathcal{B}_v \mid v\in V\}\cup \{\mathcal{B}_{v,c}\mid v\in V, c\in L(v)\}$,
		\item Edges: For each $v\in V, c\in L(v)$, $\mathcal{B}_v$ and $v_c$ are linked; For each $\{u,v\}\in E$ and $c\in L(u)\cap L(v)$,  $\mathcal{B}_{v,c}$ and $u_c$ are linked, as are $\mathcal{B}_{u,c}$ and $v_c$.
	\end{itemize}
	Note that the nodes $\mathcal{B}_v\in \VL$ have degree at most $\DeltaL=L$, the nodes $\mathcal{B}_{v,c}\in \VL$ have degree at most $T$, and the nodes $v_c\in \VR$ have degree $T+1$. Since $L>T$, the maximum degree $\Delta_B$ of the graph $B$ is upper bounded by $L$. Thus, with $c_6$ chosen such that $c_6 \leq c_2$, a $k$ which satisfies the hypotheses of \cref{lem:listColorReduction} also satisfies those of \cref{thm:bipartiteSplitting}, i.e., $k\leq c_2\eps^4 \DeltaL/\log \Delta_B$.

	Now, apply \Cref{thm:bipartiteSplitting} with $\eps/100$ and $k$ and let $\VR_1,\ldots,\VR_k$ be the returned partition of $\VR$ (we will only make use of $\VR_1$). The runtime of \Cref{thm:bipartiteSplitting} is $\poly\log\log n$ rounds.

	\textbf{New list-coloring instance:} The graph remains $G=(V,E)$, but the list $L'(v)$ of a node $v$ becomes $\{c \mid v_c\in \VR_1\}$. The size of $L'(v)$ equals the degree into $\VR_1$ of  event node $\mathcal{B}_{v}$. Thus, we obtain that  $|L'(v)|=L/k\pm \eps \DeltaL/k$. Similarly, due to the events $\mathcal{B}_{v,c}$ we can bound the color degree for each color $c$ by $T'(v)=T/k\pm \eps \DeltaL/k$. 
\end{proof}

\begin{corollary}
	There exists a constant $C$ and a $\poly \log\log n$-round algorithm that solves the $(L,T)$ list coloring problem whenever $L/T\geq C$.
	\label{cor:listcolLargeConstant}
\end{corollary}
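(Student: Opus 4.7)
Let $C_0$ denote the constant from \Cref{thm:distributedListColoringSlow}, which solves $(L,T)$-list coloring in $\poly(\Delta,\log\log n)$ rounds whenever $L/T \ge C_0$. The plan is to use \Cref{lem:listColorReduction} as a preprocessing step to shrink $L$ and $T$ down to $\poly\log\log n$ while preserving the ratio $L/T$ up to a $(1-\eps)$ factor; once $L,T \le \poly\log\log n$ we have $\Delta \le L\cdot T \le \poly\log\log n$ on the residual instance, so \Cref{thm:distributedListColoringSlow} finishes the job in $\poly\log\log n$ rounds. We will take $C := C_0/(1-\eps)^2$ with $\eps = 1/10$, say, so that any constant factor losses from at most two rounds of sparsification still leave the ratio above $C_0$.

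First I would dispose of the trivial regime: if $L \le \poly\log\log n$ from the start, just invoke \Cref{thm:distributedListColoringSlow} directly, since then $\Delta \le L\cdot T \le L^2 \le \poly\log\log n$. Otherwise, I apply \Cref{lem:listColorReduction} in (at most) two stages. If $L > \poly\log n$, I use the zero-round case of the lemma with splitting factor $k \approx L/\poly\log n$; the constraint $k \le \eps^2 \Delta/(9\ln n)$ is comfortably satisfied here because $\Delta \ge L > \poly\log n$. This produces an equivalent list coloring instance with $L_1 \le \poly\log n$ and $L_1/T_1 \ge (1-\eps)L/T$. Next, if $L_1 > \poly\log\log n$, I apply the $\poly\log\log n$-round case of the lemma with $k \approx L_1/\poly\log\log n$; since $L_1 \le \poly\log n$ we have $\log L_1 = O(\log\log n)$, so the constraint $k \le c_6\eps^4 L_1/\log L_1$ is satisfied for an appropriate exponent in the $\poly\log\log n$ target size. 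The resulting instance has $L_2,T_2 \le \poly\log\log n$ and $L_2/T_2 \ge (1-\eps)^2 L/T \ge C_0$.

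Finally, I feed this residual instance into \Cref{thm:distributedListColoringSlow}. Since the maximum degree $\Delta_2$ of the residual conflict graph is at most $L_2\cdot T_2 = \poly\log\log n$, the runtime there is $\poly(\Delta_2,\log\log n) = \poly\log\log n$. The lists produced by the sparsification are subsets of the original lists, so any coloring of the residual instance is also a valid coloring of the original instance.

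The main obstacle I anticipate is purely bookkeeping: verifying that the $k$ I want to use at each sparsification stage satisfies the constraint $k \le c_6 \eps^4 L/\log L$ (respectively $k \le \eps^2\Delta/(9\ln n)$), and that two applications of the lemma suffice to drop $L$ all the way from arbitrary values down to $\poly\log\log n$. Both checks boil down to the inequalities $\log L = O(\log\log n)$ once $L \le \poly\log n$, and $\Delta \ge L$ at every step, which handle the two regimes of the lemma cleanly. The $\eps$-loss in the ratio is absorbed by choosing the initial constant $C$ slightly larger than $C_0$.
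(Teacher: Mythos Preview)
Your proposal is correct and follows essentially the same approach as the paper: apply \Cref{lem:listColorReduction} at most twice (once in the zero-round regime to bring $L$ down to $\poly\log n$, once in the $\poly\log\log n$-round regime to bring it down to $\poly\log\log n$), then invoke \Cref{thm:distributedListColoringSlow} on the residual instance where $\Delta \le L\cdot T = \poly\log\log n$. The paper's proof is terser but the structure, the two-stage reduction, and the absorption of the $(1-\eps)$ losses into a larger initial constant $C$ are all the same.
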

\begin{proof}
	Let $C'$ be the constant in \cref{thm:distributedListColoringSlow}. 
	We first run the algorithm of  \Cref{lem:listColorReduction} (possibly we run it twice, first to reduce to $L\leq \poly\log n$ in zero rounds, and then a second time to reduce to $L\leq \poly\log\log n$) to compute a new $(L',T')$-list coloring instance, where $L'/T'$ is at least $C'$ and both $L'$ and $T'$ are bounded by $\poly\log\log n$. We then solve the resulting $(L',T')$-list coloring instance (and thus also the original $(L,T)$-list coloring problem) in $\poly(L'\cdot T',\poly\log\log n)=\poly\log\log n$ rounds via  the algorithm of  \cref{thm:distributedListColoringSlow}. 
	
	If initially $d = \Omega(\log n)$ we require two iterations of \Cref{lem:listColorReduction} and since we  lose a small constant factor in the ratio of the list size and the color degree in each iteration we need $L/T\geq C$ for some constant $C>C'$. 
\end{proof}

\subsection{List Coloring Constant Amplification}
Our second method is based on the nibble method of Reed-Sudakov \cite{RS02}. In order to show that the list coloring constant asymptotically equals one, Reed and Sudakov designed a process that (slowly) amplifies the constant of the given list coloring problem. They convert a list coloring problem with parameters $L=(1+\delta)g$ and $T=g$ to one with $L' = g'$ and $T'=g'/q$, where $q$ is any constant and $g' = \Theta(g)$. A solution to the second problem immediately solves the first one. In fact, their process consists of $O(\ln g)$ iterations, in each of which the ratio between the list size $L_{i+1}$ and the color degree $T_{i+1}$ improves by a factor $\approx 1+\Theta(1/(1+\delta)\ln g)$ compared to $L_i/T_i$. In creating the  $(L_{i+1},T_{i+1})$-list coloring instance from the $(L_i,T_i)$ one, some of the nodes are colored while others have their palettes pruned.
The crucial property is to ensure that lists do not shrink too fast, i.e., bounding $L_{i+1}$ from below, and to ensure that the color degree $T_{i+1}$ shrinks fast enough, i.e., bounding it from above. As they were only interested in the existence of colorings, it was sufficient to obtain the desired bounds with a positive probability only.  However, it turns out that one step of their process is an LLL. Next, we describe the process.

\subparagraph{Reed-Sudakov LLL process \cite{RS02}.} Set $L_0(v)=L(v)$.  We next explain iteration $i$ of the process. Each vertex is either already colored or has a given list $L_i(v)\subseteq L(v)$. Now, each uncolored vertex activates itself with probability $p=1/\ln g$ and picks a color $c_v$ from its list $L_i(v)$ uniformly at random. The color $c_v$ is removed from all lists of its neighbors  and $v$ is permanently colored with $c_v$ if no neighbor picked the same color. For each uncolored vertex $v$, the list $L_{i+1}(v)$ in iteration $i+1$ consists of $L_i(v)$ without the removed colors.

Let $T_i(v)$ be the maximum color degree of a color of node $v$ in the $i$-th instance. 
\begin{lemma}[\cite{RS02}]
	\label{lem:listColLLL}
	If $|L_i(v)|\geq (1+\delta)T_i(v)$ for all nodes $v\in V$, then the above process forms an LLL with error probability $g^{-\ln g}$, dependency degree $g^{8}$, and with the following \emph{bad events}
	\begin{enumerate}
		\item $v$ is uncolored and $|L_{i+1}(v)|< \left(1-\frac{1}{(1+3\delta/4)\cdot \ln g}\right)|L_i(v)|$~,
		\item $v$ is uncolored and $|T_{i+1}(v)|> \left(1-\frac{1}{(1+\delta/4)\cdot \ln g}\right)|T_i(v)|$~.
	\end{enumerate}
\end{lemma}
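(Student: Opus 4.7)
The plan is to establish three ingredients: (i) the local nature of each bad event, which determines the dependency degree; (ii) an expectation bound on $|L_{i+1}(v)|$ and $T_{i+1}(v)$ conditioned on $v$ being uncolored; and (iii) a concentration bound that yields the stated $g^{-\ln g}$ error probability.

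First, I would formalize the random variables: for each currently uncolored $u$, a single variable encodes whether $u$ activates and, if so, which color $u$ samples from $L_i(u)$. Both $|L_{i+1}(v)|$ and $T_{i+1}(v)$ are determined by these variables taken over the $2$-hop neighborhood of $v$: a color $c \in L_i(v)$ leaves $L_{i+1}(v)$ only if some neighbor of $v$ activates and samples $c$, while $T_{i+1}(v)$ depends on which neighbors $u \in N(v)$ remain uncolored and on which colors still lie in $L_{i+1}(u)$, both governed by the random choices within $N[u] \subseteq B_2(v)$. Two bad events therefore share a variable only if the underlying vertices are within distance $4$. Since the current graph has maximum degree at most $|L_i(v)| \cdot T_i(v) \leq (1+\delta) g^2$, a radius-$4$ ball contains at most $((1+\delta)g^2)^4 = O(g^8)$ vertices, matching the claimed dependency degree.

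Second, following a Reed-Sudakov style calculation, a color $c \in L_i(v)$ survives with probability $\prod_{u \in N(v),\, c \in L_i(u)} (1 - p/|L_i(u)|)$; using $p = 1/\ln g$ and $T_i(v) \leq |L_i(v)|/(1+\delta)$, this is at least $1 - 1/((1+7\delta/8)\ln g)$, so $\Exp[|L_{i+1}(v)| \mid v \text{ uncolored}] \geq (1 - 1/((1+7\delta/8)\ln g))|L_i(v)|$. A slightly more involved argument, splitting the ``lost'' contribution to $T_i(c,v)$ into (a) a neighbor becoming permanently colored and (b) a neighbor losing $c$ from its list, each contributing probability $\Theta(p)$, gives $\Exp[T_{i+1}(v) \mid v \text{ uncolored}] \leq (1 - 1/((1+3\delta/8)\ln g))T_i(v)$. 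In both cases the gap to the bad-event threshold is of order $|L_i(v)|/\ln^2 g = \Theta(g/\ln^2 g)$.

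Third, for the concentration step, each single random variable changes $|L_{i+1}(v)|$ (resp.\ $T_{i+1}(v)$) by $O(1)$, and the underlying sums have expectation $\Theta(g)$. Applying Talagrand's inequality for Lipschitz and certifiable functions to each sum yields a tail bound of roughly $\exp(-\Omega((g/\ln^2 g)^2 / g)) = \exp(-\Omega(g/\ln^4 g))$, which is at most $g^{-\ln g}$ provided $g$ exceeds an absolute constant; this is consistent with the setting where the algorithm is applied. The main obstacle is precisely this concentration step: the survival indicators of different colors in $L_i(v)$ (and more strongly the indicators involved in the color-degree event) are \emph{not} independent, so a direct Chernoff bound does not apply. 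Talagrand's inequality is the natural workaround but requires exhibiting a short certificate for any realized shortfall, namely the set of neighbor samples responsible for removing too many colors (resp.\ for failing to remove enough copies of a color). Verifying its size and the $O(1)$ Lipschitz coefficient is the most technical piece of the argument, but both are standard in the analysis of list-coloring nibble processes. Combining the three ingredients, the polynomial criterion $g^{-\ln g} \cdot (g^8)^{O(1)} < 1$ holds for large enough $g$, and the process is an LLL with the stated parameters.
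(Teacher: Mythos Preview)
Your proposal is correct and aligns with the paper's approach. The paper's own proof is much terser: it cites \cite[Lemma~3.2]{RS02} directly for the $g^{-\ln g}$ probability bound and only argues the dependency degree explicitly, via exactly the computation you give (the events are determined by variables within two hops, so dependencies reach to distance~$4$, and the graph has maximum degree $L\cdot T = O(g^2)$, yielding $d \le g^8$). Your sketch of the expectation-plus-Talagrand argument is essentially the content of the cited Reed--Sudakov lemma, so you are reproving what the paper simply imports; your identification of Talagrand (rather than Chernoff) as the right concentration tool, and of the certificate/Lipschitz verification as the one genuinely technical step, is accurate and consistent with how \cite{RS02} proceeds.
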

\begin{proof}
	In \cite[Lemma 3.2]{RS02}, Reed and Sudakov bound the probability of the bad events by $g^{-\ln g}$. 
	
	Both events are independent from all randomness that is further than $4$-hops apart and the degree of $G$ is upper bounded by $L\cdot T=O(g^2)$ (as we can discard any edges between nodes with disjoint lists). Hence, the dependency degree of the LLL is at most $g^8$.
\end{proof}

In order to obtain the constant from \Cref{thm:distributedListColoringSlow}, in the next lemma, we show that repeating the process $O(\log C\log g)$ times increases the list vs.\ color degree ratio by a factor $C$, while the minimum list size reduces by a constant factor only.

Let $L_i$ and $T_i$ be the globally smallest list size and largest color degree, respectively.
\begin{lemma}[$L_i/T_i$ improves, lists remain large]
	\label{lem:listColBounds}
	For any constant $C$, and after iterating the LLL process for $r=\lceil (5/\delta)\ln C \ln g \rceil$ iterations such that the respective bad events in \Cref{lem:listColLLL} are avoided after each iteration,  we have $L_r/T_r\geq C$ and we also have the following lower bound on the list size of any uncolored node $L_r\geq\left(1-\frac{1}{(1+3\delta/4)\cdot \ln g}\right)^rL\geq  C^{-\Theta(\delta^{-1})}L$. 
\end{lemma}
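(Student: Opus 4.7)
The plan is to exploit the fact that avoiding all bad events of Lemma~\ref{lem:listColLLL} gives a deterministic per-node recurrence at every iteration. Writing $\alpha := 1 - \frac{1}{(1+3\delta/4)\ln g}$ and $\beta := 1 - \frac{1}{(1+\delta/4)\ln g}$, for any node $v$ that is still uncolored after iteration $i+1$ we have $|L_{i+1}(v)| \geq \alpha\,|L_i(v)|$ and $T_{i+1}(v) \leq \beta\,T_i(v)$. Iterating these from $0$ to $r$ yields $L_r \geq \alpha^r L$ and $T_r \leq \beta^r T$, which feed both parts of the statement.

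For the list-size bound, $L_r \geq \alpha^r L$ is precisely the first stated inequality. To rewrite it in the form $C^{-\Theta(\delta^{-1})}L$, I use the standard estimate $(1-x)^r \geq \exp(-rx/(1-x))$ with $x = 1/((1+3\delta/4)\ln g)$: the exponent becomes $-r/((1+3\delta/4)\ln g - 1) = -\Theta(\ln C/\delta)$ once $r = \lceil (5/\delta)\ln C \ln g\rceil$ is substituted, giving $L_r \geq C^{-\Theta(\delta^{-1})} L$.

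For the ratio improvement I combine the two iterated bounds: $L_r/T_r \geq (L/T)(\alpha/\beta)^r \geq (1+\delta)(\alpha/\beta)^r$ using the hypothesis $L \geq (1+\delta)T$. The crux is to show $(\alpha/\beta)^r \geq C$. Setting $u := (1+3\delta/4)\ln g$ and $v := (1+\delta/4)\ln g$, I would rewrite
\[
\frac{\alpha}{\beta} \;=\; \frac{(u-1)/u}{(v-1)/v} \;=\; 1 + \frac{u-v}{u(v-1)},
\]
use $u-v = (\delta/2)\ln g$ and $u(v-1) \leq uv = (1+3\delta/4)(1+\delta/4)\ln^2 g$ to conclude
\[
\frac{\alpha}{\beta} \;\geq\; 1 + \frac{\delta/2}{(1+3\delta/4)(1+\delta/4)\,\ln g},
\]
and then apply $\ln(1+x) \geq x(1-x)$, with the $O(1/\ln g)$ correction negligible under the implicit assumption that $\ln g$ is large (already needed for the LLL criterion of Lemma~\ref{lem:listColLLL}). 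Multiplying by $r \geq (5/\delta)\ln C\,\ln g$ gives
\[
r\,\ln(\alpha/\beta) \;\geq\; \frac{5 \ln C}{2(1+3\delta/4)(1+\delta/4)}\,(1-o(1)),
\]
which exceeds $\ln C$ since $(1+3\delta/4)(1+\delta/4) \leq 35/16 < 5/2$ for the parameter range of interest. Hence $(\alpha/\beta)^r \geq C$, and therefore $L_r/T_r \geq (1+\delta)C \geq C$.

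The main obstacle I anticipate is the constant bookkeeping in the last paragraph: one must verify that the prefactor $5/\delta$ in the choice of $r$ is large enough to dominate the $(1+3\delta/4)(1+\delta/4)$ term that appears in the denominator after taking logarithms. Everything else is the essentially mechanical unrolling of the deterministic recurrences provided by Lemma~\ref{lem:listColLLL}; in particular, one should observe that for $\delta$ larger than a small constant the starting ratio $1+\delta$ is already comparable to $C$, so the interesting regime is indeed $\delta$ small, in which case $(1+3\delta/4)(1+\delta/4) \to 1$ and the bound becomes comfortable.
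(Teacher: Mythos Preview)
Your proposal is correct and follows essentially the same approach as the paper: both iterate the per-step recurrences $L_{i+1}\ge\alpha L_i$ and $T_{i+1}\le\beta T_i$ from Lemma~\ref{lem:listColLLL}, then bound $\alpha^r$ and the ratio $(\alpha/\beta)^r$. The paper is terser on the ratio step---it directly asserts $\beta/\alpha \le 1-\delta/(5\ln g)$ (following \cite{RS02}) and then uses $(1-\delta/(5\ln g))^r \le 1/C$---while you expand $\alpha/\beta$ explicitly and push the constants through; the two computations are equivalent.
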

\begin{proof}
	Recall that $1-x \leq e^{-x} \leq 1-x/2$ for all $x\in [0,1]$. Similar, to \cite[Proof of Theorem 1.1]{RS02} we obtain $L_r/T_r\geq C$ because
	\begin{align*}
		\frac{T_r}{L_r}
		\leq \frac{\left(1-\frac{1}{(1+\delta/4)\cdot \ln g}\right)^r|T|}{\left(1-\frac{1}{(1+3\delta/4)\cdot \ln g}\right)^r|L|}
		\leq \frac{\left(1-\frac{1}{(1+\delta/4)\cdot \ln g}\right)^r}{\left(1-\frac{1}{(1+3\delta/4)\cdot \ln g}\right)^r}\leq \left(1-\frac{\delta}{5\ln g}\right)^r\leq 1/C~.
	\end{align*}
	
	The second part of the claim follows due to 
	\begin{align*}
		L_r\geq\left(1-\frac{1}{(1+3\delta/4)\cdot \ln g}\right)^r L\geq e^{-1-(10/\delta) \ln C /(1+3\delta/4)}L\geq
		(1/e)C^{-10/\delta} L~. & \qedhere
	\end{align*}
\end{proof}

Next, we show that we can also efficiently increase the ratio in the \LOCAL model.
\begin{lemma}[Increasing $L/T$]
	\label{lem:listColIncreaseRatio}
	For any constants $q, \delta$, and $d=\omega(1)$, there is a $\poly(\Delta) + O(\log \Delta)\cdot \poly(\log\log n)$-round \LOCAL algorithm that reduces a list coloring problem with parameters $L=(1+\delta)g$ and $T=g$ to one with $L' = g'$ and $T'=g'/q$, where $g' = \Theta(g)$.
\end{lemma}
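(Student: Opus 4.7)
The plan is to iteratively apply the Reed--Sudakov LLL process from \Cref{lem:listColLLL} for a controlled number of iterations, solving each iteration's LLL instance with a fast distributed algorithm.

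First, I would formalize each iteration as follows. Starting from a $(L_i, T_i)$-list-coloring instance, each uncolored vertex activates independently with probability $1/\ln g$ and, if activated, picks a color uniformly at random from its current list $L_i(v)$. If no neighbor selected the same color, the color is committed permanently; otherwise, the color is removed from the lists of the offending neighbors. The two bad events per vertex given in \Cref{lem:listColLLL} occur with probability at most $g^{-\ln g}$, while the dependency degree is bounded by $g^8 \leq \poly(\Delta)$. This gives a polynomial LLL criterion $p \cdot d^{O(1)} \leq g^{O(1)-\ln g} < 1$ for sufficiently large $g$, which is guaranteed since $d=\omega(1)$. Moreover, since the bad events depend on random choices within a constant-radius neighborhood, the dependency graph can be simulated with $O(1)$ overhead on $G$.

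Second, I would invoke the distributed LLL algorithm of Fischer--Ghaffari~\cite{FG17} to satisfy all bad events of each iteration simultaneously. On instances of polynomial criterion and dependency degree $d \leq \poly(\Delta)$, this runs in $O(d^2) + \poly \log\log n = \poly(\Delta) + \poly\log\log n$ rounds. The number of iterations is $r = \lceil (5/\delta)\ln q \cdot \ln g \rceil = O(\log \Delta)$ (as $q,\delta$ are constants and $g \leq \Delta$). Summing over iterations and absorbing the $\log \Delta$ factor into $\poly(\Delta)$ yields the target runtime of $\poly(\Delta) + O(\log \Delta) \cdot \poly\log\log n$. A union bound over the $r$ iterations ensures that all succeed with high probability in $n$.

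Finally, applying \Cref{lem:listColBounds} after the $r$ iterations in which all bad events have been avoided, we obtain $L_r/T_r \geq q$ and $L_r \geq (1/e)\, q^{-10/\delta}\, L = \Theta(g)$, since $q$ and $\delta$ are constants. Setting $g' := L_r$, we have a valid $(L',T')$-list-coloring instance with $L'=g'$ and $T' \leq g'/q$, and $g' = \Theta(g)$ as required. The main obstacle is to verify that the LLL criterion remains valid across all iterations, since the effective $g$ (the minimum list size) shrinks. But \Cref{lem:listColBounds} guarantees $L_i \geq C^{-\Theta(1/\delta)} L$ throughout, so $g$ stays $\omega(1)$ over the entire process and the polynomial criterion $g^{-\ln g}\cdot (g^8)^{O(1)} < 1$ continues to hold at every step.
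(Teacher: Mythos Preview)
Your proposal is correct and follows essentially the same approach as the paper: iterate the Reed--Sudakov LLL of \Cref{lem:listColLLL} for $O(\log g)=O(\log\Delta)$ rounds, solve each iteration with the Fischer--Ghaffari LLL algorithm in $\poly(\Delta)+\poly\log\log n$ rounds, and conclude via \Cref{lem:listColBounds}. Your write-up is in fact more detailed than the paper's, which gives only a two-sentence sketch; your explicit check that the polynomial criterion persists across iterations (because $L_i$ stays $\Theta(g)$) is a useful addition.
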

\begin{proof}
	The first claim follows from the LLL formulation of \Cref{lem:listColLLL}, the $\poly\Delta+\poly\log\log n$ LLL algorithm by Fischer and Ghaffari \cite{FG17,RG19} and the bounds on list size $L_i$ and the $L_i/T_i$ ratio from \Cref{lem:listColBounds}. The $O(\log \Delta)$ term in the runtime stems from the $O(\log g)=O(\log \Delta)$ iterations of amplifying the constant. Due to $g=\omega(1)$ none of the lists becomes empty unless a node is colored in the process. 
\end{proof}

\begin{proof}[Proof of \cref{thm:ListColoring}] For some constant\footnote{With regard to $\delta$, the bottleneck of the procedure is the decrease in the list size captures in \Cref{lem:listColBounds}. 
		When invoking the lemma lists are of size $\poly\log\log n$ and the resulting lists are by a $2^{\Theta(1/\delta)}$ factor smaller, but are not allowed to vanish. This is the case if $\delta=\omega(1/\poly\log\log\log n)$.} $\delta>0$, we begin with a list coloring instance with $L/T>(1+\delta)$. 
	We first run two recursive iterations of \Cref{lem:listColorReduction} to compute sublists $L''(v)\subseteq L'(v)\subset L(v)$, such that $|L''(v)|\geq L''=\poly\log \log n$ and the resulting list-coloring instance has the color degree upper bounded by $T''=\poly\log\log n$, while at the same time $L''/T''>(1+\delta/10)$.  Then, let $C$ be the constant of \Cref{thm:distributedListColoringSlow} and apply \Cref{lem:listColIncreaseRatio} on the $(L'',T'')$-list coloring instance until we have achieved list size vs. color degree ratio of $>C$ (the degree of the resulting instance is $\Delta'=C^{-\Theta(1/\delta)}\cdot \poly\log\log n$), and finally use the algorithm of \Cref{thm:distributedListColoringSlow} to solve the remaining instance in $\poly\Delta'+ \poly\log\log n=\poly\log\log n$ rounds. 
\end{proof}

\section{Bipartite Vertex Splitting and Beyond}
\label{sec:bipartiteSplitting}
Another classic version is to split only one side of a bipartite graph.
Given a bipartite graph $(\VL\cup \VR, E)$ and an parameter $k$ the objective is to split the \emph{variable vertices} $\VR$ into $k$ parts $\VR_1,\ldots, \VR_k$ such that the degree of every \emph{event vertex} $u\in \VL$ into each part $\VR_i$ does not deviate  from $d(u)/k$ by too much. More formally, each event node $u\in \VL$ comes with a parameter $z(u)$ that bounds the deviation. Let $\DeltaL$ and $\DeltaR$ be the maximum degree of nodes in $\VL$ and $\VR$ respectively. With the same analysis  as for \Cref{thm:vertexSplitting} (reasons below) we obtain the following theorem for bipartite vertex splitting.

\thmbipartiteVertexSplitting*

The simpler $q$-divide problem also naturally extends to this more general setup. The objective of a bipartite $q$-divide is to partition the variable vertices into $q$ parts such that each event node has at most $8\DeltaL/q$ neighbors in each part.

\begin{theorem}
	\label{thm:bipartiteDivide}
	For any $q \in [1,(1/6)\DeltaL/\ln \Delta]$, there is a \LOCAL algorithm to compute a bipartite $q$-divide in $\poly \log \log n$-rounds.
\end{theorem}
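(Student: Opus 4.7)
The plan is to mirror the proof of Theorem~\ref{thm:qdivide} with bipartite adaptations throughout. In the pre-shattering phase, each variable vertex $v\in\VR$ picks one of the first $q/2$ buckets uniformly at random; whenever an event vertex $u\in\VL$ finds more than $z(u)=8\DeltaL/q$ of its neighbors assigned to a common bucket, it retracts all of those assignments, marking those variables unassigned. In the post-shattering phase, the unassigned variables are placed into the last $q/2$ buckets by solving the induced LLL with Lemma~\ref{lem:LOCALpostshattering}.

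The analytic checks are the bipartite restatements of Lemmas~\ref{lem:qdivideShattering} and~\ref{lem:qdividePostShatteringLLL}. For the bad-event probability, each $u\in\VL$ has at most $\DeltaL$ neighbors in $\VR$, each landing in a given bucket with probability $2/q$, so a Chernoff bound identical to that in Lemma~\ref{lem:qdividePostShatteringLLL} gives $\Pr[u \text{ triggers}]\le (q/2)\exp(-2\DeltaL/q)\le \Delta^{-11}$ under the hypothesis $q\le (1/6)\DeltaL/\ln\Delta$. For the post-shattering LLL, two events share a variable iff their $\VL$-hosts have a common neighbor in $\VR$, yielding dependency degree at most $\DeltaL\cdot\DeltaR\le \Delta^{2}$ and the polynomial criterion $p\cdot d^{5}\le 1$ exactly as in Lemma~\ref{lem:qdividePostShatteringLLL}.

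For shattering, a variable $v\in\VR$ lands in $\Bad$ only if some event in its $\VL$-neighborhood triggers; a union bound over at most $\DeltaR$ such events gives $\Pr[v\in\Bad]\le \DeltaR\cdot\Delta^{-11}\le \Delta^{-9}$, and the triggering of any such event is determined by the random choices within distance three of $v$ in the bipartite graph. Applying Lemma~\ref{lem:shattering} to the variable-side dependency graph (whose maximum degree is at most $\DeltaL\cdot\DeltaR\le \Delta^{2}$) then yields components of size $\poly(\Delta)\cdot\log n=\poly\log n$ w.h.p., which Lemma~\ref{lem:LOCALpostshattering} solves in parallel in $\poly\log\log n$ rounds.

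I do not anticipate a conceptual obstacle: the argument is structurally identical to the non-bipartite case, and the only bookkeeping point is distinguishing which side of the bipartition controls the Chernoff tail ($\DeltaL$, since bad events live in $\VL$) from the side controlling the fan-out in the $\Bad$ union bound and the dependency degree ($\DeltaR$). Since the hypothesis on $q$ is phrased in terms of $\DeltaL$ while $\Delta=\max(\DeltaL,\DeltaR)$ carries the other factors, all polynomial-criterion and shattering inequalities are satisfied with the same polynomial margin in $\Delta$ as in the original $q$-divide proof.
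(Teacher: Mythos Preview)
Your proposal is correct and follows essentially the same approach as the paper: both argue that the proof of Theorem~\ref{thm:qdivide} carries over verbatim to the bipartite setting once one tracks which of $\DeltaL$ and $\DeltaR$ governs each inequality. The paper's own proof is in fact even terser than yours---it treats Theorems~\ref{thm:bipartiteSplitting} and~\ref{thm:bipartiteDivide} together and simply asserts that the non-bipartite analysis extends, noting only that the Chernoff tails are controlled by $\DeltaL$ while the union-bound and dependency-degree factors are polynomial in $\Delta=\max\{\DeltaL,\DeltaR\}$; your write-up makes these bookkeeping points explicit.
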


\begin{figure}[ht]
	\centering
	\hspace{0.05\textwidth}
	\begin{subfigure}{0.2\textwidth}
		\includegraphics[page=1,width=\textwidth]{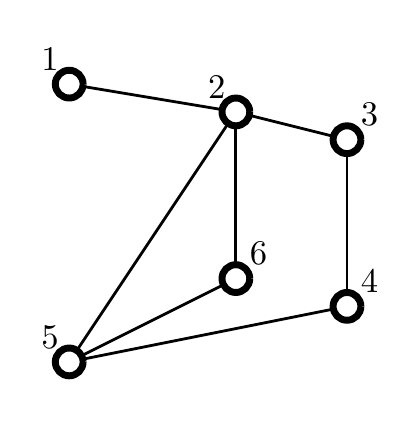}
	\end{subfigure}
	\hfill
	\begin{subfigure}{0.2\textwidth}
		\includegraphics[page=2,width=\textwidth]{splitting}
	\end{subfigure}
	\hfill
	\begin{subfigure}{0.2\textwidth}
		\includegraphics[page=3,width=\textwidth]{splitting}
	\end{subfigure}
	\hspace{0.05\textwidth}
	\caption{A graph and the bipartite splitting instance obtained from the vertex-splitting problem on it by turning each node into a variable node (circles) and an event node (squares).}
	\label{fig:bipartiteTranslate}
\end{figure}

To better understand this more general setting and how our results for $q$-divide and $k$-split extend to it, let us translate those problems into their bipartite versions. For a graph $G=(V,E)$, we construct a bipartite graph $G'=(\VL \cup \VR,E')$ such that a bipartite $k$-split (bipartite $q$-divide) on $G'$ maps to a $k$-split ($q$-divide) on $G$. Let $n=\card{V}$ be the number of nodes of $G$ and $\Delta$ its maximum degree. In bipartite terminology, when computing a $k$-split on $G$ each node of $G$ is acting both as an event node and a variable node, as we must ensure the proper splitting of its neighborhood as well as assigning it. The bipartite graph corresponding to this problem is the graph $G'=(\VL \cup \VR,E')$ where $\card{\VL} = \card{\VR} = n$, $\forall i,j \in \card{V}^2, v_i v_j \in E \Leftrightarrow \parens*{v^\Lmark_i v^\Rmark_j \in E' \wedge v^\Rmark_i v^\Lmark_j \in E'}$. $G'$ has $2n$ nodes and maximum left and right degree $\DeltaL = \DeltaR = \Delta$. See \cref{fig:bipartiteTranslate} for an illustration of the translation process.

\begin{proof}[Proof of \Cref{thm:bipartiteSplitting,thm:bipartiteDivide}]
	Our proofs of \cref{thm:vertexSplitting,thm:qdivide} naturally extend to the bipartite setting. Intuitively, the algorithms follow the same pattern. We have a simple random assignment procedure that we show properly partitions the neighborhood of a node w.p.\ $1-\poly(\Delta)$. In addition, there is a way of running this procedure, retracting some assignments and avoiding to assign some nodes that ensures that only small patches of the graph remain unassigned and the partial assignment that is obtained can be completed to a full assignment. All that we need to show is that our setting of parameters in the bipartite setting are correct, i.e., control the amount of discrepancy as in the previous setting.
	
	Let $\DeltaL$ be the degree of the left hand side vertices in $V^{\mathcal{E}}$ and let $\DeltaR$ be the degree of the right hand side vertices in $V$. Let $\Delta=\max\{\DeltaL,\DeltaR\}$. 
	
	All probabilities are exponential in $-\Theta(z)$, or in $\Theta(-z \cdot (z/(d(u)/k)))$ if the degree of a node $u$ is larger than $k\cdot z$. 
	
	The union bound in the shattering over distance $5$-neighborhoods introduces a multiplicative $\Delta^5$ term. Hence, we require that $e^{\Theta(z)}$ and $e^{\Theta(z)\cdot(z/(d(u)/k))}$ dominate the $\Delta^5$ term. This, clearly holds if $z=\eps^2\Delta/(72k)$ as before, given the assumed upper bound on $k$. 
	
	The proof of the discrepancy (in \cref{s:discAnalysis}) remains exactly the same; just note that in the bipartite vertex splitting the discrepancy values ($z(v)$s) depend on $\DeltaL$ instead of $\Delta$, and hence we obtain a deviation from $d(u)/k$ that is upper bounded by $\eps\DeltaL/k$. 
\end{proof}

\begin{remark}
	In general, it is not possible to recursively use vertex-splitting to split into smaller and smaller parts. Special properties of an instance (as we have with edge-splitting and when solving list-coloring here) sometime still make it possible.
\end{remark}
Intuitively, one would like to compute a vertex-splitting recursively. For a large $\Delta\gg \poly\log n$, one would first like to set $k=O(\Delta/\poly \log n)$ in order to split into subgraphs of maximum degree $\Delta'=O(\Delta/k)=\poly\log n$, only to then apply \Cref{thm:vertexSplitting} for the case of polylogarithmic degrees to split further into graphs of $\poly\log\log n$ degree. However, this approach does not work, which can best be illustrated in the (equivalent) bipartite formulation of the problem. Initially, we have $\DeltaL=\DeltaR\gg\poly\log n$. The first iteration of the splitting now reduces only the degree of the event nodes on the left hand side, that is, we reduce $\DeltaL$ to some $\DeltaL'=\poly\log n$ while the degree of the variable nodes remains unchanged. Thus, the new instance still has the original maximum degree of $\Delta$ and we did not gain any degree reduction in order to apply \Cref{thm:vertexSplitting} on a smaller degree graph.

It is however possible to recurse if $\DeltaR$ is small from the beginning or if we do not just split and can argue that it goes down. This is the case in our applications. $\DeltaR=2$ in the case of edge-splitting. In list coloring, the argument is finer. We only recurse on $\VR_1$ and argue that some event nodes are directly associated to a variable node and can be ignored if this variable node is not in the recursion. This is in spirit similar to what happens in defective coloring, where each node $v$ of the original graph acts both as an event node $v^{\mathcal{E}}$ and as a variable node $v^{\mathcal{V}}$, and the event node $v^{\mathcal{E}}$ becomes irrelevant to variable nodes that chose a color distinct from that chosen by $v^{\mathcal{V}}$.

\section{Local \texorpdfstring{$q$}{q}-divide}
\label{ssec:qdivideLocal}
In \Cref{thm:qdivide}, the number of nodes in a bucket is upper bounded by $8\Delta/q$, regardless of the degree of the node. In the next theorem we extend the result to a more fine-grained bound that depends on a node's degree. To this end let a \emph{local $q$-divide} be a partition of the vertices into $q$ parts such that each node has at most $\max\{8d(v)/q, 48\ln \Delta\}$ neighbors in each of the $q$ parts.
\begin{restatable}{theorem}{thmqdividelocalGuarantees}
	\label{thm:qdivideLocalGuarantees}
	For any $q\in [1,(1/6)\Delta/\ln\Delta]$ there is a $\poly\log\log n$-round \LOCAL algorithms to compute a local $q$-divide.
\end{restatable}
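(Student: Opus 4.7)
The plan is to reuse the exact same two-phase shattering algorithm from Section~\ref{sec:qdivide}, instantiated with the degree-dependent threshold
\[
z(v) \;=\; \max\!\braces*{\,8\,d(v)/q,\; 48\ln\Delta\,}\,.
\]
Recall that the algorithm is already parameterized by $z(v)$: in the pre-shattering phase, node $v$ retracts the assignments of its neighbors in any bucket in which it sees more than $z(v)$ of them, while the post-shattering phase is posed as an LLL whose bad events say that $v$ ends up with more than $z(v)$ neighbors in some bucket. Thus, correctness of the output (each node has at most $\max\{8d(v)/q,48\ln\Delta\}$ neighbors per bucket) is immediate from the algorithm's guarantees, and the only thing to verify is that the probabilistic analyses in Lemmas~\ref{lem:qdivideShattering} and~\ref{lem:qdividePostShatteringLLL} still go through with this degree-dependent $z(v)$.

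The main step is a case split on $d(v)$ when bounding the probability that $v$ witnesses too many neighbors in some bucket. Fix a bucket $j$ in the first half; the number $D_j(v)$ of neighbors of $v$ assigned there has $\Exp[D_j(v)] = 2d(v)/q$. In the \emph{large-degree case} $d(v)\geq 6q\ln\Delta$, we have $z(v)=8d(v)/q=4\Exp[D_j(v)]$, so by \Cref{lem:basicchernoff} with $\delta=3$ the failure probability is at most $\exp(-\Exp[D_j(v)])\leq\exp(-12\ln\Delta)=\Delta^{-12}$, exactly as in the global analysis. In the \emph{small-degree case} $d(v) < 6q\ln\Delta$, we have $\Exp[D_j(v)]<12\ln\Delta\leq z(v)/4$, and we use the multiplicative Chernoff bound in the regime $\delta\geq 3$ (equivalently, the ``Poisson tail'' form $\Pr[D_j(v)\geq k]\leq (e\Exp[D_j(v)]/k)^k$ with $k=48\ln\Delta$) to obtain a bound of at most $(e/4)^{48\ln\Delta}=\Delta^{-\Omega(1)}$, which is again at most $\Delta^{-12}$ with room to spare.

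Given this per-event bound, the rest of the argument of Section~\ref{sec:qdivide} goes through verbatim. In the shattering step (cf.\ \Cref{lem:qdivideShattering}), a union bound over the $q\leq\Delta$ buckets and the $\Delta^2$ nodes in the distance-$2$ neighborhood of $v$ still yields a per-node participation probability of at most $\Delta^{-9}$, whose dependence reaches at most distance $3$, so \Cref{lem:shattering} bounds the post-shattering components by $N=\poly(\Delta)\log n=\poly\log n$. For the post-shattering instance (cf.\ \Cref{lem:qdividePostShatteringLLL}), the same case split shows that each bad event occurs with probability at most $d^{-5}$ on a dependency graph of degree $d\leq\Delta^2$, giving a polynomial LLL criterion. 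Solving these instances in parallel via \Cref{lem:LOCALpostshattering} takes $\poly\log N=\poly\log\log n$ rounds, and the pre-shattering phase is $O(1)$ rounds, matching the claimed complexity.

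The only real obstacle is the small-degree case: with a uniform global threshold $8\Delta/q$ there is nothing to prove for low-degree nodes (they simply cannot have too many neighbors anywhere), whereas a purely multiplicative local threshold $8d(v)/q$ gives no useful concentration when $d(v)/q$ is smaller than $\Theta(\ln\Delta)$. Taking the maximum with $48\ln\Delta$ is precisely the additive slack needed so that the Poisson-form Chernoff bound still beats the $\Delta^{-\Omega(1)}$ target required by the shattering union bound and by the LLL criterion, and no other ingredient of the proof of \Cref{thm:qdivide} has to change.
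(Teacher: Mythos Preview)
Your proposal is correct and follows essentially the same approach as the paper: both instantiate the algorithm of Section~\ref{sec:qdivide} with the degree-dependent threshold $z(v)=\max\{8d(v)/q,48\ln\Delta\}$, do the same case split at $d(v)=6q\ln\Delta$, and re-verify Lemmas~\ref{lem:qdivideShattering} and~\ref{lem:qdividePostShatteringLLL}. The only cosmetic difference is that in the small-degree case the paper applies \Cref{lem:basicchernoff} directly with $\delta=(z(v)-\Exp[D_j])/\Exp[D_j]\geq 1$ to get $\exp(-(z(v)-\Exp[D_j])/3)\leq\exp(-12\ln\Delta)$, whereas you invoke the Poisson-tail form $(e\Exp[D_j]/k)^k$; both yield the required $\Delta^{-12}$ bound.
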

\begin{proof}
	We use the same algorithm as for \Cref{thm:qdivide}, but with a customized threshold for each node, i.e., we set  $z(v)=8d(v)/q$ for all nodes with $d(v)\geq 6 q\ln \Delta$ and $z(v)=48\ln \Delta$ otherwise. For this choice of $z(v)$ \Cref{lem:qdivideShattering} and \Cref{lem:qdividePostShatteringLLL} still hold, respectively: The only parts in the proofs that change are the probability bounds on the events that the number of neighbors $D_j$ in bucket $j$ exceeds $z(v)$ for a vertex $v$ with degree $d(v)$. Set  $\delta=(z(v)-\Exp[D_j])/\Exp(D_j)$ and observe that $\delta\geq 1$ in all cases. 
	We obtain with \Cref{lem:basicchernoff} (Chernoff bound)
	\begin{align*}
		\Pr(D_j\geq z(v)) 
		& \leq \Pr(|D_j-E[D_j]|\geq \delta \Exp[D_j]
		\\ & \leq \exp(-\delta\Exp[D_j]/3 )
		\leq \exp(-(z(v)-\Exp[D_j])/3)~. 
	\end{align*}
	If $z(v)=8d(v)/q$ (and consequently $d(v)\geq 6q \ln \Delta$ as well as $\Exp[D_j]\geq 12\ln \Delta$), we have $z(v)=4\Exp[D_j]$ and the bound simplifies to $\exp(-\Exp[D_j])\leq  \exp(-12\ln \Delta)\leq \Delta^{-12}$.
	For $z(v)=48\ln \Delta$ (and consequently $d(v)\leq 6q\ln \Delta$) we obtain $\Exp[D_j]\leq 12\ln \Delta$. Hence, the respective probability is upper bounded by $\exp(-(z(v)-\Exp[D_j])/3)\leq \exp(-12\ln \Delta)\leq \Delta^{-12}$.
\end{proof}

\newpage

\appendix

\section{Efficient Postshattering in \CONGEST for Small Degrees}
\label{app:postshattering}
The proof of \Cref{lem:CONGESTpostshattering} is similar to an \CONGEST LLL algorithm in \cite{MU21}  for instances of size $N=O(\log n)$ and the case of $d=O(1)$. In fact, following all dependencies on $d$ (and a slightly increased $N$) in the proof of \cite{MU21} yields an algorithm with runtime $\poly (d, \log\log n)$, which yields the desired runtime whenever $d=\poly\log \log n$. 
\begin{proof}[Proof sketch of \Cref{lem:CONGESTpostshattering}]
	We adapt the proof for a similar statement in \cite{MU21}. As the corresponding statement only considers the case of $\Delta=O(1)$, they hide all factors depending on $\Delta$ in the $O$-notation. But, in fact, the proof has a polynomial dependency on $\Delta$ in the runtime. Thus, its runtime remains $\poly\log \log n$, if  $\Delta=\poly\log\log n$.
	
	Next, we sketch their algorithm and highlight the dependency on $\Delta$ in the runtime: From a high level point of view they use he method of conditional expectation to derandomize the LLL algorithm of \cite{CPS14}. This algorithm runs in $T=O(\log N)=O(\log \log n)$ rounds and has error probability $1/N^2$. Thus, initially, if all random bits of all nodes were chosen u.a.r., in expectation one would have $\E[\sum_{u} F_u<1]$ where $F_u$ is the indicator variable that the algorithm fails to avoid bad event $u$. 
	
	The derandomization in \cite{MU21}: Compute a $\Omega(\log\log n)$-separated network decomposition with $Q=O(\log\log n)$ color classes via \Cref{thm:networkDecomp}. Then, iterate through the color classes and derandomize the \cite{CPS14} algorithm for solving the respective (polynomial) LLLs by finding good random bits for the nodes in each cluster. At the end, the \cite{CPS14} algorithm is executed with these random bits. The crucial invariant that is maintained is $\E[\sum_{u\in U} F_u<1 \mid X]$ where $X$ describes how already considered random bits are fixed.  At the end all random bits are set, and in expectation the (now deterministic) algorithm does not fail, that is, it computes a valid solution avoiding all bad events. The crucial ingredient for the runtime is a full information gathering when processing a cluster (clusters of the same color class of the network decomposition can be handled independently and in parallel due to their large enough distance $\Omega(\log\log n)$), that is, the whole topology of the cluster and its distance $(T+r)$-hop neighborhood in the small component, where $T(N)=O(\log \log n)$ is the runtime of \cite{CPS14} on the small component and $r=O(1)$ is the checking radius of the bad events. As the component size is upper bounded by $N$, and each node has maximum degree $\Delta$ and the \cite{CPS14} algorithm uses only $\poly\log N=\poly\log\log n$ random bits per node, this information gathering can be done by sending only $N\cdot \Delta$ ID pairs to encode the graph topology and $N\cdot \poly\log \log n$ values of already set random bits. After temporarily renaming all nodes in the cluster with IDs from range $[N]$ (a simple BFS is sufficient to do this), all this information can be learned in $O(D+\max\{1,N\poly\log \log n/\textsf{bandwidth})=\poly\log \log n$ rounds, using the pipelining result from \Cref{cor:treeAggregationBetter}. Here $D$ is the (weak) cluster diameter, and $\textsf{bandwidth}=\Omega(\log n)$ is the bandwidth of the \CONGEST model. 
\end{proof}

\section{Communication on Top of Weak Network Decompositions}
\label{app:networkDecompositionRouting}
\begin{lemma}[\cite{GGR20,MU21}]
	\label{cor:treeAggregationBetter}
	Let $G$ be a communication graph on $n$ vertices. Suppose that each vertex of $G$ is
	part of some cluster $\mathcal{C}$ such that each such cluster has a rooted Steiner tree $T_{\mathcal{C}}$ of diameter at most
	$\beta$ and each node of $G$ is contained in at most $\kappa$ such trees. Then, in $O(\max\{1,\kappa/b \}\cdot (\beta + \kappa))$ rounds of the
	\CONGEST model with $b$-bit messages, we can perform the following operations for all
	clusters in parallel on all clusters:
	\begin{compactenum}
		\item  Broadcast: The root of $T_{\mathcal{C}}$ sends a $b$-bit message to all nodes in $\mathcal{C}$;
		\item  Convergecast: We have $O(1)$ special nodes $u \in \mathcal{C}$, where each special node starts with a
		separate $b$-bit message. At the end, the root of $T_{\mathcal{C}}$ knows all messages;
		\item  Minimum: Each node $u \in \mathcal{C}$ starts with a non negative $b$-bit number $x_u$. At the end, the root
		of $T_{\mathcal{C}}$ knows the value of $\min_{u \in \mathcal{C}}x_u$;
		\item  Summation: Each node $u \in \mathcal{C}$ starts with a non negative $b$-bit number $x_u$. At the end, the root
		of $T_{\mathcal{C}}$ knows the value of $\big(\sum_{u\in \mathcal{C}} x_u\big) \mod 2 ^{O(b)}$.
	\end{compactenum}
\end{lemma}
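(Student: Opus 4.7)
The plan is to reduce each of the four primitives to a pipelined packet-routing problem on the collection of Steiner trees $\{T_{\mathcal{C}}\}$, and then bound the schedule length by the standard congestion-plus-dilation rule: the dilation is the maximum tree depth $\beta$, while the congestion on any edge of $G$ is at most $\kappa$ per primitive (since each edge sits in at most $\kappa$ trees and, as I argue below, each tree generates only $O(1)$ packets per edge).

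For the three aggregation primitives (convergecast, minimum, summation) I would process each $T_{\mathcal{C}}$ bottom-up. Each internal node waits to hear from all of its children in $T_{\mathcal{C}}$, \emph{locally combines} what it has received with its own input, and forwards a single $O(b)$-bit summary to its parent: the concatenation of the $O(1)$ marked messages for convergecast, the running minimum for minimum, and the partial sum modulo $2^{O(b)}$ for summation. Associativity (and, for summation modulo $2^{O(b)}$, commutativity) is what keeps the per-tree traffic on each edge down to one packet; without in-place aggregation the traffic on an edge near the root of a tree could blow up to $|\mathcal{C}|$ and ruin the bound. Broadcast is the mirror image, pushing the root's $b$-bit message top-down, with each internal node forwarding to its children once it has received it. In all four cases, each edge of $G$ has to transport at most $\kappa$ distinct $b$-bit packets over the entire execution, one per Steiner tree that uses that edge.

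What remains is to schedule these packets so that the four primitives jointly finish in $O(\max\{1,\kappa/b\}\cdot(\beta+\kappa))$ CONGEST rounds. The main obstacle is coordinating the schedules across the $\kappa$ trees sharing an edge without creating collisions at any bandwidth-$b$ link. I would appeal to the standard packet-routing lemma that says any routing instance with dilation $d$ and congestion $c$ on simple (here, tree-shaped) paths can be executed in $O(c+d)$ rounds, yielding $O(\beta+\kappa)$ when the per-round bandwidth $b$ suffices to carry one $b$-bit packet per edge. The factor $\max\{1,\kappa/b\}$ handles the regime where $\kappa$ exceeds the bandwidth: then each edge has to serialize its $\kappa$ queued $b$-bit packets in chunks of $b$, inflating the completion time by that factor. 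Concretely, one can realize the schedule deterministically by a level-synchronized pipeline in which, at each round, every edge forwards the next buffered packet from its round-robin queue over the (at most) $\kappa$ trees containing it, which is easy to implement locally given the tree assignments from the network decomposition.
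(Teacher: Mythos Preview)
The paper does not prove this lemma; it is stated in the appendix as a result imported from \cite{GGR20,MU21} with no accompanying argument, so there is no in-paper proof to compare against.

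Your reduction of the four primitives to a single $O(b)$-bit packet per tree-edge---bottom-up combining via the associativity of $\min$ and of addition modulo $2^{O(b)}$ (and the $O(1)$-message bound for convergecast), mirrored by top-down broadcast---is the standard argument and correctly yields per-edge congestion $\kappa$ and dilation $\beta$, hence an $O(\beta+\kappa)$ pipelined schedule.

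Your treatment of the $\max\{1,\kappa/b\}$ factor, however, does not hold up. You say an edge must ``serialize its $\kappa$ queued $b$-bit packets in chunks of $b$,'' but $\kappa$ packets of $b$ bits each over a $b$-bit-per-round link take $\kappa$ rounds, not $\kappa/b$---and that $\kappa$ is exactly the congestion term already absorbed into $O(\beta+\kappa)$, so it cannot reappear as a separate multiplicative factor. The sentence conflates ``$b$ bits of bandwidth'' with ``$b$ packets of bandwidth.'' Whatever the actual source of the extra factor in the cited works (typically it is the per-packet header needed to identify which of the up to $\kappa$ overlapping trees a message belongs to, which can cost up to $\Theta(\kappa)$ bits when no shared local indexing of the trees is pre-established), your explanation of it is incorrect as written, so the justification of the full stated bound has a gap even though the $O(\beta+\kappa)$ core is fine.
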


\bibliographystyle{plain}
\bibliography{refs}

\end{document}